\documentclass[letterpaper,12pt,reqno]{article}

\usepackage[letterpaper, left=2.5cm,right=2.5cm,top=2.5cm,bottom=2.6cm]{geometry}

\usepackage{setspace}

\usepackage[T1]{fontenc}
\usepackage{newtxtext}
\usepackage{microtype}

\usepackage[hyphens]{url}
\usepackage[hyphenbreaks]{breakurl}
\usepackage{hyperref}
\hypersetup{colorlinks=true, linkcolor=blue, urlcolor=blue, citecolor=black}

\usepackage{titlesec}
\usepackage[utf8]{inputenc}
\usepackage{graphicx}
\usepackage{amsmath}
\usepackage{MnSymbol}%
\usepackage{wasysym}%
\graphicspath{{Images/}}
\usepackage{amsfonts}
\usepackage{commath}
\usepackage{algpseudocode}
\usepackage{enumitem}
\setlist[enumerate]{label=$(\roman*)$,ref=(\roman*)}

\usepackage{xcolor}
\usepackage[english]{babel}
\usepackage{amsthm}
\usepackage{tocloft}
\usepackage{bbm}
\usepackage{mathtools}
\usepackage{tikz}
\usepackage{comment}
\usetikzlibrary{decorations.pathreplacing}
\usetikzlibrary{positioning,quotes,calc}
\usepackage{tikz-3dplot}

\usepackage{subcaption}
\usetikzlibrary{calc,intersections}

\definecolor{niceblue}{RGB}{0,104,178}
\definecolor{darkgreen}{RGB}{3,131,127}
\definecolor{orange}{RGB}{213,94,0}
\definecolor{darkorange}{RGB}{191,84,0}
\definecolor{paleniceblue}{RGB}{118,171,208}
\definecolor{camniceblue}{RGB}{145,182,177}
\definecolor{grey}{RGB}{60,62,61}
\definecolor{purple}{RGB}{153,51,255}
\definecolor{red}{RGB}{255,51,51}
\definecolor{darkdarkgreen}{RGB}{10,105,102}
\definecolor{darkred}{RGB}{148,47,47}

\definecolor{niceblue}{RGB}{0,104,178}
\definecolor{darkgrey}{RGB}{3,131,127}
\definecolor{orange}{RGB}{213,94,0}
\definecolor{darkorange}{RGB}{191,84,0}
\definecolor{paleniceblue}{RGB}{118,171,208}
\definecolor{camniceblue}{RGB}{145,182,177}
\definecolor{grey}{RGB}{60,62,61}
\definecolor{purple}{RGB}{153,51,255}
\definecolor{red}{RGB}{255,51,51}
\definecolor{darkdarkgrey}{RGB}{10,105,102}
\definecolor{darkred}{RGB}{148,47,47}

\definecolor{teal}{RGB}{0,150,136}        
\definecolor{peach}{RGB}{255,160,122}      
\definecolor{lavender}{RGB}{178,102,255}  
\definecolor{midnightniceblue}{RGB}{25,25,112} 
\definecolor{olive}{RGB}{128,128,0}       
\definecolor{mustard}{RGB}{204,153,0}      
\definecolor{rose}{RGB}{255,102,102}       
\definecolor{plum}{RGB}{102,0,102}         

\newtheorem*{theorem*}{Theorem}
\newtheorem*{corollary*}{Corollary}
\newtheorem*{proposition*}{Proposition}
\newtheorem*{lemma*}{Lemma}
\newtheorem*{fact*}{Fact}
\newtheorem*{definition*}{Definition}
\newtheorem*{conjecture*}{Conjecture}

\newtheorem{theorem}{Theorem}
\newtheorem{example}{Example}
\newtheorem{corollary}{Corollary}
\newtheorem{proposition}{Proposition}

\newtheorem{definition}{Definition}
\newtheorem{lemma}{Lemma}

\newtheorem{remark}{Remark}
\newtheorem{fact}{Fact}

\DeclarePairedDelimiterX{\inp}[2]{\langle}{\rangle}{#1, #2}

\DeclareMathOperator*{\argmax}{arg\,max}
\DeclareMathOperator*{\argmin}{arg\,min}

\newcommand{\indiffu}[1]{\mathrel{\underset{#1}{\Leftrightarrow}}}

\usepackage[font=small, width=0.8\textwidth]{caption}

\titleformat{\section}
        {\large\bfseries\center}
        {\thesection}
        {0.5em}
        {}
        []

\titleformat{\subsection}
        {\normalfont\bfseries}
        {\thesubsection}
        {0.5em}
        {}

\titleformat{\subsubsection}[runin]
        {\normalfont\bfseries}
        {\thesubsubsection}
        {0.5em}
        {}
        [.]

\AtBeginDocument{
  \setlength{\abovedisplayskip}{9pt}
  \setlength{\belowdisplayskip}{9pt}
  \setlength{\abovedisplayshortskip}{1pt}
  \setlength{\belowdisplayshortskip}{6pt}
  \setlength{\jot}{5pt}
}

\usepackage[authoryear]{natbib}
\usepackage{soul}
\title{Strategically Analogous Mechanisms\thanks{We thank Mohammad Akbarpour, Roberto Corrao, Laura Doval, Piotr Dworczak, Shengwu Li, Paul Milgrom, Lea Nagel, Michael Ostrovsky, Roberto Saitto, Ilya Segal, Andrzej Skrzypacz, and Takuo Sugaya for helpful comments.}}
\author{%
\begin{tabular}{cc}
Joseph Feffer & Filip Tokarski \\
Stanford GSB & Stanford GSB
\end{tabular}%
}
\begin{document}
\date{} 
\maketitle
\vspace*{-1cm} 

\begin{abstract}
This paper studies when strategic understanding acquired in one mechanism can be transferred to another. It introduces a framework in which agents have a limited understanding of how actions affect payoffs, and formalizes how they use \emph{analogies} to transfer this understanding between mechanisms. We ask when such reasoning allows agents to transfer their understanding of equilibrium between mechanisms. We first consider mechanisms that are \emph{strategically equivalent}---that is, share the same game form up to relabeling of actions---and show that agents' understanding of equilibrium transfers once they recognize how the actions in these mechanisms map to one another. We then define \emph{strategic analogy}, a weaker notion that allows actions and types to be remapped, and show that equilibrium understanding transfers once agents recognize how the actions and types in these mechanisms relate. Applications include single-item auctions, scoring auctions, and nonlinear pricing with capacity constraints.
\end{abstract}

\section{Introduction}

Strategizing in unfamiliar economic settings can be difficult even for sophisticated agents. However, when agents recognize that a setting is structurally related to one they already understand, they can overcome this difficulty by drawing on their existing strategic knowledge. In particular, agents facing a new mechanism may recognize that its strategic structure mirrors that of a known mechanism and use this correspondence to formulate a strategy in the new setting based on how they played in the old one. This possibility is most apparent when the two mechanisms are \emph{strategically equivalent}; that is, they share the same game form but present choices differently. A standard example is provided by the sealed-bid first-price auction and the Dutch auction, in which a price clock descends until a bidder stops it and wins the object at the current price \citep{vickrey1961counterspeculation,milgrom1982theory}. Despite their different presentations, both formats effectively ask the bidder to choose the price at which she is willing to buy. In the first-price auction, she states this price directly as her bid; in the Dutch auction, she waits for the clock to reach this price and then stops it. Recognizing this correspondence allows a bidder who knows how to optimally shade her bid in one format to apply that knowledge to the other.

The transfer of strategic reasoning, however, need not be limited to cases in which two mechanisms are just alternative representations of the same game. Indeed, some genuinely distinct mechanisms still require agents to reason through analogous tradeoffs and apply the same strategic principles. Consider, for example, an agent who understands how to play in a first-price auction with a specific reserve price. To bid optimally, she must be able to weigh the probability of winning against her conditional expected surplus, translate her beliefs about others' values into beliefs about the bids she will face, and understand how the reserve affects participation. Now, suppose that same agent were asked to play in a first-price auction with a different reserve. Changing the reserve affects her optimal strategy but not the underlying strategic considerations; an agent who recognizes this shared structure may therefore realize that she can choose how much to shade her bid by reasoning analogously to how she did in the original auction. In this paper, we develop a notion of \emph{strategic analogy} that generalizes this idea.

Understanding which mechanisms are strategically analogous is also important in practice, especially in settings where the same agent interacts with many versions of the same underlying problem. An example of such a setting comes from the Chilean public procurement system for pharmaceuticals studied by \cite{ALLENDE2024103086}. There, hospitals purchase supplies by holding procurement auctions on a public platform: each auction specifies a scoring rule that aggregates price and other product attributes into a single score, which is then used to determine the winner. Crucially, hospitals can express their preferences over price, delivery terms, and other product characteristics by customizing the weights the scoring rule puts on them. This also means, however, that manufacturers and wholesalers may bid in hundreds of auctions with potentially different strategic properties. In order to mitigate this problem, the platform fixes a baseline format and restricts hospitals to fine-tuning its parameters. However, parameter changes that adjust a mechanism's rules in a simple way may nevertheless alter its strategic properties. To minimize the need for bidders to repeatedly re-learn how to play, it is therefore important to understand which templates and parameter choices preserve the underlying strategic logic, and which ones generate genuinely different strategic problems.

This paper makes two main contributions. First, it develops a theory of reasoning transfer across strategic settings. To formalize this phenomenon, we model an agent's understanding of mechanisms through a set of payoff comparisons she can make, as well as her knowledge of which comparisons others are aware of. Our framework lets us define when a strategy profile is commonly known to be an equilibrium and captures how agents can transfer knowledge across mechanisms through analogies. We use the framework to study how similarities in mechanisms' rules help agents identify equilibria: Theorem~\ref{th:strategic-equivalence-transfer} shows that equilibrium understanding transfers across strategically equivalent mechanisms once agents recognize how actions correspond between them.

Our second contribution is the notion of \emph{strategic analogy}, which generalizes the idea of strategic equivalence. Intuitively, two mechanisms are strategically analogous if, after appropriately relabeling actions and remapping types, the payoffs from any action profile in one mechanism correspond with the payoffs from the analogous profile in the other. Allowing types to be remapped alongside actions provides a sense in which two mechanisms may share the same strategic structure even when they do not share the same game form. Theorem~\ref{th:strategic-analogy-transfer} then shows that, once the relevant correspondences between types and actions are explained, agents can transfer their understanding of equilibria across strategically analogous mechanisms.

We apply the framework in three settings. For single-item auctions, we show that \(k\)th-price auctions with reserves form a maximal class of strategically analogous mechanisms within a natural set of auctions. This aligns with common practice in auction houses and ad exchanges, which often fix the format and use the reserve price as the main tuning parameter. We also show that one strategically analogous class of symmetric mechanisms can implement the revenue-optimal auction for every symmetric regular prior, but not for every symmetric prior. Indeed, symmetric revenue-optimal auctions for irregular priors sometimes require choosing the winner at random.  The fact that bidders may be familiar with standard auctions but not with these randomized formats therefore provides a reason why auctioneers do not use them in practice, despite their theoretically higher expected revenue. The next application studies scoring auctions where a buyer awards a contract based on a score that aggregates price and quality. We show that varying the weight on price preserves strategic analogy under a linear scoring rule, but not under a rule in which the price score depends on how one's price compares to the lowest competing price. Finally, we study nonlinear pricing of services produced using scarce capacity. We show that charging for the underlying input can preserve strategic analogy across changes in productivity in cases where charging for realized output does not.

The rest of the paper is organized as follows. Section~\ref{sec:related-literature} discusses related literature. Section~\ref{sec:model} presents the model and introduces the concepts of payoff comparisons, knowledge, and knowledge transfer. Section~\ref{sec:strategic-equivalence} studies strategic equivalence and shows how explaining action correspondences lets agents transfer knowledge of equilibria across equivalent mechanisms. Section~\ref{sec:strategic-analogy} defines strategic analogy, establishes a knowledge transfer result, and applies the concept to single-item auctions, scoring auctions, and input- versus output-based pricing. Section~\ref{sec:discussion} then discusses potential extensions of our definitions. Finally, in Appendix~\ref{app:rules_based}, we discuss how strategic analogy can be communicated to agents through explaining similarities in mechanisms' rules.

\section{Related literature}\label{sec:related-literature}

Our paper builds on experimental work on reasoning transfer in strategic environments. \citet{RICK2010716} show that when subjects learn a general solution method such as iterated dominance, they are often able to apply it in new but structurally related environments. Similarly, \citet{cooper2008learning} find that experience in one signaling game improves equilibrium-consistent play in related games. In auction settings, \citet{harstad2000dominant} documents partial transfer of experience across related auction formats. \citet{breitmoser2022obviousness} find that performance in a sealed-bid second-price auction improves substantially when subjects are told that their bids determine dropout prices in a subsequent simulated clock auction. According to the authors, subjects find the dynamic setting easier to navigate, and showing them the clock implementation makes the analogy between the sealed-bid auction and the dynamic format more salient. This, in turn, helps participants transfer their strategic understanding to the sealed-bid format.

The effort to model agents' reasoning about mechanisms also connects to the epistemic game theory literature, which formalizes agents'  knowledge and uses it to justify solution concepts. The rationalizability literature characterizes
strategies consistent with common knowledge of rationality
\citep{bernheim1984rationalizable, pearce1984rationalizable,battigalli2003rationalizable}, while
\citet{aumann1995epistemic} identify epistemic conditions under which conjectures constitute a Nash equilibrium. We use similar tools to ask a distinct question: we assume agents can identify an equilibrium in one mechanism and ask under what conditions this knowledge lets them identify a corresponding equilibrium in another. Our model of agents' limited understanding of payoff comparisons also connects to work on imperfect knowledge of game structure: \citet{feinberg2021games} models unawareness through a collection of games representing players' perceptions of the strategic situation; \citet{copic2006awareness} define awareness equilibria for normal-form games in which players may be unaware of some available actions or possible types.

Strategic analogy also connects to the study of game isomorphisms and payoff transformations. \citet{mckinsey2016isomorphism} asks when games share the same strategic structure up to relabeling;  \citet{elmes1994strategic} extend related ideas to extensive-form games. \citet{moulin1978strategically} define strategic equivalence by preservation of preference rankings over mixed strategies, while \citet{morris2005generalized} and \citet{tewolde2021game} study payoff transformations that preserve best responses or Nash equilibria. We extend these ideas to incomplete-information settings, which lets us apply them beyond matrix games to mechanism design environments.

Another related strand of work models reasoning by analogy in decision theory and games. \citet{gilboa1995case} propose a model of choice under uncertainty in which agents look to similar past cases when choosing among actions. \citet{samuelson2001analogies} similarly treats analogies as a way of economizing on scarce reasoning resources: agents maintain a costly stock of models and apply the one that appears most suitable to the strategic interaction at hand. \citet{JEHIEL200581} develops the concept of analogy-based expectation equilibrium, in which agents bundle contingencies into analogy classes when forming expectations and best respond to those simplified representations. While these papers treat analogical reasoning as a form of bounded rationality, we instead study agents who respond optimally but can more easily identify the right strategy in mechanisms that are strategically close to ones they already understand.

Lastly, this paper connects to a literature studying mechanisms that are simple for participants to play. The practical importance of simplicity is emphasized by \citet{Shengwu_JEP}, who highlights that desirable theoretical properties of mechanisms can break down when agents fail to recognize incentives and respond to them correctly. Various approaches have been proposed to conceptualize such cognitive limitations. \citet{li2017obviously} asks when equilibrium play is transparent to agents with limited contingent reasoning. \citet{pycia2023theory} develop a theory of simplicity in extensive-form environments with limited planning horizons. \citet{borgers2019strategically} focus on belief sophistication, defining strategically simple mechanisms as those in which an agent can identify an optimal action using only first-order beliefs about others' preferences. While this literature studies how difficult it is to play a particular mechanism, we study how difficult it is to play within a \emph{class} of mechanisms, given that the agent already understands one of them. The perspective is complementary: in many applications, mechanisms should be simple not only in isolation, but also across related instances, so agents need not re-learn how to play when the environment changes. In this sense, we are aligned with \citet{brooks2025simplicity}, who argue that practically useful mechanisms should be portable across settings.

\section{Model}\label{sec:model}

\subsection{Environments, mechanisms, and equilibria}

We focus on the analogy of mechanisms designed for the same kind of economic problem. To make this precise, we formally define an \emph{environment} as a tuple $(\mathcal I,\mathcal Y,\mathcal T,\mathcal U),$ where \(\mathcal I\) is a finite nonempty set of agents, \(\mathcal Y\) is an outcome space, $\mathcal T=\prod_{i\in\mathcal I}\mathcal T_i$ is the type space, and $\mathcal U=(u_i)_{i\in\mathcal I}$ is the profile of Bernoulli utility functions giving the payoff each type of every agent gets from every outcome: $u_i:\mathcal T_i\times\mathcal Y\to\mathbb R.$\footnote{Note that types \(t_i\in\mathcal T_i\) are \emph{payoff types}: they specify preferences over outcomes, rather than information types or hierarchies of beliefs in the sense of \citet{harsanyi1967games}
and \citet{mertens1985formulation}. Thus, the assumption that an agent's utility depends only on her own type is substantive.} Agents maximize expected utility.

In each such environment, many mechanisms may be used to select an outcome based on agents' actions. We define a \emph{mechanism} for an environment $(\mathcal I,\mathcal Y,\mathcal T,\mathcal U)$ as a pair \(X=(\mathcal A,\Phi)\), where $\mathcal A=\prod_{i\in\mathcal I}\mathcal A_i$ specifies agents' action spaces and $\Phi:\mathcal A\to\Delta(\mathcal Y)$ is the outcome rule, mapping each action profile to a lottery over outcomes. When \(\Phi(a)\) is deterministic at some outcome \(y\in\mathcal Y\), we abuse notation and write $\Phi(a)=y.$ For \(i\in\mathcal I\), we also write
\[
\mathcal A_{-i}:=\prod_{j\in \mathcal I\setminus\{i\}}\mathcal A_j,
\qquad
\mathcal T_{-i}:=
\prod_{j\in \mathcal I\setminus\{i\}}\mathcal T_j.
\]
We illustrate the distinction between environments and mechanisms with an example.

\begin{example}\label{example:1}
A single good can be allocated to one of two agents, or remain unassigned. Each agent \(i\) has value \(t_i\in\mathbb R\) for the good. This environment can be written as $\left(\mathcal I,\mathcal Y,\mathbb R^{\mathcal I},(u_i)_{i\in\mathcal I}\right),$ where $\mathcal I=\{1,2\},$ $\mathcal Y:=\mathcal I\cup\{\varnothing\},$ and $u_i(t_i,y)= t_i\,\mathbf{1}\{y=i\}.$

Consider two mechanisms for this environment. In the first mechanism, each agent reports whether she wants the good. If exactly one agent says yes, she gets it; otherwise the good is discarded:
\[
X_1=\left(\mathcal A_1^1\times \mathcal A_2^1,\Phi^1\right),
\ \
\text{where}
\ \
\mathcal A_i^1=\{\text{yes},\text{no}\},
\ \ \
\Phi^1(a_1,a_2)=
\begin{cases}
1 & \text{if } (a_1,a_2)=(\text{yes},\text{no}),\\
2 & \text{if } (a_1,a_2)=(\text{no},\text{yes}),\\
\varnothing & \text{otherwise.}
\end{cases}
\]
In the second mechanism, agents have no choice and the good is assigned uniformly at random:
\[
X_2=
\left(\mathcal A_1^2\times \mathcal A_2^2,\Phi^2\right),
\ \ 
\text{where}
\ \ 
\mathcal A_i^2=\{\ast\},
\ \ \
\Phi^2(\ast,\ast)=\tfrac12\delta_1+\tfrac12\delta_2.
\]
\end{example}

We now define the \emph{Bayes--Nash equilibrium} of a mechanism. For any mechanism \(X=(\mathcal A,\Phi)\), let
\[
U_j^X[t_j,A]
:=
\mathbb E_{m\sim A}\left[
\int_{\mathcal Y} u_j(t_j,y)\,d\Phi(m)(y)
\right]
\]
denote the expected utility agent \(j\) of type \(t_j\) gets when the mixed action profile \(A\in\Delta(\mathcal A)\) is played.\footnote{Throughout, outcome, type, and action spaces are standard Borel spaces, utility functions are measurable, and outcome rules and strategies are measurable probability kernels. All action, type, and outcome bijections are measurable with measurable inverses. We restrict attention to payoff situations and mixtures for which the relevant expected utilities are finite.} For a pure action profile \(a\in \mathcal A\), we abuse notation and write
\[
U_j^X[t_j,a]
:=
U_j^X[t_j,\delta_a].
\]
Now, let a \emph{strategy profile} $\sigma=(\sigma_i)_{i\in\mathcal I}$ be a collection of maps \(\sigma_i:\mathcal T_i\to \Delta(\mathcal A_i)\) specifying a lottery over available actions for each type of agent $i$. Given a strategy profile \(\sigma\) and a prior \(F\in \Delta(\mathcal T)\) over all agents' types, let \(A_{-i}(\sigma,F\mid t_i)\in \Delta(\mathcal A_{-i})\) denote the induced distribution over other agents' actions, given \(\sigma\) and \(F\), conditional on agent \(i\)'s type being \(t_i\).

\begin{definition}
A strategy profile \(\sigma\) is a \textbf{Bayes--Nash equilibrium} of mechanism \(X= (\mathcal A ,\Phi)\) under prior \(F\) if for all \(i\in\mathcal I\) and \(t_i\in \mathcal T_i\),
\[
\sigma_i(t_i)\in 
\argmax_{A_i\in \Delta(\mathcal A_i)}
U_i^X\bigl[t_i,\ A_i\otimes A_{-i}(\sigma,F\mid t_i)\bigr].
\]
\end{definition}

Here \(\otimes\) forms the product distribution: \(A_j\otimes A_{-j}\in\Delta(\mathcal A)\) is the mixed action profile induced by independent randomization according to \(A_j\in\Delta(\mathcal A_j)\) and \(A_{-j}\in\Delta(\mathcal A_{-j})\).

\begin{remark}
Note that a mechanism is defined independently of agents' priors over others' types. As a result, the revelation principle does not generally associate an indirect mechanism with a single incentive-compatible direct-revelation counterpart. To see why this is the case, note that the ``meaning'' of an action in the indirect mechanism is fixed across priors---the rules of a first-price auction treat a bid of \(b\) the same way, regardless of the distribution of opponents' types. By contrast, in the revelation-principle construction, the action that a reported type maps to depends on the prior and the equilibrium being implemented. Indeed, in a first-price auction, the equilibrium bid of type \(\theta\)---which the direct mechanism submits on behalf of an agent reporting \(\theta\)---generally varies with the distribution of opponents' types. Incentive-compatible direct-revelation counterparts constructed for different priors thus generally correspond to different mechanisms, even though their action spaces are identical.
\end{remark}

\subsection{Knowledge of payoff comparisons}

We now fix an environment \((\mathcal I,\mathcal Y,\mathcal T,\mathcal U)\) and consider agents' understanding of the payoff structures of mechanisms designed for it. We express this understanding using \emph{payoff comparisons}. For agent \(j\), call a triple \((X,t_j,A)\) a \emph{payoff situation} if \(X=(\mathcal A,\Phi)\) is a mechanism, \(t_j\in\mathcal T_j\), and \(A\in\Delta(\mathcal A)\). A \emph{payoff comparison} is an ordered pair of payoff situations with the same
mechanism, agent, and type, written as
\[
U_j^X[t_j,A]\ge U_j^X[t_j,A'].
\]
Intuitively, this payoff comparison says that agent \(j\) of type \(t_j\)
gets a weakly higher payoff from \(A\) than from \(A'\) in mechanism \(X\).
We use \(\mathfrak R\) to denote the set of all possible payoff comparisons,
and \(\mathfrak R^{\mathrm{true}}\subseteq\mathfrak R\) to denote the subset of
comparisons whose associated inequality is true.

\begin{remark}
We emphasize that payoff comparisons capture an agent's knowledge of how different action profiles rank within a given mechanism, not her knowledge of how she values outcomes directly. In particular, consider two action profiles in mechanism $X$ and two action profiles in mechanism $X'$ such that the corresponding pairs induce the same outcomes. Our framework allows the agent to know which of the two profiles is better in $X$ while being unaware of which is better in $X'$. This reflects the fact that recognizing the consequences of actions requires understanding how the mechanism maps actions to outcomes.
\end{remark}

We capture which payoff comparisons each agent is aware of through an \emph{awareness profile} $\omega\in\Omega:=\prod_{i\in\mathcal I}2^{\mathfrak R}$:
\[
\omega=(\mathcal{R}_i^\omega)_{i\in\mathcal I},
\qquad
\mathcal{R}_i^\omega\subseteq\mathfrak R.
\]
Here \(\mathcal{R}_i^\omega\) is the set of payoff comparisons agent \(i\) is aware of under \(\omega\). We model agent \(i\)'s uncertainty about agents' awareness through a nonempty \emph{entertainment set}
\[
\mathcal E_i\subseteq\Omega,
\]
consisting of the awareness profiles she considers possible. We say that agent \(i\) \emph{knows} an event \(E\subseteq\Omega\) if \(\mathcal E_i\subseteq E\). We illustrate this in the following example.

\begin{example}
Consider the environment from Example~\ref{example:1}. Suppose agent \(2\) knows that agent \(1\) is aware that, in mechanism \(X_1\), she is always weakly better off saying ``yes'' than saying ``no'' whenever her type is nonnegative. This can be formalized as:
\[
\mathcal E_2
\subseteq
\bigcap_{t_1\ge 0,\ a_2\in\mathcal A_2^1}
\left\{
\omega\in\Omega:
\bigl(
U_1^{X_1}[t_1,(\text{yes},a_2)]
\ge
U_1^{X_1}[t_1,(\text{no},a_2)]
\bigr)\in \mathcal{R}_1^\omega
\right\}.
\]
That is, agent \(2\) only entertains awareness profiles in which agent \(1\) is aware of each of these comparisons.
\end{example}

To define common knowledge, we first specify which entertainment sets each agent could have. For each agent \(i\), let \(\mathfrak E_i\) be a nonempty family of possible entertainment sets:
\[
\mathfrak E_i\subseteq 2^\Omega\setminus\{\varnothing\},
\qquad
\mathcal E_i\in\mathfrak E_i.
\]
The families \((\mathfrak E_i)_{i\in\mathcal I}\) are primitives of the knowledge framework and are commonly known; the actual entertainment sets \((\mathcal E_i)_{i\in\mathcal I}\) need not be. This gives the following definition:

\begin{definition}\label{def:common-knowledge}
An event \(E\subseteq\Omega\) is \textbf{common knowledge} if
\[
\mathcal E\subseteq E
\qquad
\text{for every }i\in\mathcal I
\text{ and }\mathcal E\in\mathfrak E_i.
\]
In particular, a comparison \(r\in\mathfrak R\) is \textbf{common knowledge} if
\[
\mathcal E\subseteq
\bigcap_{j\in\mathcal I}
\{\omega\in\Omega:r\in\mathcal R_j^\omega\}
\qquad
\text{for every }i\in\mathcal I
\text{ and }\mathcal E\in\mathfrak E_i.
\]
\end{definition}

\begin{remark}
We treat the commonly known families \((\mathfrak E_i)_{i\in\mathcal I}\) as the primitives of the epistemic structure. Accordingly, we define common knowledge of agents' awareness of payoff comparisons through (commonly known) restrictions on these entertainment sets. This lets us avoid explicitly modeling agents' beliefs over others' entertainment sets and the resulting hierarchy of knowledge.

It is also worth noting that our entertainment sets \((\mathcal E_i)_{i\in\mathcal I}\) play the same structural role as information partitions in the standard model of \citet{aumann1976agreeing}. His information partition specifies, for every state \(\omega\), the set of states agent \(i\) considers possible when \(\omega\) is true. Here we retain only a single possibility set \(\mathcal E_i\subseteq\Omega\), relevant for reasoning about the true awareness profile. Thus, each agent is assigned a single set \(\mathcal E_i\subseteq\Omega\), rather than a state-dependent information cell \(\Pi_i(\omega)\).

However, unlike the information partitions in \citet{aumann1976agreeing}, these entertainment sets need not themselves be commonly known. Instead, we move the commonly known primitive of the epistemic structure one step further, to the families \((\mathfrak E_i)_{i\in\mathcal I}\) of possible entertainment sets.\footnote{One could similarly extend Aumann's framework by assigning each agent a family of information structures she considers possible. A commonly known information structure, as in Aumann's model, would correspond to every agent's family containing only that same structure.} This lets us maintain uncertainty about some aspects of what others know while making other aspects common knowledge. Indeed, our common-knowledge assumptions concern only particular features of agents' entertainment sets. For example, as we explain in due course, a designer's public explanation of analogies can be represented by requiring every entertainment set in each family to contain only awareness profiles in which all agents reason according to those analogies. This captures the idea that everyone takes these analogies into account, everyone knows that everyone does so, and so on, regardless of what agents previously knew.
\end{remark}

Using this framework, we can define what it means for it to be common knowledge that a strategy profile \(\sigma\) is a Bayes--Nash equilibrium.

\begin{definition}\label{def:knowingeq}
Fix a mechanism \(X=(\mathcal A,\Phi)\), a prior \(F\in\Delta(\mathcal T)\), and an equilibrium \(\sigma\) of \(X\) under \(F\). We say that it is \textbf{common knowledge that $\sigma$ is an equilibrium of $X$ at $F$} if, for every \(i,h,j\in\mathcal I\), \(\mathcal E\in\mathfrak E_i\), \(\omega\in\mathcal E\), \(t_j\in\mathcal T_j\), and \(A_j\in\Delta(\mathcal A_j)\),
\[
\Bigl(
U_j^X\bigl[t_j,\sigma_j(t_j)\otimes A_{-j}(\sigma,F\mid t_j)\bigr]
\geq
U_j^X\bigl[t_j,A_j\otimes A_{-j}(\sigma,F\mid t_j)\bigr]
\Bigr)
\in\mathcal R_h^\omega.
\]
Here \(A_{-j}(\sigma,F\mid t_j)\) is the distribution over opponents' actions induced by \(\sigma_{-j}\) and the conditional distribution of \(t_{-j}\) given \(t_j\).
\end{definition}
Intuitively, this means that it is common knowledge that, at strategy profile $\sigma$, no one has a profitable unilateral deviation.

\subsection{Knowledge transfer}

We now formalize how agents who understand certain correspondences between
mechanisms can transfer payoff comparisons between them. We express these
correspondences using \emph{payoff analogies}. For agent \(j\), a payoff
analogy is an ordered pair of payoff situations, written as
\[
(X,t_j,A)\indiffu{j}(X',t_j',A'),
\]
where \(X=(\mathcal A,\Phi)\) and \(X'=(\mathcal A',\Phi')\) are mechanisms,
\(t_j,t_j'\in\mathcal T_j\), \(A\in\Delta(\mathcal A)\), and
\(A'\in\Delta(\mathcal A')\). We use \(\mathfrak L\) to denote the set of all
possible payoff analogies. Intuitively, such an analogy says that, for
agent \(j\), the payoff situation \((X,t_j,A)\) corresponds to the payoff
situation \((X',t_j',A')\). The exact meaning of a payoff analogy comes
from how agents can use it to extend the comparisons they already know. To
capture this, we define when a comparison set \(\mathcal R\) is
\emph{closed under an analogy set} \(\mathcal L\):

\begin{definition}\label{def:L-closed}
Fix \(\mathcal L\subseteq\mathfrak L\). Let
\(\overline{\mathcal L} \supseteq\mathcal L \) be the smallest subset of $\mathfrak L$ satisfying the following properties:
\begin{enumerate}
\item \textbf{Symmetry.}
  $\bigl((X,t_j,A)\indiffu{j}(X',t_j',A')\bigr)\in \overline{\mathcal L}$
  if and only if
  $\bigl((X',t_j',A')\indiffu{j}(X,t_j,A)\bigr)\in \overline{\mathcal L}$.
\item \textbf{Mixtures.}
  Let $(S,\mu)$ be a probability space and let
  $A^s\in\Delta(\mathcal A)$, ${A'}^s\in\Delta(\mathcal A')$ be
  measurable in~$s$. If
  $\bigl((X,t_j,A^s) \indiffu{j} (X',t_j',{A'}^s)\bigr)
  \in \overline{\mathcal L}$
  for every $s\in S$, then
\[
\bigl(
  (X,t_j,\textstyle\int_S A^s\,d\mu)
  \indiffu{j}
  (X',t_j',\textstyle\int_S {A'}^s\,d\mu)
\bigr)
\in \overline{\mathcal L}.
\]
\end{enumerate}
Then the comparison set \(\mathcal R\subseteq\mathfrak R\) is
\textbf{closed under \(\mathcal L\)} if, whenever
\[
\bigl\{ (X,t_j,A^1) \indiffu{j} (X',t_j',A^3),\;
           (X,t_j,A^2) \indiffu{j} (X',t_j',A^4) \bigr\} \subseteq \overline {\mathcal L},
\]
we have $ \ \ \bigl(U_j^X[t_j,A^1] \ge U_j^X[t_j,A^2]\bigr) \in \mathcal{R}
  \quad \implies \quad
  \bigl(U_j^{X'}[t_j',A^3] \ge U_j^{X'}[t_j',A^4]\bigr) \in \mathcal{R}.$
\end{definition}

Thus, a set of comparisons $\mathcal R$ is closed under $\mathcal L$ if it contains all the inferences the agent could make using those analogies. More specifically, an agent can reason with $\mathcal L$ in two ways. First, she can use analogies to transfer knowledge across mechanisms: once an agent knows that two payoff situations correspond, she can carry payoff comparisons from one to the other. Second, she can extend the set of analogies themselves by reversing them (the symmetry property) and by combining them across mixtures over action profiles (the mixtures property). The latter is important because it ensures that if an analogy holds pointwise across action profiles, then it also holds after randomizing over them; this lets agents transfer knowledge about mixed actions.

Importantly, different collections of mechanisms will admit different sets of analogies that agents can safely use, in the sense that these analogies will never generate false comparisons from true ones. We capture this with the following definition:

\begin{definition}\label{def:valid-analogy-set}
A set \(\mathcal L\subseteq\mathfrak L\) is \textbf{valid} if, for every
\(\mathcal R\subseteq\mathfrak R^{\mathrm{true}}\), the smallest set containing
\(\mathcal R\) and closed under \(\mathcal L\) is contained in
\(\mathfrak R^{\mathrm{true}}\).
\end{definition}

Intuitively, an analogy set $\mathcal L$ is valid if applying its associated reasoning principles---as specified in Definition~\ref{def:L-closed}---to a true comparison always produces true comparisons. As it turns out, valid analogy sets admit a clean characterization.

\begin{proposition}\label{th1:eqchar}
An analogy set \(\mathcal L\) is valid if and only if, for every agent
\(j\), every two mechanisms \(X=(\mathcal A,\Phi)\) and
\(X'=(\mathcal A',\Phi')\), and every two types
\(t_j,t_j'\in\mathcal T_j\), there exists a positive affine function
\begin{equation}\label{eq:affineinverse}
\ell^j_{(X,t_j)\to (X',t_j')}:\mathbb R\to\mathbb R 
\quad \text{such that}
\quad 
\ell^j_{(X',t_j')\to (X,t_j)}
=
\left(
\ell^j_{(X,t_j)\to (X',t_j')}
\right)^{-1},
\end{equation}
and such that for all \(A\in\Delta(\mathcal A)\) and
\(A'\in\Delta(\mathcal A')\),
\[
\bigl((X,t_j,A)\indiffu{j}(X',t_j',A')\bigr)\in\mathcal L
\quad\Longrightarrow\quad
U_j^{X'}[t_j',A']
=
\ell^j_{(X,t_j)\to (X',t_j')}
\bigl(U_j^X[t_j,A]\bigr).
\]
\end{proposition}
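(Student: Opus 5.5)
The proof splits into two directions. Sufficiency follows from monotonicity of positive affine maps. Necessity comes from reading validity as an order-preservation property of the closed analogy set $\overline{\mathcal L}$, and then using the mixtures property to force that relation to be affine.

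\emph{Sufficiency.} Suppose the maps $\ell^j_{(X,t_j)\to(X',t_j')}$ exist with the stated properties. The first step is to show that the same affine relation holds for every analogy in $\overline{\mathcal L}$, not just in $\mathcal L$. Let $\mathcal M$ be the set of analogies $(X,t_j,A)\indiffu{j}(X',t_j',A')$ with $U_j^{X'}[t_j',A']=\ell^j_{(X,t_j)\to(X',t_j')}(U_j^X[t_j,A])$. Then $\mathcal M\supseteq\mathcal L$, and $\mathcal M$ has both closure properties:
\begin{itemize}
\item[] \emph{symmetry}, because of the inverse condition \eqref{eq:affineinverse};
\item[] \emph{mixtures}, because expected utility is linear in the mixed profile and an affine map commutes with integration against a probability measure.
\end{itemize}
Since $\overline{\mathcal L}$ is the smallest such set, $\overline{\mathcal L}\subseteq\mathcal M$.

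Next I show that $\mathfrak R^{\mathrm{true}}$ is closed under $\mathcal L$. Suppose $U_j^X[t_j,A^1]\ge U_j^X[t_j,A^2]$ is true, and the analogies pairing $A^1$ with $A^3$ and $A^2$ with $A^4$ lie in $\overline{\mathcal L}$. Because $\ell$ is increasing,
\[
U_j^{X'}[t_j',A^3]=\ell\bigl(U_j^X[t_j,A^1]\bigr)\ge \ell\bigl(U_j^X[t_j,A^2]\bigr)=U_j^{X'}[t_j',A^4].
\]
An intersection of $\mathcal L$-closed sets is $\mathcal L$-closed, so the smallest closed set containing any $\mathcal R\subseteq\mathfrak R^{\mathrm{true}}$ is contained in $\mathfrak R^{\mathrm{true}}$. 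Hence $\mathcal L$ is valid.

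\emph{Necessity.} Fix $j$, $(X,t_j)$ and $(X',t_j')$. Define
\[
G=\bigl\{\bigl(U_j^X[t_j,A],\,U_j^{X'}[t_j',A']\bigr): \bigl((X,t_j,A)\indiffu{j}(X',t_j',A')\bigr)\in\overline{\mathcal L}\bigr\}\subseteq\mathbb R^2.
\]
The plan has three steps.
\begin{enumerate}
\item \emph{$G$ is order-preserving in both directions.} Take two points $(x_1,y_1),(x_2,y_2)\in G$ with $x_1\ge x_2$. Apply validity to the singleton true set $\mathcal R=\{U_j^X[t_j,A^1]\ge U_j^X[t_j,A^2]\}$; the transferred comparison must be true, so $y_1\ge y_2$. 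Applying the same argument to the reversed analogies, which lie in $\overline{\mathcal L}$ by symmetry, gives $y_1\ge y_2\Rightarrow x_1\ge x_2$. Consequently $x_1=x_2$ if and only if $y_1=y_2$, and $G$ is the graph of a strictly increasing function.
\item \emph{$G$ is convex.} Use the mixtures property with a two-point space $S$ together with linearity of $U$ in the mixed profile.
\item \emph{$G$ is collinear.} This is the key step. A convex subset of $\mathbb R^2$ that contains three non-collinear points has nonempty interior. It would then contain two points with the same $x$ but different $y$, contradicting step 1. So $G$ lies on a line, and by step 1 that line has strictly positive slope unless $G$ has at most one point.
\end{enumerate}

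\emph{Constructing the maps.} The main remaining obstacle is choosing the $\ell$'s consistently with \eqref{eq:affineinverse}, including the degenerate cases.
\begin{itemize}
\item[] If $G$ contains at least two points, take $\ell$ to be the line through $G$.
\item[] If $G=\{(x_0,y_0)\}$, take $\ell(x)=x-x_0+y_0$.
\item[] If $G$ is empty, take the identity.
\end{itemize}
For each unordered pair $\{(X,t_j),(X',t_j')\}$ with distinct elements, make this choice for one orientation and define the reverse map as its inverse. This is legitimate because symmetry of $\overline{\mathcal L}$ makes the reverse graph exactly the transpose of $G$, so the inverse fits it.

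The delicate case is $(X,t_j)=(X',t_j')$, where we need $\ell=\ell^{-1}$. Here $G$ is invariant under transposition. Suppose $(x_0,y_0)\in G$ with $x_0\neq y_0$. Then $(y_0,x_0)\in G$ too, and $x_0>y_0$ paired with $y_0<x_0$ violates step 1. So $G$ lies on the diagonal, and the identity map works.
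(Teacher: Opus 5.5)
Your proposal is correct and follows the paper's proof essentially step by step. For sufficiency, you show that the set of affine-related analogies is closed under symmetry and mixtures, hence contains $\overline{\mathcal L}$. For necessity, you use order preservation from validity applied to singleton true comparisons, let mixtures force affinity, and take the inverse map for reversed pairs and the identity for self-pairs. The only minor variation is in how you derive affinity: you argue geometrically that a convex graph of a strictly increasing function cannot contain a triangle, whereas the paper extends the identity $f(\alpha x_0+(1-\alpha)x_1)=\alpha f(x_0)+(1-\alpha)f(x_1)$ from $[x_0,x_1]$ to the whole domain by rewriting endpoints as convex combinations; the two arguments are interchangeable.
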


The affine structure arises from requiring analogies to remain valid when combined through mixtures. Since expected payoffs under mixtures are convex combinations of the payoffs being mixed, the transformation relating analogous payoff situations must preserve comparisons generated by all such convex combinations, and hence must be positive affine. This parallels the familiar uniqueness of expected-utility representations up to positive affine transformations \citep{von2007theory,herstein1953axiomatic}.

Finally, we formalize agents' beliefs about how others use analogies by imposing conditions on their entertainment sets.
\begin{definition}
We say agent \(i\) believes that \textbf{agent \(j\) reasons according to the analogy set} \(\mathcal L\subseteq\mathfrak L\) if, for every \(\omega\in\mathcal E_i\), the comparison set \(\mathcal R_j^\omega\) is closed under \(\mathcal L\). We say it is \textbf{common knowledge that agents reason according to} \(\mathcal L\) if
\[
\mathcal E\subseteq
\bigcap_{j\in\mathcal I}
\left\{
\omega\in\Omega:
\mathcal R_j^\omega \text{ is closed under }\mathcal L
\right\}
\qquad
\text{for every }i\in\mathcal I
\text{ and }\mathcal E\in\mathfrak E_i.
\]
\end{definition}
That is, reasoning according to \(\mathcal L\) is common knowledge if it is common knowledge that every agent \(i\) only considers possible awareness profiles in which every agent \(j\)'s comparison set contains all inferences that can be drawn from the analogies in \(\mathcal L\).

\section{Strategic equivalence}\label{sec:strategic-equivalence}

We now apply our framework to strategically equivalent mechanisms. Strategic equivalence is a standard concept in the mechanism design literature: two mechanisms are strategically equivalent if one can be transformed into the other by relabeling each agent's actions, so that corresponding action profiles yield the same payoff consequences for all agents and types (see e.g.\ \citet{vickrey1961counterspeculation,milgrom1982theory,krishna2009auction}).\footnote{Some notions of strategic equivalence allow one of the equivalent mechanisms to have payoff-equivalent duplicate actions, requiring bijections only after such duplicates are removed. We do not allow for this in order to simplify notation, but the definition could be straightforwardly amended to accommodate it.} In our framework, this can be defined as follows.

\begin{definition}
Mechanisms \(X=(\mathcal A,\Phi)\) and \(X'=(\mathcal A',\Phi')\) are \textbf{strategically equivalent} if there exists a profile of bijections
\[
\alpha=(\alpha_i)_{i\in\mathcal I},
\qquad
\alpha_i:\mathcal A_i\to\mathcal A_i'
\]
such that, for every \(i\in\mathcal I\), type \(t_i\in\mathcal T_i\), and action profile \(a\in\mathcal A\),
\[
U_i^{X}[t_i,a]
=
U_i^{X'}[t_i,\alpha(a)].
\]
A collection of mechanisms is a \textbf{class of strategically equivalent mechanisms} if every pair of mechanisms in the collection is strategically equivalent.
\end{definition}

That is, relabeling each agent's action individually through \(\alpha_i\) maps every action profile in \(X\) to one in \(X'\) that yields the same payoff for all agents and all types. We illustrate this through the example of first-price and Dutch auctions whose strategic equivalence has been noted by \cite{vickrey1961counterspeculation}.\footnote{The Dutch auction is dynamic; here we consider its reduced normal form where each bidder's strategy specifies the time at which to stop the clock.}

\begin{example}[\citet{vickrey1961counterspeculation,milgrom1982theory}]
\label{example:fpa-dutch}
A single good can be allocated to one of \(N\geq 2\) agents, or remain unassigned. Each agent \(i\) has value \(t_i\in\mathbb R\) for the good, and agents can be asked to make payments. Formally, this environment is $\big(\mathcal I,\mathcal Y,\mathbb R^{\mathcal I},(u_i)_{i\in\mathcal I}\big),$ where $\mathcal I=\{1,\dots,N\},$ $\mathcal Y:=(\mathcal I\cup\{\varnothing\})\times \mathbb{R}_+^{\mathcal I}, $ and $u_i\bigl(t_i,(j,p)\bigr)
=
t_i\,\mathbf{1}\{j=i\}-p_i.$ That is, an outcome specifies which agent receives the good and the payment made by each agent.

The following two mechanisms for this environment are strategically equivalent.
\begin{enumerate}
    \item 
\textbf{First-price auction with bounded bids.}
Each agent submits a bid in the interval \([0,\bar b]\). The good is allocated to one of the agents submitting the highest bid, according to a common fixed tie-breaking rule, and the winner pays her bid. Formally, $X^{\mathrm{FPA}}
=
\left(\mathcal A^{\mathrm{FPA}},\Phi^{\mathrm{FPA}}\right),$ where
\[
\mathcal A^{\mathrm{FPA}}=[0,\bar b]^{\mathcal I},
\qquad
\Phi^{\mathrm{FPA}}(a)
=
\bigl(w(a),\, a_{w(a)}\, e_{w(a)}\bigr),
\]
where \(w(a)\) denotes the agent selected as the winner from among
$\argmax_{j\in\mathcal I} a_j$ and \(e_i\in\mathbb R^{\mathcal I}\) is the unit vector with a \(1\) in coordinate \(i\) and \(0\) elsewhere.
\item
\textbf{Dutch auction.}
The price starts at \(\bar b\) and falls linearly to \(0\) at speed \(r>0\). Each agent chooses a stopping time at which to claim the good. The good is allocated to one of the agents choosing the earliest stopping time, using the same tie-breaking rule as in the first-price auction, and the winner pays the clock price at that time. Formally, $X^{\mathrm{Dutch}}
=
\left(\mathcal A^{\mathrm{Dutch}},\Phi^{\mathrm{Dutch}}\right),$ where
\[
\mathcal A^{\mathrm{Dutch}}=\left[0,{\bar b}/{r}\right]^{\mathcal I},
\qquad \Phi^{\mathrm{Dutch}}(a)
=
\bigl(\tilde w(a),\,(\bar b-r\,a_{\tilde w(a)})\,e_{\tilde w(a)}\bigr),
\]
where \(\tilde w(a)\) denotes the agent selected as the winner from among $\argmin_{j\in\mathcal I} a_j.$
\end{enumerate}
\end{example}

To understand this equivalence, note that by deciding when to stop the clock in the Dutch auction, a player is effectively choosing the price at which to claim the object. This is in turn equivalent to choosing a bid in a first-price auction. Indeed, these two decisions of player \(i\) can be connected with the following bijection:
\[
\alpha_i^{\text{FPA} \to \text{Dutch}}(a_i)=\frac{\bar b-a_i}{r}.
\]
This maps a bid \(a_i\) in the first-price auction to the moment when the Dutch clock reaches that same price. Therefore, for any bid profile in the first-price auction, the corresponding stopping-time profile in the Dutch auction yields the same winner and the same payment.

\begin{remark}
An analogous strategic equivalence holds between the sealed-bid second-price auction and the ascending auction, provided the latter does not reveal dropout information. If dropouts were observable, bidders could condition their play on others' dropout decisions, giving rise to strategies in the ascending auction that have no counterpart in the sealed-bid one. This issue does not arise for the first-price and Dutch auctions, since the Dutch auction ends the moment any bidder stops, so a bidder cannot condition her play on any non-trivial information about others' actions.
\end{remark}
Indeed, experimental evidence suggests that drawing such correspondences between mechanisms' rules can help agents strategize. For example, \citet{breitmoser2022obviousness} find that performance
in a sealed-bid second-price auction improves substantially when
subjects are told that their bids will determine the prices at which they drop out of a subsequent ascending-clock auction. The authors argue that the clock presentation prompts participants
to use the ``continue versus stop'' reasoning of an ascending-clock
auction, even though they ultimately submit a sealed bid. Thus,
drawing an analogy to a setting participants find easier to navigate
lets them apply that reasoning in the sealed-bid format, reducing
bidding errors.

We now discuss how communicating such correspondences to agents helps them transfer knowledge about equilibria from one mechanism to another. Suppose that two mechanisms are strategically equivalent and that an equilibrium of one is common knowledge, in the sense of Definition~\ref{def:knowingeq}. If it is also common knowledge that agents reason according to the relevant action correspondences, then the corresponding equilibrium of the other mechanism is common knowledge as well. This is captured by the following result.

\begin{theorem}\label{th:strategic-equivalence-transfer}
Let \(X=(\mathcal A,\Phi)\) and \(X'=(\mathcal A',\Phi')\) be strategically equivalent through \(\alpha\).
\begin{enumerate}
\item If \(\sigma\) is an equilibrium of \(X\) for prior \(F\), then the strategy profile defined by \(\sigma_i'(t_i):=(\alpha_i)_\#\sigma_i(t_i)\) is an equilibrium of \(X'\) for the same prior.
\item If, in addition, this equilibrium of \(X\) is commonly known and it is common knowledge that agents reason according to the analogy set
\[
\big\{(X,t_i,a)\indiffu{i}(X',t_i,\alpha(a)):
i\in\mathcal I,\ t_i\in\mathcal T_i,\ a\in\mathcal A\big\},
\]
then the equilibrium \(\sigma'\) of \(X'\) is commonly known.
\end{enumerate}
\end{theorem}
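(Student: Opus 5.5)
Part (i) comes from transporting mixed profiles through $\alpha$; part (ii) comes from pushing each known equilibrium comparison in $X$ through the analogy set using the symmetry and mixtures rules of Definition~\ref{def:L-closed}.

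\textbf{Part (i).} Extend $\alpha$ to mixed profiles by pushforward, $\alpha_\#:\Delta(\mathcal A)\to\Delta(\mathcal A')$. Two facts are needed.

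First, the pushforward commutes with products and with the induced opponent distributions:
\[
(\alpha_\#)(A_i\otimes A_{-i})=(\alpha_i)_\#A_i\otimes(\alpha_{-i})_\#A_{-i},
\qquad
A_{-i}(\sigma',F\mid t_i)=(\alpha_{-i})_\#A_{-i}(\sigma,F\mid t_i).
\]
The first identity holds because $\alpha$ acts coordinatewise. The second holds because $\sigma'_j=(\alpha_j)_\#\sigma_j$, so composing the kernels commutes with the coordinatewise map.

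Second, $U_i^{X'}[t_i,\alpha_\#A]=U_i^X[t_i,A]$ for every $A\in\Delta(\mathcal A)$. This follows by integrating the pointwise identity $U_i^X[t_i,a]=U_i^{X'}[t_i,\alpha(a)]$ against $A$, using change of variables.

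Since $\alpha_i$ is a measurable bijection with measurable inverse, $(\alpha_i)_\#$ is a bijection from $\Delta(\mathcal A_i)$ onto $\Delta(\mathcal A_i')$. So every deviation $A_i'$ in $X'$ equals $(\alpha_i)_\#A_i$ for some $A_i$. Its payoff in $X'$ against $A_{-i}(\sigma',F\mid t_i)$ then equals the payoff of $A_i$ in $X$ against $A_{-i}(\sigma,F\mid t_i)$. Likewise, the payoff of $\sigma'_i(t_i)$ in $X'$ equals that of $\sigma_i(t_i)$ in $X$. Hence the optimality of $\sigma_i(t_i)$ in $X$ transfers to optimality of $\sigma'_i(t_i)$ in $X'$, and $\sigma'$ is an equilibrium of $X'$.

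\textbf{Part (ii).} Let $\mathcal L$ be the stated analogy set. Fix $i,h,j$, $\mathcal E\in\mathfrak E_i$, $\omega\in\mathcal E$, $t_j$, and a deviation $A_j'\in\Delta(\mathcal A_j')$. Write $A_j'=(\alpha_j)_\#A_j$.

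By Definition~\ref{def:knowingeq} applied to $\sigma$ in $X$, the comparison
\[
U_j^X\bigl[t_j,\sigma_j(t_j)\otimes A_{-j}(\sigma,F\mid t_j)\bigr]\ge U_j^X\bigl[t_j,A_j\otimes A_{-j}(\sigma,F\mid t_j)\bigr]
\]
lies in $\mathcal R_h^\omega$. By common knowledge that agents reason according to $\mathcal L$, the set $\mathcal R_h^\omega$ is closed under $\mathcal L$.

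The key step is to show that $\overline{\mathcal L}$ contains
\[
(X,t_j,B)\indiffu{j}(X',t_j,\alpha_\#B)\qquad\text{for every mixed profile }B\in\Delta(\mathcal A).
\]
To get this, apply the mixtures property with probability space $(S,\mu)=(\mathcal A,B)$ and the measurable families $s\mapsto\delta_s$ and $s\mapsto\delta_{\alpha(s)}$. The mixtures are $\int\delta_s\,dB=B$ and $\int\delta_{\alpha(s)}\,dB=\alpha_\#B$. Measurability of these kernels follows from the measurability of $\alpha$ assumed in the paper's footnote.

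Apply this with $B=\sigma_j(t_j)\otimes A_{-j}(\sigma,F\mid t_j)$ and $B=A_j\otimes A_{-j}(\sigma,F\mid t_j)$. The commutation identities from part (i) show that their images are exactly the two profiles in the required comparison for $\sigma'$ in $X'$. Closure of $\mathcal R_h^\omega$ under $\mathcal L$ then delivers that comparison in $\mathcal R_h^\omega$. Since $i,h,j,\mathcal E,\omega,t_j,A_j'$ were arbitrary, Definition~\ref{def:knowingeq} holds for $\sigma'$, so the equilibrium $\sigma'$ of $X'$ is commonly known.

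The main obstacle is the measure-theoretic step: correctly invoking the mixtures property with Dirac families, and verifying that the induced opponent distribution under $\sigma'$ is the pushforward of the one under $\sigma$. Everything else is bookkeeping.
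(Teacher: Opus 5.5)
Your proposal is correct and follows essentially the same route as the paper. Part~(i) uses the payoff identity $U_j^{X'}[t_j,\alpha_\# A]=U_j^X[t_j,A]$ together with the bijection between deviations. Part~(ii) writes $A_j'=(\alpha_j)_\#A_j$ and $A_{-j}(\sigma',F\mid t_j)=(\alpha_{-j})_\#A_{-j}(\sigma,F\mid t_j)$, then transports each known equilibrium comparison through $\overline{\mathcal L}$ via the closure property. The only difference is that you spell out the mixture step explicitly, taking $(S,\mu)=(\mathcal A,B)$ with the Dirac families $s\mapsto\delta_s$ and $s\mapsto\delta_{\alpha(s)}$, whereas the paper invokes it without detail.
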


Here, \((\alpha_i)_\#\sigma_i(t_i)\) denotes the pushforward of \(\sigma_i(t_i)\) under \(\alpha_i\), that is, the distribution over actions obtained by relabeling each action through \(\alpha_i\). This relabeling preserves equilibrium: since the payoff consequences of actions remain unchanged, any profitable deviation in \(X'\) would correspond to a profitable deviation in \(X\). Hence, if \(\sigma\) is an equilibrium of \(X\), then \(\sigma'\) is an equilibrium of \(X'\).

Moreover, the theorem shows that if it is common knowledge that agents reason according to these action correspondences, then common knowledge of the original equilibrium carries over to the relabeled equilibrium. Note that the analogy set in the theorem is always valid: since strategic equivalence requires $U_i^{X'}[t_i,\alpha(a)] = U_i^X[t_i,a]$, the condition of Proposition~\ref{th1:eqchar} is satisfied with the identity as the affine map for all agents $j$ and all types $t_j\in \mathcal T_j$. The substantive assumption in part~$(ii)$ is therefore not that these analogies are correct, but that agents \emph{recognize} them---that is, that the correspondence between the two mechanisms' rules has been explained to them, and that it is common knowledge that they reason accordingly.

The proof is in the appendix. Knowledge transfers because the reasoning closure in Definition~\ref{def:L-closed} allows agents to use the action analogies to transfer the payoff comparisons that verify equilibrium. Although these analogies cover only pure action profiles, the mixture property extends them to lotteries over actions, so the result also applies to mixed strategies.

Several features of this result deserve comment. First, what must be communicated to agents is the equivalence of the mechanisms' \emph{rules}, and no information about equilibrium behavior needs to be conveyed. This distinction is practically relevant: while explaining to agents how they should play---which types should choose which actions and why---is a complex task, a correspondence between mechanisms' rules is a concrete and relatively simple object that could be communicated by a designer switching between mechanisms.\footnote{In Appendix~\ref{app:rules_based}, we formalize this idea and extend it to the notion of strategic analogy, defined in the next section.}

Second, the result is agnostic about how agents came to know the equilibrium of the original mechanism---the way they optimize their actions or form beliefs. It simply says that if agents have somehow acquired the relevant knowledge---formalized as the payoff comparisons in Definition~\ref{def:knowingeq}---they do not need to repeat that process for the new mechanism.

Third, one may ask whether agents can also coordinate on the transferred equilibrium, particularly when multiple equilibria exist. While our framework does not model coordination directly, the communication of the rule-level correspondence between mechanisms provides a natural focal point: if agents understand that a certain strategy profile is the one they play in the original mechanism, and they understand how actions map across mechanisms, it is natural for them to expect that the corresponding profile will be played in the new one.

\section{Strategic analogy}\label{sec:strategic-analogy}

In the previous section we showed how understanding strategic equivalence allows agents to transfer knowledge across mechanisms. We now introduce a more permissive notion of \emph{strategic analogy}, which generalizes strategic equivalence in two ways. First, while strategic equivalence only allows agents to relabel actions when moving from one mechanism to another, strategic analogy also allows them to reinterpret the meanings of \emph{types}. Second, rather than requiring that utilities in the two mechanisms be equal after these remappings, it requires only that they be related through a valid analogy set in the sense of Definition~\ref{def:valid-analogy-set}. This preserves a sense in which agents can transfer payoff comparisons when payoffs themselves lack an objective cardinal interpretation.

\begin{definition}\label{def:strategicanalogy}
Let \(X=(\mathcal A,\Phi)\) and \(X'=(\mathcal A',\Phi')\) be mechanisms for
the same environment. The mechanisms \(X\) and \(X'\) are
\textbf{strategically analogous} if there exist profiles of bijections
\[
\alpha=(\alpha_i)_{i\in\mathcal I}, 
\quad 
\alpha_i:\mathcal A_i\to\mathcal A_i',
\qquad \text{and} \qquad
\tau=(\tau_i)_{i\in\mathcal I},
\quad
\tau_i:\mathcal T_i\to\mathcal T_i,
\]
such that the analogy set
\[
\mathcal L_{\alpha,\tau}^{X,X'}
:=
\Bigl\{
(X,\tau_i(t_i),a)
\;\indiffu{i}\;
(X',t_i,\alpha(a))
:
i\in\mathcal I,\ t_i\in\mathcal T_i,\ a\in\mathcal A
\Bigr\}
\]

is valid. A collection of mechanisms is a \textbf{class of strategically
analogous mechanisms} if every pair of mechanisms in the collection is
strategically analogous.
\end{definition}

Note that the definition requires all relabelings to be separable: actions are relabeled agent by agent, as in strategic equivalence. Here the same is also true of types: each agent's type is remapped individually, and independently of the relabeling of actions. While we do not explicitly model the difficulty of learning different relabelings, separability ensures that they are relatively low-dimensional objects and may be easier for agents to understand.

We also present a characterization of strategic analogy that follows directly from Proposition~\ref{th1:eqchar}.

\begin{corollary}\label{prop:strategicanalogy-affine}
Let \(X=(\mathcal A,\Phi)\) and \(X'=(\mathcal A',\Phi')\) be
mechanisms for the same environment. Then \(X\) and \(X'\) are
strategically analogous if and only if there exist profiles of
bijections
\[
\alpha=(\alpha_i)_{i\in\mathcal I}, 
\quad 
\alpha_i:\mathcal A_i\to\mathcal A_i',
\qquad \text{and} \qquad
\tau=(\tau_i)_{i\in\mathcal I},
\quad
\tau_i:\mathcal T_i\to\mathcal T_i,
\]
and, for each \(i\in\mathcal I\), functions
\[
\kappa_i:\mathcal T_i\to\mathbb R_{++},
\qquad
\lambda_i:\mathcal T_i\to\mathbb R,
\]
such that, for every \(i\in\mathcal I\), every
\(t_i\in\mathcal T_i\), and every \(a\in\mathcal A\),
\[
U_i^{X'}[t_i,\alpha(a)]
=
\kappa_i(t_i)U_i^X[\tau_i(t_i),a]+\lambda_i(t_i).
\]
\end{corollary}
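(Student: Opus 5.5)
The plan is to reduce the statement directly to Proposition~\ref{th1:eqchar} applied to the specific analogy set \(\mathcal L_{\alpha,\tau}^{X,X'}\), with the same bijections \(\alpha,\tau\) in both directions of the equivalence. The whole content is matching the affine maps that Proposition~\ref{th1:eqchar} attaches to pairs of payoff-situation ``indices'' \((X,s),(X',t)\) with the type-dependent pair \((\kappa_i,\lambda_i)\).

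\emph{Only if.} Suppose \(\mathcal L_{\alpha,\tau}^{X,X'}\) is valid. Proposition~\ref{th1:eqchar} gives, for each agent \(i\) and types \(t_i\), a positive affine map \(\ell^i_{(X,\tau_i(t_i))\to(X',t_i)}\). Write it as \(x\mapsto\kappa_i(t_i)x+\lambda_i(t_i)\) with \(\kappa_i(t_i)>0\). Every analogy \((X,\tau_i(t_i),a)\indiffu{i}(X',t_i,\alpha(a))\) lies in the set, so the displayed identity holds for every \(a\in\mathcal A\). Since \(\tau_i\) is fixed and \(t_i\) ranges over \(\mathcal T_i\), this defines \(\kappa_i,\lambda_i\) on all of \(\mathcal T_i\).

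\emph{If.} Given \(\alpha,\tau,\kappa,\lambda\), I would construct the family of affine maps required by Proposition~\ref{th1:eqchar}, which must be defined for \emph{every} pair of mechanisms and types and must satisfy the inverse condition \eqref{eq:affineinverse}.
\begin{itemize}
\item For the pair \((X,\tau_i(t_i))\to(X',t_i)\), set \(\ell(x)=\kappa_i(t_i)x+\lambda_i(t_i)\).
\item For the reverse pair, set \(\ell\) to the inverse of that map.
\item For all other pairs \((Y,s)\to(Y',s')\), assign the identity.
\end{itemize}

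The care point is well-definedness. Each ordered pair must receive exactly one map, and the inverse condition must hold.
\begin{itemize}
\item Because \(\tau_i\) is a bijection, each pair \((X,s),(X',t)\) with \(s=\tau_i(t)\) arises from exactly one \(t_i\), so the assignment is unambiguous.
\item The case \(X=X'\) needs checking. There a pair \((X,s)\to(X,t)\) might be assigned both a forward map (from \(s=\tau_i(t)\)) and an inverse map (from \(t=\tau_i(s)\)), and these could conflict. If \(X=X'\) is a live possibility, I would handle it by ordering or otherwise disambiguating.
\item Pairs with the same index, \((X,s)\to(X,s)\), must get the identity for the inverse condition to hold. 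This forces \(\kappa_i=1,\lambda_i=0\) at fixed points of \(\tau_i\) when \(X=X'\). I would check that this is automatically consistent: if \(\tau_i(t)=t\), the identity \(U=\kappa U+\lambda\) for all \(a\) allows redefining \(\ell\) as the identity on those points, since \(\mathcal L\) only constrains values attained.
\end{itemize}

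I expect this bookkeeping, especially the degenerate \(X=X'\) case, to be the only real obstacle. Outside it, Proposition~\ref{th1:eqchar} applied to the constructed family yields validity immediately: each analogy in \(\mathcal L_{\alpha,\tau}^{X,X'}\) satisfies the required payoff identity by hypothesis.
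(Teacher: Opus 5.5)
Your reduction to Proposition~\ref{th1:eqchar} is the paper's argument; the paper only says the corollary ``follows directly'' from that proposition. For $X\neq X'$ your bookkeeping is correct. Forward pairs $(X,\tau_i(t))\to(X',t)$, their reverses, and all other pairs (including diagonal ones) form disjoint classes. So your assignment is well defined, satisfies \eqref{eq:affineinverse}, and represents every analogy in $\mathcal L_{\alpha,\tau}^{X,X'}$. The only-if direction is fine in all cases.

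The genuine problem is your handling of $X=X'$. It is not mere bookkeeping, and neither ordering nor redefinition can fix it. At a fixed point $\tau_i(t)=t$ the hypothesis reads $U_i^X[t,\alpha(a)]=\kappa_i(t)\,U_i^X[t,a]+\lambda_i(t)$, not $U=\kappa U+\lambda$. So nothing forces $\kappa_i(t)=1$, $\lambda_i(t)=0$ on attained values, and $\mathcal L_{\alpha,\tau}^{X,X}$ can genuinely be invalid.

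A counterexample: take one agent, $\mathcal A_1=\mathcal Y=\mathbb Z$, $\Phi(a)=\delta_a$, $u_1(t,y)=2^{y}$, $\alpha_1(a)=a+1$, $\tau_1=\mathrm{id}$. Then $U_1^X[t,\alpha_1(a)]=2\,U_1^X[t,a]$, so the affine identity holds with $\kappa_1\equiv 2$, $\lambda_1\equiv 0$. Yet $(X,t,1)\indiffu{1}(X,t,2)$ lies in $\mathcal L$, and $(X,t,2)\indiffu{1}(X,t,1)$ lies in $\overline{\mathcal L}$ by Symmetry. Together they transfer the true comparison $U_1^X[t,2]\ge U_1^X[t,1]$ into the false comparison $U_1^X[t,1]\ge U_1^X[t,2]$. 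Two-cycles of $\tau_i$ with $\kappa_i(t)\kappa_i(\tau_i(t))\neq 1$ can break validity in the same way.

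So your plan of using the same $\alpha,\tau$ in both directions is false when $X=X'$. The corollary itself survives only because it is existential. When $X=X'$, both sides hold trivially with identity bijections, $\kappa_i\equiv 1$, $\lambda_i\equiv 0$; the identity analogy set is valid by Proposition~\ref{th1:eqchar} with identity maps. Dispose of that case this way, and keep your construction for $X\neq X'$.
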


The affine relationship between these utilities means that two mechanisms are strategically analogous if and only if actions and types can be relabeled in a way that preserves agents' preferences over possibly mixed action profiles, that is, for every \(i\in\mathcal I\),
every \(t_i\in\mathcal T_i\), and every
\(A,A'\in\Delta(\mathcal A)\),
\[
U_i^X[\tau_i(t_i),A]\ge U_i^X[\tau_i(t_i),A']
\quad\Longleftrightarrow\quad
U_i^{X'}[t_i,\alpha_\# A]
\ge
U_i^{X'}[t_i,\alpha_\# A'].
\]
This characterization is useful for two reasons. First, it provides
a direct condition on relationships between payoffs that is easy to verify in applications.
Second, it shows that strategic analogy is a self-contained notion
of structural equivalence between mechanisms that can be used
independently of our epistemic framework.

To build intuition for strategic analogy, we provide a simple example.

\begin{example}\label{example:PPs}
An agent has value \(t_1\in\mathbb R\) for a good that may be allocated to her or remain unassigned; she may also be charged a payment for the good. This environment can be written as $\left(\mathcal I,\mathcal Y,\mathbb R^{\mathcal I},(u_i)_{i\in\mathcal I}\right),$ where $\mathcal I=\{1\},$ $\mathcal Y:=(\mathcal I\cup\{\varnothing\})\times \mathbb{R}_+,$ and $u_1\bigl(t_1,(y,p_1)\bigr)
=
t_1\mathbf{1}\{y=1\}-p_1.$

Consider a posted-price mechanism with price \(P\) in which the agent chooses whether to buy the good. If she buys it, she receives the good and pays \(P\). Otherwise, she gets and pays nothing. Formally, $X_P=\left(\mathcal A^P,\Phi^P\right),$ where
\[
\mathcal A^P=\mathcal A_1^P=\{\text{buy},\ \text{do not buy}\},
\qquad
\Phi^P(a_1)=
\begin{cases}
(1,P) & \text{if } a_1=\text{buy},\\
(\varnothing,0) & \text{if } a_1=\text{do not buy}.
\end{cases}
\] 
Then the family of posted-price mechanisms \(\{X_P:P\in\mathbb R_+\}\), where \(X_P=(\mathcal A^P,\Phi^P)\), forms a class of strategically analogous mechanisms. Indeed, for any \(P,P'\in\mathbb{R}_+\), let \(\alpha_1^{P\to P'}\) be the identity map and choose $\tau_1^{P'\to P}(t_1)=t_1+(P-P').$ Then, for all types \(t_1\in\mathbb R\), we have
\[
U_1^{X_{P'}}[t_1,\text{buy}]
= t_1-P'
= \bigl(t_1+(P-P')\bigr)-P
= U_1^{X_P}[\tau_1^{P'\to P}(t_1),\text{buy}],
\]
\[
U_1^{X_{P'}}[t_1,\text{do not buy}]
= 0
= U_1^{X_P}[\tau_1^{P'\to P}(t_1),\text{do not buy}].
\]
Thus, the condition of Corollary \ref{prop:strategicanalogy-affine} is satisfied with a scaling factor of $1$ and an additive constant of $0$. Intuitively, changing the posted price simply shifts the buyer's net value of buying from \(t_1-P'\) to \(t_1-P\). Thus, facing price \(P'\) as type \(t_1\) is strategically the same as facing price \(P\) as type \(t_1+(P-P')\).
\end{example}

The example illustrates how the remappings in Definition~\ref{def:strategicanalogy} should be interpreted. The action map \(\alpha\) identifies choices that play the same role in the two mechanisms, while the type map \(\tau\) identifies agents whose payoff tradeoffs are the same after those choices are matched. In the posted-price example, the action correspondence is trivial, since ``buy'' and ``do not buy'' have the same meaning at every price, while the type remapping adjusts the buyer's value to account for the change in price.

We now show how strategic analogy allows both equilibria and knowledge of equilibria to transfer across mechanisms.

\begin{theorem}\label{th:strategic-analogy-transfer}
Let \(X=(\mathcal A,\Phi)\) and \(X'=(\mathcal A',\Phi')\) be strategically analogous through \((\alpha,\tau)\), and fix a prior \(F\in\Delta(\mathcal T)\).
\begin{enumerate}
\item If \(\sigma\) is an equilibrium of \(X\) for prior \(\tau_\#F\), then the strategy profile defined by \(\sigma_i'(t_i):=(\alpha_i)_\#\sigma_i(\tau_i(t_i))\) is an equilibrium of \(X'\) for prior \(F\).
\item If, in addition, this equilibrium of \(X\) is commonly known and it is common knowledge that agents reason according to the analogy set \(\mathcal L_{\alpha,\tau}^{X,X'}\) from Definition~\ref{def:strategicanalogy}, then the equilibrium \(\sigma'\) of \(X'\) is commonly known.
\end{enumerate}
\end{theorem}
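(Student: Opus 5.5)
Both parts rest on the affine representation in Corollary~\ref{prop:strategicanalogy-affine}. Because \((\alpha,\tau)\) witnesses strategic analogy, there are \(\kappa_i>0\) and \(\lambda_i\) with
\[
U_i^{X'}[t_i,\alpha(a)]=\kappa_i(t_i)\,U_i^X[\tau_i(t_i),a]+\lambda_i(t_i)
\]
for all pure profiles \(a\). Integrating this identity gives the same relation for every mixed profile: \(U_i^{X'}[t_i,\alpha_\# A]=\kappa_i(t_i)U_i^X[\tau_i(t_i),A]+\lambda_i(t_i)\) for all \(A\in\Delta(\mathcal A)\). This holds because \(\alpha\) is a measurable bijection and expected utility is linear in the lottery.

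\textbf{Part (i).} I first check how beliefs transform. Fix \(i\) and \(t_i\), and let \(s_i=\tau_i(t_i)\).
\begin{itemize}
\item Under \(F\), conditional on \(t_i\), the opponents' types \(t_{-i}\) map through \(\tau_{-i}\) to \(s_{-i}\), whose law is the conditional of \(\tau_\#F\) given \(s_i\).
\item This uses that \(\tau\) is a product of measurable bijections, so conditioning commutes with the pushforward, up to the usual a.e.\ caveats for regular conditional distributions.
\item Each opponent plays \(\sigma'_j(t_j)=(\alpha_j)_\#\sigma_j(\tau_j(t_j))\), so \(A_{-i}(\sigma',F\mid t_i)=(\alpha_{-i})_\#A_{-i}(\sigma,\tau_\#F\mid s_i)\).
\end{itemize}
Next I rewrite any deviation in \(X'\). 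Every \(A_i'\in\Delta(\mathcal A_i')\) equals \((\alpha_i)_\# A_i\) for a unique \(A_i\), namely the pushforward by \(\alpha_i^{-1}\). So \(A_i'\otimes A_{-i}(\sigma',F\mid t_i)=\alpha_\#\bigl(A_i\otimes A_{-i}(\sigma,\tau_\#F\mid s_i)\bigr)\). By the affine identity, the payoffs of \(t_i\) in \(X'\) from \(\sigma_i'(t_i)\) and from \(A_i'\) are the same positive affine transform of the payoffs of \(s_i\) in \(X\) from \(\sigma_i(s_i)\) and from \(A_i\). The equilibrium inequality in \(X\) for type \(s_i\) therefore gives the one in \(X'\) for \(t_i\).

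\textbf{Part (ii).} I fix \(i,h,j\), \(\mathcal E\in\mathfrak E_i\), \(\omega\in\mathcal E\), \(t_j\), and \(A_j'\). I must show that the deviation comparison for \(\sigma'\) at \(t_j\) lies in \(\mathcal R_h^\omega\).
\begin{itemize}
\item Set \(s_j=\tau_j(t_j)\) and \(A_j=(\alpha_j^{-1})_\#A_j'\).
\item Common knowledge of \(\sigma\) (Definition~\ref{def:knowingeq}, applied to \(\tau_\#F\) at type \(s_j\)) places in \(\mathcal R_h^\omega\) the comparison \(U_j^X[s_j,\sigma_j(s_j)\otimes A_{-j}]\ge U_j^X[s_j,A_j\otimes A_{-j}]\), where \(A_{-j}=A_{-j}(\sigma,\tau_\#F\mid s_j)\).
\item Common knowledge of reasoning according to \(\mathcal L_{\alpha,\tau}^{X,X'}\) makes \(\mathcal R_h^\omega\) closed under this set.
\end{itemize}
It then suffices to show that the two analogies \((X,s_j,B)\indiffu{j}(X',t_j,\alpha_\#B)\) lie in \(\overline{\mathcal L}\) for \(B\) equal to each of the two mixed profiles. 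Closure then transfers the comparison to \(U_j^{X'}[t_j,\alpha_\#(\sigma_j(s_j)\otimes A_{-j})]\ge U_j^{X'}[t_j,\alpha_\#(A_j\otimes A_{-j})]\). By the identities from part (i), this is exactly the required comparison for \(\sigma'\).

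The analogies are obtained from the mixtures property of Definition~\ref{def:L-closed}. Take \(S=\mathcal A\), \(\mu=B\), \(A^s=\delta_s\) and \({A'}^s=\delta_{\alpha(s)}\). These are measurable in \(s\), and each pointwise analogy \((X,\tau_j(t_j),s)\indiffu{j}(X',t_j,\alpha(s))\) belongs to \(\mathcal L\). Integrating yields \(\int\delta_{\alpha(s)}\,dB=\alpha_\#B\), so the analogy for \(B\) is in \(\overline{\mathcal L}\).

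I expect the main obstacle to be the conditional-distribution bookkeeping in part (i), namely identifying \(A_{-i}(\sigma',F\mid t_i)\) with the pushforward of \(A_{-i}(\sigma,\tau_\#F\mid\tau_i(t_i))\). This needs care with regular conditional probabilities, since they are defined only almost everywhere, and with measurability of the bijections. I would handle it by taking versions of the conditionals compatible with \(\tau\), defined via the joint pushforward. The epistemic step in part (ii) is mechanical once part (i) is in place.
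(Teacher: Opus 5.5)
Your proposal is correct and follows essentially the same route as the paper: part~(i) via the positive affine relation between payoffs in \(X'\) at \(t_j\) and in \(X\) at \(\tau_j(t_j)\) for \(\alpha\)-images of mixed profiles, and part~(ii) by taking \(A_j=(\alpha_j^{-1})_\#A_j'\), identifying \(A_{-j}(\sigma',F\mid t_j)\) with \((\alpha_{-j})_\#A_{-j}(\sigma,\tau_\#F\mid\tau_j(t_j))\), and lifting the pure-profile analogies to the two relevant mixed profiles via the mixture property before applying closure. Your explicit Dirac-kernel construction for the mixture step and your caveat about choosing versions of conditional distributions compatible with \(\tau\) make precise two points that the paper leaves implicit.
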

Here, \(\tau_\#F\) is the distribution over types obtained by relabeling each type through \(\tau\). Unlike strategic equivalence, which transfers equilibria under the same prior, strategic analogy generally requires the prior to change: an equilibrium of \(X'\) under \(F\) corresponds to an equilibrium of \(X\) under \(\tau_\#F\). This is because type \(t_i\) in \(X'\) takes the role of type \(\tau_i(t_i)\) in \(X\), so the distribution over types must be relabeled along with their strategies.

With these relabelings, equilibria transfer as in Theorem~\ref{th:strategic-equivalence-transfer}. Common knowledge of the original equilibrium similarly carries over if it is common knowledge that agents reason according to the relevant correspondences. Here, the analogy set is valid by Definition~\ref{def:strategicanalogy}, so again the substantive assumption in part~$(ii)$ is that agents recognize how actions and types relate across the two mechanisms and that it is common knowledge that they reason accordingly. The proof is in the appendix and uses the same reasoning closure to transfer the payoff comparisons that verify equilibrium.

Another consequence of the theorem is that, if an equilibrium of \(X\) is commonly known for \emph{every} prior, then, under the reasoning assumption in part~$(ii)$, an equilibrium of \(X'\) is commonly known for every prior as well. Here, knowledge of equilibria for all priors can be interpreted as understanding a general method for finding an equilibrium of \(X\) given a prior over types. The theorem then shows that agents who know such a method for \(X\) can also apply it to \(X'\), after appropriately relabeling the prior and the resulting strategies.

More broadly, strategic analogy can also be understood in abstraction from our epistemic framework, as a way of certifying a common strategic structure across mechanisms. Indeed, such shared structure may itself help agents transfer knowledge across mechanisms, even when their reasoning differs somewhat from our model.  In particular, the following proposition illustrates that strategic analogy preserves important strategic properties, such as the existence of dominant strategies.

\begin{definition}
A strategy \(s_i:\mathcal T_i\to \Delta(\mathcal A_i)\) is \textbf{dominant} for agent \(i\) in mechanism \(X\) if
\[
s_i(t_i)\in 
\argmax_{A_i\in \Delta(\mathcal A_i)}
U_i^X[t_i,A_i\otimes \mu_{-i}]
\qquad
\text{for every }t_i\in \mathcal T_i,\ \mu_{-i}\in \Delta(\mathcal A_{-i}).
\]
\end{definition}

\begin{proposition}\label{fact:dominant_exists_preserved}
Suppose \(X\) and \(X'\) are strategically analogous. Then \(X\)
admits dominant strategies for all agents if and only if \(X'\) does.
\end{proposition}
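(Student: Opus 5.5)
By symmetry of the statement, it suffices to show one direction: if \(X\) admits dominant strategies for all agents, then so does \(X'\). The plan is to apply Corollary~\ref{prop:strategicanalogy-affine} to obtain bijections \(\alpha,\tau\) and functions \(\kappa_i>0,\lambda_i\) with
\[
U_i^{X'}[t_i,\alpha(a)]=\kappa_i(t_i)\,U_i^X[\tau_i(t_i),a]+\lambda_i(t_i)
\]
for all pure profiles \(a\). Given a dominant strategy \(s_i\) in \(X\), the candidate in \(X'\) is \(s_i'(t_i):=(\alpha_i)_\#s_i(\tau_i(t_i))\), the same relabeling used in Theorem~\ref{th:strategic-analogy-transfer}.

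The first step is to extend the affine identity from pure profiles to mixed ones. Both sides are integrals against the mixture: integrate the pure identity against \(A\in\Delta(\mathcal A)\), using that \(U^{X'}_i[t_i,\alpha_\#A]=\int U^{X'}_i[t_i,\alpha(a)]\,dA(a)\). This gives
\[
U_i^{X'}[t_i,\alpha_\#A]=\kappa_i(t_i)\,U_i^X[\tau_i(t_i),A]+\lambda_i(t_i).
\]
It also needs \(\alpha_\#(A_i\otimes\mu_{-i})=(\alpha_i)_\#A_i\otimes(\alpha_{-i})_\#\mu_{-i}\), which holds because \(\alpha\) acts coordinatewise.

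The second step is the dominance check. Fix \(t_i\), any \(A_i'\in\Delta(\mathcal A_i')\) and any \(\mu_{-i}'\in\Delta(\mathcal A_{-i}')\). Since the \(\alpha_j\) are bijections with measurable inverses, we can write \(A_i'=(\alpha_i)_\#A_i\) and \(\mu'_{-i}=(\alpha_{-i})_\#\mu_{-i}\) with \(A_i=(\alpha_i^{-1})_\#A_i'\) and \(\mu_{-i}=(\alpha_{-i}^{-1})_\#\mu_{-i}'\). Dominance of \(s_i\) at type \(\tau_i(t_i)\) against \(\mu_{-i}\) gives
\[
U_i^X[\tau_i(t_i),s_i(\tau_i(t_i))\otimes\mu_{-i}]\ge U_i^X[\tau_i(t_i),A_i\otimes\mu_{-i}].
\]
Applying the positive affine map \(x\mapsto\kappa_i(t_i)x+\lambda_i(t_i)\) preserves this inequality. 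By the first step, the result is exactly
\[
U_i^{X'}[t_i,s_i'(t_i)\otimes\mu'_{-i}]\ge U_i^{X'}[t_i,A_i'\otimes\mu'_{-i}].
\]
Since \(A_i'\) and \(\mu'_{-i}\) were arbitrary, \(s_i'\) is dominant in \(X'\) for every agent \(i\).

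For the converse, either rerun the argument with \(\alpha^{-1},\tau^{-1}\) and the inverted affine maps, or note that strategic analogy is symmetric by the symmetry closure in Definition~\ref{def:L-closed}. The main point requiring care is the surjectivity argument that every deviation and every opponent mixture in \(X'\) is a pushforward of one in \(X\). This is where bijectivity of \(\alpha\), and not merely injectivity, is essential. Measurability of the composite strategy \(t_i\mapsto(\alpha_i)_\#s_i(\tau_i(t_i))\) follows from the standing measurability assumptions.
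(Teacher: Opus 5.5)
Your proposal is correct and follows essentially the same route as the paper. Both arguments use the affine characterization in Corollary~\ref{prop:strategicanalogy-affine} and take $s_i'(t_i)=(\alpha_i)_\#s_i(\tau_i(t_i))$. Both then pull back arbitrary $A_i'$ and $\mu_{-i}'$ through $\alpha^{-1}$, apply the positive affine map to the dominance inequality in $X$, and get the converse by interchanging $X$ and $X'$. You also justify why the interchange is legitimate, by noting that the reversed analogy set has the same closure; the paper leaves this implicit.
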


This perspective will be useful when discussing applications: when strategic analogy holds, we explain intuitively how agents' problems relate between mechanisms; when it fails, we identify which strategic features are present in one mechanism but not in the other.

\subsection{Application: single-item auctions}\label{sec:auctions_fpa}

This section applies strategic analogy to auction formats in the single-item environment from Example~\ref{example:fpa-dutch}. We first characterize which standard auction families form strategically analogous classes, and then turn to the implications for revenue-optimal auctions. Throughout, we focus on conventional highest-bid-wins formats:
\begin{definition}
A \textbf{standard auction} is a mechanism $X = (\mathcal A, \Phi)$ with the following properties:
\begin{enumerate}
\item \textbf{Common bid space.}
For every bidder \(i\), $\mathcal A_i=[\underline a,\infty)\cup\{\varnothing\}$ where \(\varnothing\) is a nonparticipation action after which a bidder never
receives the good and pays nothing.

\item \textbf{Highest bidder wins.}
Whenever at least one bidder participates, the object is allocated among the
highest participating bidders; ties are broken uniformly.

\item \textbf{Payment nonnegativity and symmetry.}
At every action profile, the vector of expected payments is nonnegative and symmetric under permutations of bidder labels.
\end{enumerate}
\end{definition}
This definition includes, among others, familiar formats such as first-price, second-price, and all-pay auctions, as well as intermediate formats in which the winner pays a weighted average of different ranked bids. We also introduce two specific formats---$k$th-price auctions with reserves and $k$th-price auctions with entry costs---that will be of interest throughout. 

\begin{definition}
    Fix $k\in\{1,\dots,N\}$. For any profile of bids, let $a^{(k)}$ denote the
$k$th-highest bid among participating bidders, with the convention that
$a^{(k)}=\underline a$ if fewer than $k$ bidders participate. Let $w(a)$ denote the winning
bidder.
\begin{enumerate}
    \item
    For $r\in\mathbb{R}_+$, the \emph{$k$th-price auction with reserve~$r$}
    is the mechanism $k\mathrm{PA}(r)=(\mathcal A^r,\Phi^r)$, where
    \[
    \mathcal A_i^r = \{\varnothing\}\cup [r,\infty),
    \qquad
    \Phi^r(a) =
    \begin{cases}
        (\varnothing,\;\mathbf 0)
            & \text{if } a_i=\varnothing \text{ for all } i, \\
        \bigl(w(a),\;
              a^{(k)}\, e_{w(a)}\bigr)
            & \text{otherwise.}
    \end{cases}
    \]
    \item
    For $c\in\mathbb{R}_+$, the \emph{$k$th-price auction with entry cost~$c$}
    is the mechanism $k\mathrm{EC}(c)=(\mathcal A^c,\Phi^c)$, where
    \[
    \mathcal A_i^c = \{\varnothing\}\cup \mathbb{R}_+,
    \qquad
    \Phi^c(a) =
    \begin{cases}
        (\varnothing,\;\mathbf 0)
            & \text{if } a_i=\varnothing \text{ for all } i, \\
        \bigl(w(a),\;
              a^{(k)}\,e_{w(a)}
              + c\displaystyle\sum_{i}\mathbf{1}\{a_i\neq\varnothing\}\;e_i
        \bigr)
            & \text{otherwise.}
    \end{cases}
    \]
    That is, every participant pays $c$, and the winner also pays her  $k$th-price payment.
\end{enumerate}
\end{definition}
Our first result characterizes the standard auctions that are strategically
analogous to a fixed \(k\)th-price auction with a reserve.

\begin{proposition}\label{prop:fpamaxclass}
A standard auction is strategically analogous to a $k$th-price auction with a reserve $r\in \mathbb{R}_+$ if and only if it is strategically equivalent to a $k$th-price auction with some reserve $r'\in \mathbb{R}_+$.
\end{proposition}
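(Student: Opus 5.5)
The plan is to prove the two directions separately. Throughout we work with the affine characterization in Corollary~\ref{prop:strategicanalogy-affine}, and we write bidder $i$'s payoff in any standard auction as $U_i[t_i,a]=t_i\,Q_i(a)-P_i(a)$, where $Q_i$ is her winning probability and $P_i$ her expected payment.

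\emph{The ``if'' direction.} First we show that $k\mathrm{PA}(r)$ and $k\mathrm{PA}(r')$ are strategically analogous. Take $\alpha_i(b)=b+(r'-r)$ on bids, $\alpha_i(\varnothing)=\varnothing$, and $\tau_i(t_i)=t_i-(r'-r)$. The shift preserves the ranking of bids and the set of participants, so the winner is unchanged. It also shifts $a^{(k)}$ by $r'-r$; this holds as well when fewer than $k$ bidders participate, because then $a^{(k)}$ is the reserve. Hence $U_i^{k\mathrm{PA}(r')}[t_i,\alpha(a)]=U_i^{k\mathrm{PA}(r)}[\tau_i(t_i),a]$, which is the corollary with $\kappa_i=1$ and $\lambda_i=0$. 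Next we note that the affine relation in the corollary composes: composing the bijections and the positive affine maps gives again bijections and positive affine maps. Strategic equivalence is the special case $\tau=\mathrm{id}$, $\kappa=1$, $\lambda=0$. So a standard auction that is equivalent to $k\mathrm{PA}(r')$ is analogous to $k\mathrm{PA}(r)$.

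\emph{The ``only if'' direction, Steps 1--2.} Let $X$ be a standard auction with
\[
t\,Q_i'(\alpha(a))-P_i'(\alpha(a))=\kappa_i(t)\bigl(\tau_i(t)Q_i(a)-P_i(a)\bigr)+\lambda_i(t).
\]
\begin{enumerate}
\item Fix three types and eliminate the type-dependent coefficients, using that the left side is affine in $t$. The aim is to obtain a fixed affine map $M$ on $\mathbb R^2$, independent of $a$, such that $(Q_i',P_i')\circ\alpha=M\,(Q_i,P_i)$ holds for all $a$. This should follow because the set of pairs $(Q_i(a),P_i(a))$ attained in $k\mathrm{PA}(r)$ is affinely spanning.
\item Use the highest-bidder-wins structure. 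In both auctions the winning probabilities take values in $\{0,1/N,\dots,1\}$, and these values are realized on matching families of profiles. An affine map sending this set into itself while matching the profile families must fix it, so $Q_i'\circ\alpha=Q_i$. Then $M$ gives $P_i'\circ\alpha=\gamma P_i+\beta Q_i+c$ with $\gamma>0$.
\end{enumerate}

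\emph{Steps 3--4.} At the profile where everyone abstains, both auctions give zero payments. Using $\alpha_i(\varnothing)=\varnothing$, which we must derive from the fact that $\varnothing$ is the unique action that never wins against any opponent profile, we get $c=0$. The resulting form $P'=\gamma P+\beta Q$ is exactly what the map $b\mapsto\gamma b+\beta$ produces on $k\mathrm{PA}(r)$. So we define $\tilde\alpha_i(b)=\alpha_i\bigl((b-\beta)/\gamma\bigr)$ for bids $b\ge r':=\gamma r+\beta$, with $\tilde\alpha_i(\varnothing)=\varnothing$, and verify $U^X[t,\tilde\alpha(\cdot)]=U^{k\mathrm{PA}(r')}[t,\cdot]$ for every type. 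This makes $X$ strategically equivalent to $k\mathrm{PA}(r')$. We still need $r'\ge 0$; this follows from payment nonnegativity in $X$ applied to a lone bidder at the lowest bid. We also need the $\alpha_i$ to be one common map across bidders; payment symmetry of $X$ should make this possible.

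\emph{Main obstacle.} We expect Steps 1--2 to be the hardest. Identifying $M$ requires showing that the induced $(Q,P)$ pairs are rich enough. Showing $Q'\circ\alpha=Q$ requires proving that $\alpha$ preserves the ordering of bids, using the fact that ``winning'' sets in both auctions are determined by rankings. Our first attempt will use profiles in which only two bidders participate: there the allocation takes only the values $0$, $1/2$, $1$, and the analogy forces $\alpha_i$ to be order-preserving and tie-preserving. We will then extend to general profiles.
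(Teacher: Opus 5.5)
Your route is essentially the paper's. The converse composes the reserve-shift analogy with an equivalence. The forward direction proceeds through: nonparticipation maps to nonparticipation, $\lambda_i\equiv 0$, constant $\kappa_i$ and affine $\tau_i$, preserved allocations, cross-bidder commonality, and a rebuilt $k\mathrm{PA}(r')$ with $r'\ge 0$ from payment nonnegativity. Two steps, however, do not go through as written.

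\textbf{Step 2.} The first row of $M$ is $Q_i'\circ\alpha=uQ_i+wP_i+z$ with $w=-(\kappa_i(t_1)-\kappa_i(t_2))/(t_1-t_2)$. No argument about affine self-maps of the set of allocation values applies until you show $w=0$, which is precisely the constancy of $\kappa_i$. It is true: at strict-win profiles $Q_i=1$ while $P_i$ ranges over an interval, yet $Q_i'$ takes finitely many values. (The paper instead subtracts two strict-win profiles with different prices to make $\kappa_i$ affine, then uses $\kappa_i>0$ on all of $\mathbb R$.) But this is the substance of the step, and you skip it.

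There is a second problem in Step 2. The value set is $\{0\}\cup\{1/m:1\le m\le N\}$, and for $N=2$ the reflection $x\mapsto 1-x$ maps it onto itself. There you need $\alpha_i(\varnothing)=\varnothing$ \emph{before} concluding $Q_i'\circ\alpha=Q_i$, not after it as in your Steps~3--4. The paper gets this first and directly. One has $U_i^X[t,\alpha_i(\varnothing),b_{-i}]=\lambda_i(t)$ for all $b_{-i}$. This is impossible for an active bid, whose allocation (the slope in $t$) is $1$ against unanimous abstention and $1/N$ against unanimous copies of itself.

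Note also that $M$ exists using just two distinct types, and order preservation of $\alpha$ is never needed. So the obstacles you anticipate are not where the work lies.

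\textbf{Cross-bidder commonality.} Your construction $\tilde\alpha_i(b)=\alpha_i((b-\beta)/\gamma)$ with one reserve $r'=\gamma r+\beta$ silently assumes that the constants $(\gamma_i,\beta_i)$ coincide across bidders, but Step~2 produces them bidder by bidder. This is exactly where payment symmetry is indispensable. Consider a highest-bid-wins auction in which winner $i$ pays $\gamma_i\pi^r(a)+\beta_i$ with bidder-specific constants. It is strategically analogous to $k\mathrm{PA}(r)$ (take $\alpha=\mathrm{id}$, $\kappa_i=\gamma_i$, $\tau_i(t)=(t-\beta_i)/\gamma_i$, $\lambda_i=0$), yet it is equivalent to no common-reserve $k$th-price auction.

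You instead assign payment symmetry the job of making $\alpha$ common, which it does not do. Commonality of $\alpha$ comes from uniform tie-breaking: if $\alpha_i^{-1}(x)\ne\alpha_j^{-1}(x)$, the $X$-profile where $i$ and $j$ both bid $x$ and the rest abstain gives each allocation $1/2$, while its preimage has a strict winner. Only with $\alpha$ common can you apply payment symmetry to two permuted strict-win profiles. This gives $\gamma_i\pi+\beta_i=\gamma_j\pi+\beta_j$ for two distinct prices $\pi$, hence common constants, as in the paper's Steps~5--6. Your final construction then needs only these common constants, not a common $\alpha$.
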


The result shows that \(k\)th-price auctions with reserves form a strategically analogous class. It also shows that this class is maximal in a natural sense: among standard auctions, any mechanism strategically analogous to a \(k\)th-price auction with a reserve is, up to relabeling of actions, itself a \(k\)th-price auction with some reserve. Thus, if a designer wants to adjust a \(k\)th-price auction while keeping it standard, the reserve price is effectively the only parameter she can vary. Indeed, the practice of fixing the auction format and varying only the reserve is widespread in industry. Sotheby's, Christie's, and eBay all commit to a standard auction format and use the reserve as their primary tuning parameter across sales.\footnote{\url{https://www.sothebys.com/en/buy-sell} \\ \url{https://www.christies.com/en/help/auction-help-library/how-do-auction-houses-work} \\ \url{https://www.ebay.com/help/selling/listings/selling-auctions/reserve-prices?id=4143}.} In online advertising, major ad exchanges use the first-price auction as their standard format, with publishers optimizing revenue by adjusting reserve prices across impressions and keywords.\footnote{See e.g.\ \url{https://support.google.com/admanager/answer/9298211}.}

While the proof is relegated to the appendix, we provide an intuition for why reserve prices preserve strategic analogy. Fix two reserves $r,r'\in\mathbb{R}_+$ and consider the bijections
\[
\alpha_i^{r\to r'}(a)=
\begin{cases}
    \varnothing & \text{if } a=\varnothing,\\
    a+(r'-r)   & \text{if } a\in [r,\infty),
\end{cases}
\qquad
\tau_i^{r'\to r}(t_i)=t_i-(r'-r).
\]
These maps shift every active bid by $r'-r$ and every value by the opposite amount. Since $\alpha^{r\to r'}$ preserves participation, bid rankings, and ties, it preserves each bidder's probability of winning. If at least $k$ bidders participate, the $k$th-highest bid changes by $r'-r$; otherwise, any winner pays the reserve, which changes by the same amount. Thus, for every bidder $i$, type $t_i$, and action profile $a$,
\begin{align*}
U_i^{k\mathrm{PA}(r')}[t_i,\alpha^{r\to r'}(a)]
&=q_i(a)\bigl(t_i-a^{(k)}-(r'-r)\bigr)\\
&=q_i(a)\bigl(\tau_i^{r'\to r}(t_i)-a^{(k)}\bigr)
=U_i^{k\mathrm{PA}(r)}[\tau_i^{r'\to r}(t_i),a],
\end{align*}
where $q_i(a)$ is bidder $i$'s common probability of winning in the two corresponding profiles. The condition of Corollary~\ref{prop:strategicanalogy-affine} is therefore satisfied with a scaling factor of $1$ and an additive constant of $0$. Note that this transformation is analogous to the case of posted-price mechanisms from Example~\ref{example:PPs}. Intuitively, when the reserve rises from $r$ to $r'$, an agent can think of her choice in the new auction as the choice of a lower type in the old one.

\begin{remark}\label{rem:payment_rule}
One might wonder whether the strategic analogy of auctions with reserves is special to the \(k\)th-price family. To answer this question, note first that the above construction of bijections $\alpha$ and $\tau$ works thanks to three properties of the $k$th-price format. First, only the winner pays; second, translating actions by the same constant preserves the winner; third, the winner's payment under reserve $r$, written as $\pi^r(a)$, satisfies:
\[
\pi^r(a)=\pi^0(a-r\mathbf 1)+r.
\]
As a result, changing the reserve can be absorbed by translating both bids and values. The action remapping \(a_i\mapsto a_i-r\) preserves the allocation rule, while the type remapping \(t_i\mapsto t_i-r\) offsets the payoff consequence of the additional \(r\) paid by the winner. The same argument can therefore be applied to other formats with these properties, like the \( 1.5\)-price auction, in which the winner pays an average of the first- and second-highest bids. Thus, \(1.5\)-price auctions with different reserves are also strategically analogous.
\end{remark}

The maximality statement in Proposition~\ref{prop:fpamaxclass} also tells us which families of auctions are \emph{not} strategically analogous:
\begin{corollary}\label{cor:different_k}
For $k\neq k'$, no $k$th-price auction with a reserve is strategically
analogous to any $k'$th-price auction with a
reserve.
\end{corollary}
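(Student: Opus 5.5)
The plan is to derive the corollary from Proposition~\ref{prop:fpamaxclass}, so that no new payoff computation is needed. Suppose, for contradiction, that $k\mathrm{PA}(r)$ and $k'\mathrm{PA}(r'')$ are strategically analogous with $k\neq k'$. Every $k'$th-price auction with a reserve is a standard auction: its bid space has the required form (after identifying $\underline a$ with the reserve), the highest participating bidder wins with uniform tie-breaking, and the payment vector is nonnegative and symmetric. Proposition~\ref{prop:fpamaxclass} then says $k'\mathrm{PA}(r'')$ is strategically equivalent to $k\mathrm{PA}(r')$ for some reserve $r'\in\mathbb R_+$. So it is enough to show that a $k'$th-price auction with a reserve is never strategically equivalent to a $k$th-price auction with a reserve when $k\neq k'$.

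To rule out strategic equivalence, I would find a property that strategic equivalence preserves but that differs between the two families. Under equivalence via $\alpha$, every agent's and type's payoff is preserved pointwise, $U_i^{X}[t_i,a]=U_i^{X'}[t_i,\alpha(a)]$. Since $u_i(t_i,(j,p))=t_i\mathbf 1\{j=i\}-p_i$, varying $t_i$ shows that each bidder's winning probability $q_i(a)$ and expected payment $p_i(a)$ are preserved. A clean invariant is whether bidder $i$'s payment, conditional on winning, can depend on her own bid. I would argue as follows. Take $k<k'$ without loss of generality and assume $N\ge 2$, together with the (implicit) requirement that both $k,k'\le N$.

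Case $k=1$. In $1\mathrm{PA}(r')$, holding $a_{-i}$ fixed with everyone else non-participating, the map $a_i\mapsto p_i(a)$ over winning bids is strictly increasing and hence injective. In the $k'$th-price auction with $k'\ge2$, when $i$ is the sole participant she always wins and pays the reserve, so her payment is constant in her own bid. Using the bijection $\alpha_{-i}$ and the preserved $q_i$, I would set up the comparison so that it contradicts preservation of $p_i$.

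The main obstacle is that $\alpha$ need not send ``everyone else abstains'' to ``everyone else abstains'', so the two configurations cannot be compared directly. To get around this, I would use a property quantified over all opponent profiles. In $k\mathrm{PA}$, there is a profile $a_{-i}$ (for instance, at least $k-1$ opponents bidding high and the rest abstaining) at which bidder $i$'s winning probability is zero for every own bid. Conversely, for every opponent profile there are own bids that win with probability one. The distinguishing statistic is then: for a fixed $a_{-i}$, consider the set of own bids that win with probability one, and ask whether the payment function on that set is constant or injective. For $k=1$ it is always injective. For $k\ge2$ it is constant whenever fewer than $k$ opponents participate, and such $a_{-i}$ exist (e.g.\ all abstain). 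This existential statement (``there exists $a_{-i}$ with constant winning payment'') is invariant under bijections $\alpha$, because $\alpha_{-i}$ is onto and $\alpha_i$ maps the always-winning set onto the always-winning set.

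Case $2\le k<k'$. I would refine the invariant by counting. Consider the largest number $m$ such that there is an $a_{-i}$ at which the payment on winning bids is constant and the set of opponents who can individually change $i$'s payment has size at most $m$. A cleaner alternative is to compare the sets of $a_{-i}$ (up to bijection) on which $i$'s winning payment is constant, and to exhibit, for one opponent $j$, a contrast: in $k\mathrm{PA}$, $j$'s bid affects $i$'s winning payment exactly when $j$ is among the top $k$. Turning this into a bijection-invariant count, for example by taking the maximal number of opponents whose bids simultaneously influence $i$'s payment while $i$ still wins, which comes out as $k-1$ versus $k'-1$, is the step I expect to require the most care.
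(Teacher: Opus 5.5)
\textbf{Verdict: genuine gap.} Your reduction is the route the paper intends. The paper gives no separate proof; it derives the corollary from the maximality in Proposition~\ref{prop:fpamaxclass}, so everything rests on showing that a $k'$th-price auction with a reserve is never strategically equivalent to a $k$th-price auction with a reserve.

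Your $k=1$ case is correct. The property ``for some $a_{-i}$, bidder $i$'s payment is constant on the set of own bids that win with probability one'' is invariant under product bijections preserving $q_i$ and $p_i$. It holds for $k'\ge 2$ (take all opponents abstaining) and fails for the first-price auction.

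The case $2\le k<k'$ is left open, and the invariant you sketch does not work. Take a profile where all $N$ bidders participate with distinct bids and $i$ bids highest. Every opponent can then unilaterally change the $k$th-highest bid while $i$ still wins: a bidder inside the top $k$ by moving her bid, a bidder outside it by jumping into the top $k$. So the count of ``influencing'' opponents is $N-1$ for every $k\ge 2$ and does not separate $k$ from $k'$. As an aside, your remark that some $a_{-i}$ gives $i$ winning probability zero for every own bid is false, since bids are unbounded above; you do not rely on it, though.

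The missing idea is that your ``main obstacle'' is not one: any equivalence $\beta$ must send $\varnothing$ to $\varnothing$.
\begin{enumerate}
\item $\varnothing$ is the only action of bidder $i$ whose winning probability is zero against every opponent profile, because any active bid wins outright when all others abstain.
\item Since $q_i$ is preserved pointwise and $\beta_{-i}$ is onto, $\beta_i(\varnothing)$ has the same property in the other auction, so $\beta_i(\varnothing)=\varnothing$. This is Step~1 in the proof of Proposition~\ref{prop:fpamaxclass}.
\item Hence $\beta$ preserves the set of participants.
\item Now take a $k\mathrm{PA}$ profile with exactly $k$ participants in which $i$ bids strictly highest. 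Bidder $i$ wins with probability one and pays the lowest participant's bid. That bid can be varied over a nondegenerate interval without changing who wins; when $k=1$ it is $i$'s own bid.
\item The image profile in the $k'\mathrm{PA}$ also has exactly $k<k'$ participants, and $i$ still wins with probability one. So she pays the reserve regardless of the variation, contradicting preservation of $p_i$.
\end{enumerate}
This single argument covers every $k<k'$, including $k=1$, and closes the gap in your proposal.
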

For $k=1$ and $k'=2$ this also follows from
Proposition~\ref{fact:dominant_exists_preserved}, since the second-price
auction is solvable in dominant strategies while the first-price auction
is not; the corollary extends this observation to any $k\neq k'$. We also get the following result:
\begin{proposition}\label{prop:entry-costs}
The family of \(k\)th-price auctions with
strictly positive entry costs, $\{k\mathrm{EC}(c):c>0\},$ is a class of strategically analogous mechanisms. However, if the zero-entry-cost
auction is included, then the resulting family of \(k\)th-price auctions with
weakly positive entry costs, $\{k\mathrm{EC}(c):c\ge 0\},$ is not a class of strategically analogous mechanisms.
\end{proposition}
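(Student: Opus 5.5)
The plan is to prove both parts through the affine characterization in Corollary~\ref{prop:strategicanalogy-affine}. For the positive part I will construct explicit bijections $(\alpha,\tau)$. For the negative part I will exhibit a property of an agent's payoff structure that strategic analogy preserves, and show that $k\mathrm{EC}(0)$ has it while $k\mathrm{EC}(c)$ with $c>0$ does not.

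\emph{Positive part.} Write $q_i(a)$ for bidder $i$'s probability of winning at profile $a$. In $k\mathrm{EC}(c)$, a participating bidder gets $U_i[t_i,a]=q_i(a)\bigl(t_i-a^{(k)}\bigr)-c$, and a non-participant gets $0$. Fix $c,c'>0$.
\begin{itemize}
\item Let $\alpha_i$ fix $\varnothing$ and scale active bids, $b\mapsto (c'/c)\,b$.
\item Let $\tau_i(t_i)=(c/c')\,t_i$.
\end{itemize}
Positive scaling preserves participation, the ranking of bids and ties, so $q_i$ is unchanged. It also maps the kth-highest bid to $(c'/c)a^{(k)}$, and the convention $a^{(k)}=\underline a=0$ is preserved because $0$ scales to $0$. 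Hence, for participants,
\[
U_i^{k\mathrm{EC}(c')}[t_i,\alpha(a)]=q_i(a)\bigl(t_i-\tfrac{c'}{c}a^{(k)}\bigr)-c'=\tfrac{c'}{c}\,U_i^{k\mathrm{EC}(c)}[\tau_i(t_i),a].
\]
For non-participants both sides are $0$. The condition of Corollary~\ref{prop:strategicanalogy-affine} therefore holds with $\kappa_i\equiv c'/c>0$ and $\lambda_i\equiv 0$.

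\emph{Invariant.} Say agent $i$ of type $t_i$ has \emph{redundant actions} in $X$ if there are two distinct $a_i\neq \hat a_i\in\mathcal A_i$ such that $U_i^X[t_i,(a_i,a_{-i})]=U_i^X[t_i,(\hat a_i,a_{-i})]$ for every pure $a_{-i}$. Suppose $X,X'$ are strategically analogous via $(\alpha,\tau)$, so that $U_i^{X'}[t_i,\alpha(a)]=\kappa_i(t_i)U_i^X[\tau_i(t_i),a]+\lambda_i(t_i)$ with $\kappa_i>0$. Because $\alpha_{-i}$ is a bijection onto $\mathcal A'_{-i}$, type $t_i$ has redundant actions $\alpha_i(a_i),\alpha_i(\hat a_i)$ in $X'$ if and only if type $\tau_i(t_i)$ has redundant actions $a_i,\hat a_i$ in $X$. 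Since $\tau_i$ is a bijection, the property ``some type of agent $i$ has redundant actions'' is preserved by strategic analogy.

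\emph{Negative part.} In $k\mathrm{EC}(0)$, type $t_i=0$ bidding $0$ has redundant actions with $\varnothing$. Whenever that bid wins, all participating bids are $0$, so $a^{(k)}=0$. The payoff is therefore $0$ against every opponent profile, the same as abstaining.

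To show that $k\mathrm{EC}(c)$ with $c>0$ has no redundant actions for any type, take $N\ge 2$ and check two kinds of pairs.
\begin{itemize}
\item $\varnothing$ versus an active bid $b$: let one opponent bid above $b$ and the others abstain. The bid $b$ loses and pays $c$, giving $-c\neq 0$.
\item Two active bids $b<b'$: let $k-1$ opponents bid a common $d\in(b,b')$ and the rest abstain. This is feasible since $k\le N$. Then $b'$ wins with exactly $k$ participants, so the kth-highest bid is $d$, and $b'$ gets $t_i-d-c$. Meanwhile $b$ loses and gets $-c$. Since $t_i-d-c$ varies with $d$ and $-c$ does not, the two payoffs cannot coincide for every $d$. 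When $k=1$ the same profile gives $t_i-b'-c$ versus $-c$; comparing instead with all opponents abstaining gives $t_i-b-c\ne t_i-b'-c$.
\end{itemize}
Hence $k\mathrm{EC}(0)$ is not strategically analogous to any $k\mathrm{EC}(c)$ with $c>0$.

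The main obstacle is identifying this invariant. The natural first candidate, dominant-strategy existence via Proposition~\ref{fact:dominant_exists_preserved}, does not work, because for $k=2$ both $k\mathrm{EC}(0)$ and $k\mathrm{EC}(c)$ may admit dominant strategies. Once the redundant-action property is chosen, the remaining step that needs care is the case analysis over $k$ in the $b<b'$ comparison, which I would check carefully but expect to be routine.
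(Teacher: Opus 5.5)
Your proposal is correct. The positive part is exactly the paper's construction: active bids scaled by $c'/c$, types by $c/c'$, with $\kappa_i\equiv c'/c$ and $\lambda_i\equiv 0$. Your remark that the convention $a^{(k)}=\underline a=0$ survives the scaling spells out a detail the paper leaves implicit.

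For the negative part you take a genuinely different route. The paper argues in one line from Proposition~\ref{prop:fpamaxclass}:
\begin{itemize}
\item $k\mathrm{EC}(0)$ coincides with $k\mathrm{PA}(0)$.
\item Each $k\mathrm{EC}(c)$ with $c>0$ is a standard auction, so analogy to $k\mathrm{PA}(0)$ would force strategic equivalence to some $k\mathrm{PA}(r')$.
\item Equivalence fails because a participating loser has $q_i=0<p_i=c$, which never occurs in any $k\mathrm{PA}(r')$, while equivalence must match $q_i$ and $p_i$.
\end{itemize}
That route is short and places entry-cost auctions outside the maximal reserve class. However, it leans on the long proof of Proposition~\ref{prop:fpamaxclass}, and it leaves the non-equivalence claim unverified.

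Your invariant is that some type has two distinct actions with identical payoffs against every pure opponent profile. It is preserved by the affine characterization exactly as you argue, in both directions, since $\kappa_i>0$ and $\alpha,\tau$ are bijections. Your case checks for $c>0$ are right. The result is a self-contained, elementary proof that also makes the paper's informal intuition precise. With zero entry cost, the type at the reserve is indifferent between bidding the reserve and abstaining whatever others do. A positive entry cost makes every pair of actions distinguishable for every type.

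Two small slips, neither affecting the argument:
\begin{itemize}
\item For $k=1$, the profile with $k-1=0$ opponents at $d$ is the all-abstain profile, where $b$ also wins. So the clause ``$t_i-b'-c$ versus $-c$'' is misstated. The all-abstain comparison $t_i-b-c\neq t_i-b'-c$, which you also give, settles $k=1$ on its own.
\item Your aside on dominant strategies is backwards for $k=2$. $2\mathrm{EC}(0)$ is a second-price auction with zero reserve and has dominant strategies. In $2\mathrm{EC}(c)$ with $c>0$, a type $t>c$ must enter against abstaining opponents but abstain against a single opponent bidding above $t$, so it has no dominant strategy. Hence Proposition~\ref{fact:dominant_exists_preserved} does separate the families when $k=2$. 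It fails only for $k\neq 2$, which is where an invariant like yours is needed.
\end{itemize}
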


We first discuss the negative result. To see why it holds, note that a $k$th-price auction with an entry cost of zero is also a $k$th-price auction with a \emph{reserve} of zero, and therefore Proposition \ref{prop:fpamaxclass} applies to it. Since $k$th-price auctions with entry costs $c>0$ are standard but not strategically equivalent to $k$th-price auctions with reserves, they cannot be strategically analogous to the one with $c=0$.

Consequently, $k$th-price auctions with reserves are more alike in their
strategic structure than $k$th-price auctions with entry costs. This
observation may not be obvious a priori. After all, both families contain
the standard $k$th-price auction as a special case. It turns out, however, that 
introducing a strictly positive reserve to this baseline format changes the underlying strategic problem less than introducing a strictly positive entry cost. To get an intuition for this difference, consider how the agent's decision about whether to enter the auction changes as these parameters vary. 

First, note that the entry decision is just as simple with a reserve as it is without one: when the reserve is $r$, nonparticipation weakly dominates every active bid for types $t_i<r$, while bidding $r$ weakly dominates nonparticipation for types $t_i>r$. A strictly positive entry cost, by contrast, makes the participation decision strategically complex. Since
an agent who enters and loses still pays~$c$, entry is worthwhile only if
the bidder expects a sufficiently high chance of winning, which in turn means that the entry decision depends on her beliefs about how likely other bidders are to enter and how aggressively they are likely to bid. In this sense, making the entry cost strictly positive introduces a qualitatively new strategic consideration, which breaks analogy to the baseline case. 

Nevertheless, $k$th-price auctions with \emph{strictly positive} entry costs are strategically analogous to one another. Although belief-dependent entry arguably makes auctions with entry costs \(c>0\) harder than reserve-price auctions, this underscores that strategic analogy concerns the \emph{structural similarity} of strategic problems, not their \emph{difficulty}---auctions with strictly positive entry costs may be harder to play, but the reasoning needed to play in one transfers to any other.

\subsubsection{Revenue-maximizing auctions} One might ask whether a designer facing many possible bidder distributions can restrict attention to a single strategically analogous class of auctions without sacrificing revenue. We therefore consider a revenue-maximizing designer facing symmetric IPV priors \(F\), who may select a different mechanism for each prior. In each case, the mechanism must allow agents not to participate, that is, contain an action which, when taken, does not charge the agent and does not give her the good. For a given prior, we call an auction \emph{optimal} if it admits an equilibrium attaining the highest expected revenue among such mechanisms. We consider priors \(F\) that are absolutely continuous, have support \([\underline v,\overline v]\) with \(0\leq\underline v<\overline v<\infty\), and have a positive density almost everywhere on \((\underline v,\overline v)\).  

If we restrict attention to regular priors in the sense of \cite{myerson1981optimal}, there does indeed exist a single strategically analogous class containing an optimal auction for each such prior.\footnote{A distribution \(F\) with density \(f\) is regular if its virtual value \(\varphi_F(v)=v-\frac{1-F(v)}{f(v)}\) is weakly increasing on its support.}
In fact, a second-price auction with an appropriately chosen common reserve is optimal, and these auctions form a strategically analogous class by Proposition~\ref{prop:fpamaxclass}. Once irregular symmetric priors are allowed, however, this conclusion no longer holds under mild restrictions on the class of auctions. To state the result, we first introduce a technical condition on payments.

\begin{definition}
An auction is \textbf{payment-regular} if, for every bidder \(i\),
it satisfies the following properties:
\begin{enumerate}
    \item \textbf{Own-bid continuity.}
    Let \(\pi_w(a)\) and \(\pi_\ell(a)\) denote bidder \(i\)'s
    expected payments conditional on winning and losing. For every \(a_{-i}\), the functions
    \(\pi_w(\cdot,a_{-i})\) and \(\pi_\ell(\cdot,a_{-i})\) are
    continuous on the sets of bids with which bidder \(i\) can
    win and lose, respectively.

\item \textbf{Joint local boundedness.}
For every fixed set of participating bidders, bidder \(i\)'s expected
payment \(p_i(a)\) is locally bounded as a function of the participating
bids.
\end{enumerate}
\end{definition}

These conditions hold for all the auction formats discussed in this paper, including \(k\)th-price auctions with reserves and all-pay auctions, and are preserved by adding a fixed entry fee.

\begin{proposition}\label{prop:no-class-all-myerson}
Let \(\mathcal C\) be a strategically analogous class of symmetric mechanisms, each offering every bidder a nonparticipation action. If \(\mathcal C\) contains a standard, payment-regular auction, then it cannot contain mechanisms attaining optimal revenue in symmetric equilibrium for every symmetric IPV prior.
\end{proposition}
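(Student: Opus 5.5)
The plan is to argue by contradiction. We fix a standard, payment-regular auction $X_0\in\mathcal C$ and construct an irregular symmetric IPV prior $F^\ast$ whose optimal mechanism must \emph{iron}. We then show that no mechanism strategically analogous to $X_0$ can reproduce the resulting allocation in a symmetric equilibrium.

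\textbf{Step 1: choose the prior.} We take $F^\ast$ with support $[\underline v,\overline v]$ whose virtual value $\varphi_{F^\ast}$ is positive but strictly decreasing on some subinterval. Myerson's ironing then produces an interval $I=[v_1,v_2]$, with $v_1$ strictly above the optimal exclusion threshold, on which the ironed virtual value is constant and strictly positive. Because ties are broken uniformly, any symmetric revenue-optimal mechanism must have the following form a.e.: the interim allocation $Q(t)$ is constant on $I$, strictly above its value just below $v_1$ and strictly below its value just above $v_2$, and two bidders whose types both lie in $I$ each win with probability strictly between $0$ and $1$. By revenue equivalence, the interim payment is then also constant on $I$. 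So in any optimal symmetric equilibrium $\sigma'$ of a mechanism $X'\in\mathcal C$, every type in $I$ obtains the same interim allocation and payment, and these types pool in the sense that ties among them occur with positive probability.

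\textbf{Step 2: transfer to $X_0$.} Since $X'$ and $X_0$ are strategically analogous through some $(\alpha,\tau)$, Theorem~\ref{th:strategic-analogy-transfer}$(i)$, applied in the reverse direction, shows that $\sigma_i(s):=(\alpha_i^{-1})_\#\sigma'_i(\tau_i^{-1}(s))$ is a symmetric equilibrium of $X_0$ under $\tau_\#F^\ast$. By Corollary~\ref{prop:strategicanalogy-affine}, preferences over mixed profiles are preserved type by type. It follows that the types in $\tau(I)$, a set of positive $\tau_\#F^\ast$-mass, all play actions in $\alpha^{-1}(S)$, where $S$ is the set of actions used by types in $I$ in $X'$. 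Moreover, the equilibrium winning probability of each such action is interior.

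\textbf{Step 3: contradiction inside $X_0$.} Because $X_0$ is highest-bid-wins with uniform tie-breaking, the interim winning probability of a participating bid $b$ is weakly increasing in $b$. It is strictly larger for $b'>b$ whenever a positive mass of opponents bid in $[b,b')$. Pooling with interim winning probability in $(0,1)$ then forces an atom at some bid $b$, so ties at $b$ occur with positive probability. We apply the standard undercutting argument: bid $b+\varepsilon$ instead. By own-bid continuity and joint local boundedness, payments conditional on winning and losing change by $o(1)$ as $\varepsilon\to0$. The winning probability, however, jumps by a fixed positive amount, because ties at $b$ are broken in the deviator's favor. This deviation is strictly profitable as long as the deviating type's value strictly exceeds its conditional winning payment at $b$. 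We then show that this holds for some type in the pooled set. Individual rationality, which follows from the nonparticipation action, yields weak surplus for every type. Strict surplus for all but the lowest pooled type follows from the positive ironed virtual value, via the envelope formula: interim utility is strictly increasing across $I$ with slope $Q>0$.

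\textbf{Main obstacle.} The delicate point is Step 3 when the atom sits at the bottom bid $\underline a$, or when pooled types are spread over several bids with no opponents' mass between them. In that configuration, raising one's bid might also raise the payment discontinuously through entry. I would first handle this by choosing $F^\ast$ so that $v_1$ lies strictly above the exclusion cutoff, so that types just below $I$ participate with strictly lower allocation. By monotonicity of equilibrium bids, those lower types bid at or below the pooled atom. This pins the atom strictly above the bids of participating lower types, where the payment-continuity hypotheses apply. A secondary care point is that $\tau$ is only a measurable bijection. The argument should therefore rely solely on preference preservation and the mass of $\tau(I)$, never on order or continuity properties of $\tau$.
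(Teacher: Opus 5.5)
There is a genuine gap in Step 3: your condition for the upward deviation to be profitable is wrong once $N\ge3$. Let $T$ be the event that the highest opposing bid equals the atom $b$, let $q$ be your tie-winning probability there, and let $\Delta=\pi_w-\pi_\ell$ be the win-minus-lose payment difference at $b$. By own-bid continuity, the limiting gain from raising to $b+\varepsilon$ is $\mathbb E[\mathbf 1_T(1-q)(t-\Delta)]$, and the limiting gain from lowering to $b-\varepsilon$ is $\mathbb E[\mathbf 1_T q(\Delta-t)]$. So bidding $b$ survives both deviations for every $t\in[D^-,D^+]$, where $D^-$ and $D^+$ are the $q$- and $(1-q)$-weighted averages of $\Delta$ over $T$.

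When losers pay nothing, $D^-$ is exactly the expected payment conditional on winning a tie at $b$. Types in $(D^-,D^+]$ therefore satisfy your condition, yet they gain nothing by raising. Moreover, $D^+>D^-$ is typical. In a third-price auction with three bidders, which is standard and payment-regular, a two-way tie at $b$ costs the winner the third-highest bid, which lies below $b$, while a three-way tie costs $b$. Three-way ties also carry the larger relative weight $(1-q)/q$ in $D^+$.

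Hence one bid atom can rationalize a nondegenerate interval of types, each with strictly positive surplus on $T$, and your envelope argument yields no contradiction. Only for $N=2$, where $q\equiv1/2$ on $T$ and $D^+=D^-$, does your argument close. The missing idea is quantitative. The paper shows $D^+-D^-\le C_i\rho$, where $\rho$ is the atom's mass, using two ingredients:
\begin{enumerate}
\item a uniform payment bound (Step 7);
\item the fact that multi-way ties have probability $O(\rho^2)$, against $\Theta(\rho)$ for two-way ties.
\end{enumerate}
It then builds a prior (Lemma~\ref{lemma:distconstr}) with infinitely many ironing intervals $I_m$ satisfying $(h_m-\ell_m)/\rho_m\to\infty$. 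Eventually the image of $I_m$ is too long for its atom to support. This is why a single ironing interval, as in your Step 1, does not suffice for this line of argument.

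Step 2 also asserts more than strategic analogy gives you. Corollary~\ref{prop:strategicanalogy-affine} relates payoffs only through arbitrary bijections and type-dependent $\kappa_i,\lambda_i$. By itself it does not imply any of the following:
\begin{enumerate}
\item that the relabeled actions have interior interim winning probabilities in $X_0$;
\item that the types in $\tau(I)$ share a common allocation there, which you need to force an atom;
\item that opponents' relabeled bids are i.i.d., which needs the $\alpha_i$ to coincide across bidders.
\end{enumerate}
The paper derives all of this from the structure of the environment. Nonparticipation must map to nonparticipation, so $\lambda_i\equiv0$. A linear-independence argument, which itself uses that ironed types have interior allocations, gives $q_i^1=q_i^0\circ\alpha$, constant $\kappa_i$, and affine $\tau_i(\theta)=c_i\theta+d_i$. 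Equal allocations at unanimous profiles then force a common $\alpha$.

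Affinity of $\tau_i$ is precisely what converts the length $h_m-\ell_m$ into $c_i(h_m-\ell_m)$ in $X_0$. Your plan to use no structure of $\tau$ therefore cannot deliver the needed comparison. It is also in tension with your appeal, in the final paragraph, to monotonicity of bids across $\tau$-images of types.
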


To understand the result, recall from \cite{myerson1981optimal} that for irregular priors, the symmetric revenue-maximizing auction sometimes allocates the object at random among a subset of bidders. Thus, a strategically analogous class that implements revenue-optimal auctions for all
symmetric IPV priors would have to contain mechanisms whose equilibria exhibit
such randomization, as well as mechanisms whose equilibria assign the object to the
highest bidder above a reserve. Proposition~\ref{prop:no-class-all-myerson} shows that these two allocation structures are generally incompatible within any strategically analogous class that contains a well-behaved highest-bid-wins auction. Indeed, this negative result provides a reason why a designer may avoid such randomized auction formats in practice even when she knows that she faces an irregular prior: bidders familiar with standard auctions may lack the strategic understanding needed to play an equilibrium involving randomization, making such formats less attractive despite theoretically higher expected revenue.

The requirement that \(\mathcal C\) contain a standard, payment-regular auction serves to rule out pathological constructions: without some restriction of this kind, one could build mechanisms whose message spaces are engineered to select arbitrary equilibrium allocations. Note also that the result does not require every mechanism in \(\mathcal C\) to be a standard auction, but only requires that \(\mathcal C\) contain one such auction. Thus, the impossibility arises as soon as the designer sometimes wants to use a more conventional auction format.

It is also worth noting that the positive result does not generally extend to asymmetric regular priors. When bidders draw values from different regular distributions \(F_1,\dots,F_N\), the optimal auction ranks them by \emph{virtual values} rather than by values alone. This typically requires bidder-specific reserves, which can break the translation argument used here. To illustrate, consider two bidders and two first-price auctions $\mathrm{FPA}(r_1,r_2)$ and $\mathrm{FPA}(r_1',r_2')$ with $r_1' - r_1 \neq r_2' - r_2.$ The natural candidate bijections are the bidder-specific analogues of the translations used in Proposition~\ref{prop:fpamaxclass}:
\[
\alpha_i(b_i) = b_i + (r_i' - r_i), \qquad \tau_i(t_i) = t_i - (r_i' - r_i).
\]
These preserve each bidder's payoff conditional on winning, just as in the common-reserve case. However, when the shifts $r_i' - r_i$ differ across bidders, they do not preserve who actually wins. Indeed, this would require:
\[
b_1 > b_2 \quad \Longleftrightarrow \quad b_1 + (r_1' - r_1) > b_2 + (r_2' - r_2)
\]
for all feasible bid pairs, which fails whenever $r_1' - r_1 \neq r_2' - r_2$.

\subsection{Application: procurement auctions}\label{sec:auctions_procurement}
We now apply our framework to a procurement setting. A buyer awards a contract to one of \(N\) sellers based on a score over price and quality. We consider two common price-scoring rules---a linear rule and a ratio rule---and show that changing the weight on price preserves strategic analogy under the linear rule but not the ratio rule.\footnote{See \citet{dini2006scoring} for a discussion of linear and lowest-bid ratio scoring rules in procurement.}

\paragraph{Environment.} A buyer awards a contract to one of \(N\) sellers. Each seller \(i\) has a private type \(t_i = c_i \in \mathcal T_i = \mathbb R\) specifying her cost. An outcome \(y=(w,p)\in\mathcal Y=\{1,\dots,N\}\times\mathbb R\) specifies the winning seller and the payment. Seller \(i\)'s utility is
\[
u_i\bigl(c_i,(w,p)\bigr)
=
\begin{cases}
    p - c_i & \text{if } w = i,\\
    0 & \text{if } w \neq i.
\end{cases}
\]
\paragraph{First-score auctions.} In the mechanisms we study, each seller submits a price \(b_i\). In addition to price, each seller \(i\) has a fixed and publicly known quality score \(q_i > 0\) which reflects observable characteristics such as past performance or delivery capacity. Given a weight \(\lambda\in(0,1)\) and a price-scoring rule \(S_p\), seller \(i\)'s total score is
\[
S_\lambda(b)_i = \lambda\,S_p(b_i,b_{-i}) + (1-\lambda)\, q_i.
\]
The contract is awarded to one of the highest-scoring sellers, according to a common fixed tie-breaking rule. The winner is paid the price she submitted. Formally, letting $w(b)$ denote the seller selected as the winner from among the highest scorers, the outcome rule is $\Phi(b) = \bigl(w(b),\, b_{w(b)}\bigr).$ 

To complete the specification of a first-score auction, it remains to choose the price-scoring rule \(S_p\). We consider two such rules, each giving rise to a family of mechanisms indexed by the weight \(\lambda\):
\begin{enumerate}
    \item The \emph{linear-score} auction \(L(\lambda)\), where \(\mathcal A_i^{L(\lambda)}=\mathbb R\) and \(S_\lambda(b)_i=-\lambda b_i+(1-\lambda)\, q_i\).
    \item The \emph{ratio-score} auction \(R(\lambda)\), where \(\mathcal A_i^{R(\lambda)}=\mathbb{R}_{++}\) and \(S_\lambda(b)_i=\lambda\frac{\min_j b_j}{b_i}+(1-\lambda)\, q_i\).
\end{enumerate}
We ask whether varying \(\lambda\) preserves strategic analogy within each family. As it turns out, the linear family forms a class of strategically analogous mechanisms, while the ratio family does not.

\begin{proposition}\label{prop:procurement-scores}
The linear family \(\{L(\lambda):\lambda\in(0,1)\}\) is a class of strategically analogous mechanisms. However, unless the fixed quality scores \(q_i\) are all
equal, the ratio family \(\{R(\lambda):\lambda\in(0,1)\}\) is not a class of
strategically analogous mechanisms.
\end{proposition}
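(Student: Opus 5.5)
The plan is to verify the positive part directly through Corollary~\ref{prop:strategicanalogy-affine}, using a translation construction like the one for reserve prices after Proposition~\ref{prop:fpamaxclass}. For the negative part, I would find a strategic feature that any valid analogy must preserve, and show that this feature differs between two ratio-score auctions with different weights.

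\emph{Linear family.} Fix \(\lambda,\lambda'\in(0,1)\) and write \(\beta=(1-\lambda)/\lambda\) and \(\beta'=(1-\lambda')/\lambda'\). Dividing the score by \(\lambda>0\) shows that in \(L(\lambda)\) the set of highest scorers is the set of maximizers of \(-b_i+\beta q_i\). I would take
\[
\alpha_i(b_i)=b_i+(\beta'-\beta)q_i,
\qquad
\tau_i(c_i)=c_i-(\beta'-\beta)q_i .
\]
Since \(-\alpha_i(b_i)+\beta' q_i=-b_i+\beta q_i\), the set of highest scorers is the same at \(b\) in \(L(\lambda)\) and at \(\alpha(b)\) in \(L(\lambda')\). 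The common fixed tie-breaking rule therefore selects the same winner \(w\). The winner's payment rises by \((\beta'-\beta)q_w\), so
\[
U_i^{L(\lambda')}[c_i,\alpha(b)]=\mathbf 1\{w=i\}\bigl(b_i+(\beta'-\beta)q_i-c_i\bigr)=U_i^{L(\lambda)}[\tau_i(c_i),b].
\]
This is the condition of the corollary with \(\kappa_i\equiv1\) and \(\lambda_i\equiv0\), and both maps are bijections of \(\mathbb R\).

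\emph{Ratio family: the invariant.} The property I would use is \emph{null agents}. Call agent \(i\) null in \(X\) if \(U_i^X[t_i,a]=0\) for every type and every action profile, i.e.\ she never wins. The claim is that strategic analogy preserves having a null agent.

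Suppose agent \(i\) is null in \(X\) and the affine relation of Corollary~\ref{prop:strategicanalogy-affine} holds. Then \(U_i^{X'}[c,\alpha(a)]=\lambda_i(c)\) for all \(a\). Because \(\alpha\) is onto, agent \(i\)'s payoff in \(X'\) is constant across all profiles, for each type \(c\). If instead agent \(i\) wins at some profile \(b'\) of a ratio auction, her payoff there is \(b_i'-c\), while some profile gives her payoff \(0\) (for instance, one where she loses). Choosing \(c\neq b_i'\) breaks constancy. Hence \(X'\) must also have agent \(i\) null. The same argument applies in the reverse direction, since the relation can be inverted.

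\emph{Ratio family: exhibiting the difference.} Suppose the \(q_j\) are not all equal, and pick \(i\) with \(q_i=q_{\min}<q_{\max}=q_j\). Since \(\min_k b_k/b_i\le 1\),
\[
S(b)_i\le \lambda+(1-\lambda)q_{\min},
\qquad
S(b)_j>(1-\lambda)q_{\max}.
\]
So if \(\lambda\le(1-\lambda)(q_{\max}-q_{\min})\), agent \(i\) never wins and is null in \(R(\lambda)\); such \(\lambda\) exist near \(0\).

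Now take \(\lambda\) close to \(1\), so that \(\lambda>(1-\lambda)(q_{\max}-q_{\min})\). Let \(i\) bid \(b_i=1\) and every other seller bid \(M\). Then \(i\) scores \(\lambda+(1-\lambda)q_i\), while each other seller scores \(\lambda/M+(1-\lambda)q_k\). For \(M\) large, \(i\) is the strict unique top scorer, so she is not null in \(R(\lambda)\).

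By the invariant, these two ratio auctions are not strategically analogous, so the ratio family is not a class of strategically analogous mechanisms. The step needing most care is the invariant: it must rely only on the affine relation and the surjectivity of \(\alpha\), and must not assume that \(\alpha\) respects the winner sets.
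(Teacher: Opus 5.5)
Your proposal is correct and follows essentially the same route as the paper. For the linear family, your shift $(\beta'-\beta)q_i$ equals the paper's $(1/\lambda_2-1/\lambda_1)q_i$, so the action and cost translations are identical. For the ratio family, you use the same argument as the paper. A low-quality seller can never win when $\lambda\le D/(1+D)$, where $D=q_{\max}-q_{\min}$. By surjectivity of $\alpha$, she would then have to earn a constant payoff in a ratio auction with larger $\lambda$, yet there she wins at some profiles and loses at others.
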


Intuitively, the linear family is strategically analogous because changing \(\lambda\) can be absorbed by seller-specific translations of bids and costs. After the bid relabeling, all sellers' scores end up rescaled by the same positive constant, leaving the winner's identity unchanged. The corresponding change in the winner's payment is then exactly offset by the cost relabeling.

To see why the ratio family fails, notice that when \(\lambda\) is small, a sufficiently low-quality seller can never win: the price score is bounded above by \(\lambda\), which in certain cases cannot outweigh other participants' quality bonuses. As \(\lambda\) increases, however, the set of bidders who have a chance of winning changes. But a mechanism in which some seller never wins cannot be strategically analogous to one in which that seller wins for some bid profiles, since the affine payoff condition would force her payoff in the latter to be constant across all bid profiles, which it is not.

More generally, in the linear family, both a seller's profit from winning and her score, which determines whether she wins, depend directly on her bid. A change in $\lambda$ can therefore be offset by shifting each seller's bid and cost by the same seller-specific amount. In the ratio family, by contrast, a seller's profit from winning depends directly on her bid, while her price score depends on its value relative to the lowest bid. Adjusting bids to preserve score rankings therefore need not preserve profits in the same way, meaning that changing $\lambda$ can qualitatively alter the seller's bidding problem.

\subsection{Application: input versus output pricing}\label{sec:pricing-analogy}

In the last application, we consider a provider offering a service using limited processing capacity whose effectiveness varies over time. This setting captures a range of industries. A landowner leases acres whose yield depends on rainfall and soil conditions; a shipping line sells container slots whose value to shippers depends on routing and port congestion; a consulting firm sells associate time whose productivity depends on relevant institutional knowledge.

We compare two ways of pricing such services. Under \emph{output pricing}, the provider charges based on the realized outcome, such as crop yield or the date on which cargo is delivered. Under \emph{input pricing}, the provider charges for the resources the customer uses---per acre, per container slot, or per billable hour---with a tariff that does not vary with the current productivity of those resources. We show that input pricing, combined with proportional rationing of overdemand, generates a class of strategically analogous mechanisms as provider efficacy varies. By contrast, output pricing generically does not.

Such input-based pricing is common in a range of settings. In community solar programs, for example, subscribers often pay for shares of installed capacity, even though the electricity generated by that capacity varies with environmental conditions.\footnote{\url{https://docs.nlr.gov/docs/fy23osti/86242.pdf}} Professional-services firms often charge per billable hour of staff time rather than based on concrete deliverables. A further example comes from U.S.\ agricultural land markets: the most common arrangement between a landowner and a tenant farmer is a \emph{cash rent} lease, under which the tenant pays based on acreage regardless of how much the land produces in that season.\footnote{\url{https://www.canr.msu.edu/news/farm_land_rental_agreements_and_arrangements}} An alternative lease model, called a \emph{crop share} lease, corresponds to output pricing. There, the landlord receives a fraction of the realized harvest, so payments vary directly with productivity.

\paragraph{Environment.}
A service provider has a resource capacity (normalized to \(1\)) that she can allocate across \(N\geq 2\) buyers. This capacity has efficacy $e\in \mathbb{R}_{++}$, that is, when capacity $z_i$ is devoted to serving agent $i$, she receives the output $x_i = e \, z_i$. Formally, the environment is $\left(\mathcal I,\mathcal Y,\mathcal T,\mathcal U\right),$ where 
\[
\mathcal I=\{1,\dots,N\},
\qquad
\mathcal T_i=\mathcal V,
\qquad
\mathcal Y=\mathbb R_+^{\mathcal I}\times\mathbb R^{\mathcal I},
\]
and \(\mathcal V\) is the set of all weakly increasing, weakly concave
functions \(v:\mathbb R_+\to\mathbb R\) satisfying \(v(0)=0\). The agent's type is therefore a \emph{function} specifying her value for different levels of delivered output. An outcome
\(y=(x,p)\in\mathcal Y\) specifies the amount of output \(x_i\) allocated to
each buyer and the payment \(p_i\) made by each buyer. Each buyer's Bernoulli
utility function is
\[
u_i(v_i,y)=v_i(x_i)-p_i.
\]
We now introduce two families of mechanisms for this environment. Both will be defined in terms of a \emph{regular price scheme}: a function \(P:\mathbb R_+\to\mathbb R_+\) that is continuous, strictly increasing, strictly convex, and satisfies \(P(0)=0\).

\paragraph{Input-pricing mechanisms.}
Fix a regular price scheme \(P\); a buyer who uses capacity \(z_i\) pays \(P(z_i)\), regardless of the current efficacy. For each efficacy level \(e>0\), define the mechanism
\(X^{\mathrm{in}}_e=([0,1]^{\mathcal I},\Phi^{\mathrm{in}}_e)\),
where each buyer chooses a requested capacity $a_j\in [0,1]$. If orders exceed the provider's capacity, it is rationed proportionally; thus, given an action profile \(a\), buyer \(i\) is allocated capacity
\[
z_i(a):=
\begin{cases}
a_i\,\min\left\{1,\,1\big/\textstyle\sum_{j\in\mathcal I}a_j\right\},
& \text{if } \sum_{j\in\mathcal I}a_j>0,\\
0, & \text{otherwise}.
\end{cases}
\]
The outcome rule is $\Phi^{\mathrm{in}}_e(a)
=
\bigl(x(a,e),\,p(a)\bigr),$ where $x_i(a,e)=e\,z_i(a)$ and $p_i(a)=P\bigl(z_i(a)\bigr).$ For every buyer \(i\), type \(v_i\in\mathcal T_i\), and pure action profile
\(a\in[0,1]^{\mathcal I}\),
\[
U_i^{X^{\mathrm{in}}_e}[v_i,a]
=
v_i\bigl(e\,z_i(a)\bigr)-P\bigl(z_i(a)\bigr).
\]

\paragraph{Output-pricing mechanisms.}
The tariff is defined over units of output rather than over capacity. Fix a regular price scheme \(P\). For each efficacy level \(e>0\), define the mechanism
\(X^{\mathrm{out}}_e=([0,e]^{\mathcal I},\Phi^{\mathrm{out}}_e)\),
where each buyer requests output $a_i\in [0,e]$. Since capacity may still need to be rationed, given an action profile \(a\), buyer \(i\)
receives output
\[
x_i(a,e):=
\begin{cases}
a_i\,\min\left\{1,\,e\big/\textstyle\sum_{j\in\mathcal I}a_j\right\},
& \text{if } \sum_{j\in\mathcal I}a_j>0,\\
0, & \text{otherwise}.
\end{cases}
\]
The outcome rule is $\Phi^{\mathrm{out}}_e(a)
=
\bigl(x(a,e),\,p(a,e)\bigr),$ where $p_i(a,e)=P\bigl(x_i(a,e)\bigr).$ Thus, for every buyer \(i\), type \(v_i\in\mathcal T_i\), and pure action profile
\(a\in[0,e]^{\mathcal I}\),
\[
U_i^{X^{\mathrm{out}}_e}[v_i,a]
=
v_i\bigl(x_i(a,e)\bigr)-P\bigl(x_i(a,e)\bigr).
\]

\begin{proposition}\label{prop:pricing-analogy}
For any regular price scheme \(P\), the family of input-pricing mechanisms
\(\{X^{\mathrm{in}}_e:e>0\}\) is a class of strategically analogous
mechanisms. The family of output-pricing mechanisms
\(\{X^{\mathrm{out}}_e:e>0\}\) is a class of strategically analogous
mechanisms if and only if \(P\) is isoelastic, that is,
\(P(x)=Ax^\rho\) for every \(x\geq0\), for some \(A>0\) and \(\rho>1\).
\end{proposition}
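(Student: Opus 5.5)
For the positive claims I would verify the condition of Corollary~\ref{prop:strategicanalogy-affine} directly, with explicit bijections. For \emph{input pricing}, fix \(e,e'>0\). I would take \(\alpha_i\) to be the identity on \([0,1]\), so that \(z_i(a)\) is the same in both mechanisms, and set \(\tau_i(v)(x):=v\bigl(\tfrac{e'}{e}x\bigr)\). This \(\tau_i\) maps \(\mathcal V\) bijectively onto itself: rescaling the argument preserves monotonicity, concavity and \(v(0)=0\), and its inverse rescales by \(e/e'\). Then
\[
U_i^{X^{\mathrm{in}}_{e'}}[v,a]=v\bigl(e'z_i(a)\bigr)-P\bigl(z_i(a)\bigr)=U_i^{X^{\mathrm{in}}_{e}}[\tau_i(v),a],
\]
so the corollary holds with \(\kappa_i\equiv1\) and \(\lambda_i\equiv0\).

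For \emph{output pricing} with \(P(x)=Ax^\rho\), let \(\beta:=e'/e\) and take \(\alpha_i(a_i)=\beta a_i\), a bijection from \([0,e]\) to \([0,e']\). Proportional rationing is homogeneous of degree one in \((a,e)\), so \(x_i(\alpha(a),e')=\beta\,x_i(a,e)\). I would set \(\tau_i(v)(x):=\beta^{-\rho}v(\beta x)\), which again lies in \(\mathcal V\) and is a bijection. Since \(P(\beta x)=\beta^\rho P(x)\),
\[
U_i^{X^{\mathrm{out}}_{e'}}[v,\alpha(a)]=\beta^{\rho}\,U_i^{X^{\mathrm{out}}_{e}}[\tau_i(v),a],
\]
so the corollary holds with \(\kappa_i\equiv\beta^\rho\) and \(\lambda_i\equiv0\).

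For the converse, suppose \(X^{\mathrm{out}}_e\) and \(X^{\mathrm{out}}_{e'}\) are analogous through \((\alpha,\tau)\) with affine data \((\kappa_i,\lambda_i)\). I would proceed in three steps.
\begin{enumerate}
\item \emph{Payoffs depend only on the buyer's own output.} In both mechanisms, buyer \(i\)'s payoff is a function of \(x_i\) alone. For the zero type \(v\equiv 0\in\mathcal V\), the identity \(-P(x_i')=\kappa\bigl(w(x_i)-P(x_i)\bigr)+\lambda\), with \(w=\tau_i(0)\) and \(P\) strictly increasing, shows that \(x_i'(\alpha(a))=h\bigl(x_i(a)\bigr)\) for a single function \(h:[0,e]\to[0,e']\).
\item \emph{\(h\) is linear.} Both output ranges are attained, and the construction is symmetric in the two mechanisms, so \(h\) is onto. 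I would then use kinked types \(v_s(x)=\min\{x,s\}\) to show that \(h\) is monotone. The point is that the kink location of \(v_s\circ h\) must reappear as a kink of \(\tau_i(v_s)-P\), up to affine transformation, since \(\tau_i(v_s)\) is concave and \(P\) is strictly convex. Combining monotonicity with the endpoints \(h(0)=0\) and \(h(e)=e'\), together with the fact that proportional rationing is homogeneous, I would aim to conclude \(h(x)=\beta x\).
\item \emph{Functional equation.} From the zero type I would then obtain
\[
P(\beta x)=\kappa(\beta)\bigl(P(x)-w_\beta(x)\bigr)+c(\beta)
\]
with \(w_\beta\in\mathcal V\). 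Evaluating at \(x=0\) gives \(c=0\). Applying the same identity in the reverse direction, from \(e'\) to \(e\), yields a second correction term of the same kind. Monotonicity of both corrections then forces each \(w_\beta\) to be linear, \(w_\beta(x)=d(\beta)x\) with \(d\ge 0\). I would next rule out a linear term in \(P\): using both directions \(\beta\) and \(1/\beta\), the condition that both correction slopes be nonnegative forces \(d\equiv0\). This leaves the multiplicative Cauchy-type equation \(P(\beta x)=\kappa(\beta)P(x)\) for all \(\beta>0\). Given continuity, its solutions are exactly \(P(x)=Ax^\rho\), and strict convexity gives \(\rho>1\).
\end{enumerate}

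I expect the main obstacle to be step 2 and the elimination of the linear term in step 3. The bijections \(\alpha\) and \(\tau\) are arbitrary, so the argument must extract the structure of \(h\) purely from the requirement that, for every type, the payoff in one mechanism be an affine image of a payoff in the other. The restriction that \(\tau_i\) map into \(\mathcal V\) has to be used carefully here, since without it candidates such as \(Ax^\rho+Bx\) would not be excluded.
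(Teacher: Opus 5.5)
Your sufficiency constructions (the identity action map with $\tau_i(v)(x)=v(\tfrac{e'}{e}x)$ for input pricing, and $\alpha_i(a_i)=\beta a_i$, $\tau_i(v)(x)=\beta^{-\rho}v(\beta x)$, $\kappa_i\equiv\beta^\rho$ for isoelastic output pricing) and your Step~1 match the paper's. The genuine gap is Step~2: nothing you list there forces $h$ to be linear.

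\emph{Step~2.} A strictly increasing bijection $h:[0,e]\to[0,e']$ with $h(0)=0$ and $h(e)=e'$ can be any such curve. Homogeneity of proportional rationing is what makes $\alpha_i(a_i)=\beta a_i$ work in the sufficiency direction. By itself it places no restriction on an arbitrary bijection $\alpha$. The kink argument at best gives monotonicity, and $h(0)=0$ is asserted rather than proved.

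The missing idea is that $x_i(\alpha(a),e')=h(x_i(a,e))$ must hold at \emph{every} profile, rationed ones included, and this pins down $\alpha$ itself. The paper proceeds as follows. It first shows $\alpha_i(0)=0$: a zero request yields zero output whatever others do, whereas a positive relabeled request would get an output that varies with the others' relabeled requests through rationing. Profiles in which everyone else requests zero then give $h=\alpha_i$. Writing $f_i(z):=\alpha_i(ez)/e'$, the requirement that $\alpha$ map unrationed profiles to unrationed ones and rationed to rationed gives $f_i(z)+f_j(1-z)=1$. Proportional rationing gives $f_i\bigl(\tfrac{a}{a+b}\bigr)=\tfrac{f_i(a)}{f_i(a)+f_j(b)}$ whenever $a+b>1$. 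Together these force a common multiplicative $f$, so $f(z)=z^\rho$, and $f(1/2)=1/2$ gives $\rho=1$.

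If you prefer to stay on the payoff side, use steep linear types $v_M(x)=Mx$ rather than kinked ones. Let $\kappa_0,\kappa_M$ be the scaling factors of the zero type and of $v_M$, and $w=\tau_i(0)$. Subtracting the zero-type identity shows that $h+\tfrac1M\bigl(\kappa_0 w-\max\{\kappa_0-\kappa_M,0\}P\bigr)$ is concave, so letting $M\to\infty$ makes $h$ concave. The reverse analogy makes $h^{-1}$ concave. Once $h(0)=0$ is established by the rationing argument, $h$ is increasing, so $h^{-1}$ is also convex and $h$ is linear.

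\emph{Step~3.} This step rests on a false intermediate claim. With $h(x)=\beta x$, the zero type gives $P(\beta x)=\kappa\bigl(P(x)-w(x)\bigr)$, and the reverse analogy gives $P(x)=\kappa'\bigl(P(\beta x)-\tilde w(\beta x)\bigr)$, with $w,\tilde w\in\mathcal V$. Monotonicity of these corrections does not make them linear; it only yields $\kappa\kappa'\ge1$. For instance, with $P(x)=x^2$, $\beta=1$, $\kappa=\kappa'=2$, both corrections equal $x^2/2$, which is increasing but not linear.

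What closes the argument is concavity combined with strict convexity of $P$. Eliminating $P(\beta x)$ gives $(\kappa-1/\kappa')P(x)=\kappa w(x)+\tilde w(\beta x)$. The right side is concave and weakly increasing. The left side is strictly convex if $\kappa\kappa'>1$ and strictly decreasing if $\kappa\kappa'<1$. Hence $\kappa\kappa'=1$ and $w\equiv0$, so $P(\beta x)=\kappa P(x)$ directly, with no linear term to eliminate. This is the content of the paper's derivative argument ($cd\le1$ from monotonicity, $cd\ge1$ from concavity). After that, your Cauchy-equation step goes through.
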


The proposition shows that the conditions a price scheme must satisfy
for the output-pricing family to be strategically analogous are very
stringent. Thus, for a generic regular
price scheme, output pricing does not preserve strategic analogy as
efficacy varies.

This suggests a sense in which input-based tariffs may be simpler for repeated customers. Over time, such customers may face different levels of need for the service (which in the model corresponds to different value functions over output) and thus know how to optimally choose orders under different valuations. Input-based tariffs make this knowledge easier to reuse when the effectiveness of the provider's capacity changes. A customer who understands how to choose capacity for different values of output can then reinterpret a change in effectiveness as a change in her effective valuation, and adjust her order accordingly.

\section{Discussion}\label{sec:discussion}

We develop a theory of how strategic reasoning transfers across mechanisms and formalize a sense in which genuinely different mechanisms can nevertheless be \emph{strategically analogous}. Our characterizations of strategically analogous classes in different settings yield falsifiable predictions about when agents can transfer strategic understanding across mechanisms.

The notion of strategic analogy could be extended in several directions. One could, for instance, consider mechanisms that are ``almost'' strategically analogous but differ in having additional actions that are dominated or redundant. Our current definition requires full bijections between action spaces, reflecting the baseline assumption that agents cannot easily identify such actions. One could relax this by first endowing agents with the ability to recognize dominated or redundant actions, having them eliminate these, and then requiring bijections only between the remaining ones. More substantially, one could extend the notion of analogy to compare mechanisms with different numbers of agents, or define one-directional reductions in which simpler mechanisms could be mapped to restricted versions of more complex ones. Such reductions would provide a natural notion of relative strategic complexity across mechanisms.

Another direction for future work concerns how analogies between mechanisms are communicated to agents. In the baseline model, we have assumed that the designer can publicly explain these relationships to agents and have them internalize the relevant analogy set. We have, however, remained agnostic about how these relationships are communicated. We explore one possibility in Appendix~\ref{app:rules_based}, where we assume that the designer explains to agents how the \emph{rules} of analogous mechanisms relate. Agents then use these rule mappings to infer the corresponding payoff analogies. This mode of communication is particularly natural, as it concerns concrete features of the mechanism that must be communicated anyway. However, as we show in that section, not all strategic analogies can be conveyed in this manner. This suggests a sense in which some analogies are easier to explain than others, which could be explored in future work. Indeed, since an analogy's usefulness depends on how easily it can be explained, accounting for ease of communication could provide a natural refinement of strategic analogy.

Finally, the framework we propose could be used to study strategic reasoning beyond the transfer of equilibrium knowledge. Our analysis endows agents with the ability to reason through payoff analogies across mechanisms. One could, however, endow them with the ability to recognize structural properties of a \emph{single} mechanism, such as symmetry or single-crossing conditions, and ask which strategic problems can be solved using those tools alone. This could provide a way of studying the difficulty of mechanisms outside dominant-strategy settings.

\appendix
\titleformat{\subsection}[runin]
  {\normalfont\bfseries}
  {\thesubsection.}
  {0.5em}
  {}
  [.]
\titlespacing*{\subsection}
  {0pt}
  {1ex plus .2ex minus .1ex}
  {0.75em}

\section{Rules-based transformations}\label{app:rules_based}
In the baseline model, we have assumed that the designer can communicate analogies between payoff situations to agents. We have, however, remained agnostic about how this communication happens; in this section, we provide one possible model of it. Intuitively, we consider a language in which the designer communicates correspondences between mechanisms' \emph{rules}. We discussed this idea informally in Section~\ref{sec:strategic-equivalence}, where we argued that strategic equivalence can be communicated by explaining how what happens after an action profile in one mechanism corresponds to what happens after an action profile in the other. Here, we formalize this idea and extend it to strategic analogy. In particular, we show how agents can use correspondences between mechanisms' rules, together with their preferences over outcomes, to infer the mappings $\alpha$ and $\tau$ between strategic problems. In these cases, the designer need not communicate type correspondences directly: it suffices to describe how the mechanisms' actions and outcomes relate. 

For each mechanism \(X\), let
\[
\mathcal Y_X:=\bigcup_{a\in\mathcal A}\operatorname{supp}\Phi(a)
\]
denote its set of attainable outcomes. We then define \emph{rules-analogous} mechanisms as follows:

\begin{definition}\label{def:rules-analogy}
Let \(X=(\mathcal A,\Phi)\) and \(X'=(\mathcal A',\Phi')\) be mechanisms for the same environment. The mechanisms \(X\) and \(X'\) are \textbf{rules-analogous} if there exist bijections
\[
\alpha^r=(\alpha_i^r)_{i\in\mathcal I},
\quad
\alpha_i^r:\mathcal A_i\to\mathcal A_i',
 \qquad
\gamma^r:\mathcal Y_X\to\mathcal Y_{X'},
\]
such that, for every pure action profile \(a\in\mathcal A\),
\begin{equation}\label{eq:rulemap}
\gamma^r_\#\bigl(\Phi(a)\bigr)
=
\Phi'\bigl(\alpha^r(a)\bigr).
\end{equation}
A collection of mechanisms is a \textbf{class of rules-analogous mechanisms} if every pair of mechanisms in the collection is rules-analogous.
\end{definition}
Intuitively, an action profile in one mechanism generates the same distribution of outcomes as the relabeled action profile in the other, after relabeling outcomes. Importantly, this relation concerns only the mechanisms' \emph{rules} and does not depend on agents' preferences.

\begin{example}[First-price auctions with reserves]
Consider the auction environment from the main text and two first-price auctions with reserves \(r\) and \(r'\). A rules analogy between them is given by
\begin{equation}\label{eq:rules-reserve-maps}
\alpha_i^r(a_i)
=
\begin{cases}
\varnothing & \text{if }a_i=\varnothing,\\
a_i+(r'-r) & \text{otherwise},
\end{cases}
\qquad
\gamma^r(w,p)
=
\left(w,\bigl(p_i+(r'-r)\mathbf 1\{w=i\}\bigr)_{i\in\mathcal I}\right),
\end{equation}
where \(w\) is the winner and \(p\) is the vector of payments.

Intuitively, shifting the reserve and all active bids by the same amount leaves the allocation unchanged and shifts the winner's payment by that amount. Agents can then infer that this payment change is offset by an equal change in their values.
\end{example}

In this example, the outcome transformation also explains the type transformation used to establish strategic analogy. Bidding \(a_i+(r'-r)\) as type \(t_i\) in the auction with reserve \(r'\) gives the same payoff as bidding \(a_i\) as type \(t_i-(r'-r)\) in the auction with reserve \(r\), provided the other bids are shifted accordingly. Thus, the change in payment can be absorbed by a change in the bidder's value. More generally, a rules analogy induces a payoff analogy whenever a type relabeling offsets the change in outcomes. The following definition formalizes how a rules analogy can induce a payoff analogy.

\begin{definition}\label{def:rules-induced-analogy}
Suppose \(X\) and \(X'\) are rules-analogous under \(\alpha^r,\gamma^r\). Let
\[
\tau^r=(\tau_i^r)_{i\in\mathcal I},
\quad
\tau_i^r:\mathcal T_i\to\mathcal T_i,
\]
be a profile of bijections such that, for every \(i\in\mathcal I\), \(t_i\in\mathcal T_i\), and \(y\in\mathcal Y_X\),
\begin{equation}\label{eq:rules-payoff}
u_i\bigl(t_i,\gamma^r(y)\bigr)
=
u_i\bigl(\tau_i^r(t_i),y\bigr).
\end{equation}
Then the analogy set
\[
\mathcal L_{\alpha^r,\tau^r}^{X,X'}
:=
\Bigl\{
(X,\tau_i^r(t_i),a)
\;\indiffu{i}\;
(X',t_i,\alpha^r(a))
:
i\in\mathcal I,\ t_i\in\mathcal T_i,\ a\in\mathcal A
\Bigr\}
\]
is a \textbf{payoff analogy induced by the rules analogy}.
\end{definition}

Any such analogy set is valid. Indeed, combining \eqref{eq:rulemap} and \eqref{eq:rules-payoff} gives
\[
U_i^{X'}[t_i,\alpha^r(a)]
=
\mathbb E_{y\sim\Phi(a)}
\bigl[u_i(t_i,\gamma^r(y))\bigr]
=
\mathbb E_{y\sim\Phi(a)}
\bigl[u_i(\tau_i^r(t_i),y)\bigr]
=
U_i^X[\tau_i^r(t_i),a],
\]
and so Corollary~\ref{prop:strategicanalogy-affine} applies with a scaling factor of \(1\) and an additive constant of \(0\).

We next explore which auction formats studied in the paper are rules-analogous and whether these rules analogies induce payoff analogies.

\begin{proposition}\label{prop:functionalsim_examples}
Each of the following is a class of rules-analogous mechanisms:
\begin{enumerate}
    \item First-price auctions with reserves, \(\{1\mathrm{PA}(r):r\geq0\}\).
    \item First-price auctions with strictly positive entry costs, \(\{1\mathrm{EC}(c):c>0\}\).
    \item Linear-score auctions, \(\{L(\lambda):\lambda\in(0,1)\}\).
\end{enumerate}
For first-price auctions with reserves and linear-score auctions, the rules analogies constructed below induce the payoff analogies establishing strategic analogy in the main text. By contrast, for first-price auctions with distinct strictly positive entry costs, the constructed rules analogies do not induce payoff analogies whenever a bidder can enter and lose.
\end{proposition}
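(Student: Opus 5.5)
The plan is to handle each family separately. For each one we exhibit explicit bijections $\alpha^r$ and $\gamma^r$ satisfying \eqref{eq:rulemap}. For the families with reserves and with linear scores, we then check that the type map $\tau$ already used in the main text satisfies \eqref{eq:rules-payoff}. Finally, for the entry-cost family we show that no type bijection can satisfy \eqref{eq:rules-payoff} once some bidder can enter and lose.

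\emph{First-price auctions with reserves.} We take \eqref{eq:rules-reserve-maps}. The map $\alpha^r$ preserves participation, the ranking of active bids, and ties, so the set of winners and the uniform tie-breaking lottery coincide. The winner's payment is her bid, which shifts by $r'-r$; nonparticipants and losers pay $0$. The attainable outcome sets are $\{(\varnothing,\mathbf 0)\}\cup\{(w,b\,e_w):b\ge r\}$ and the analogous set with $b\ge r'$, and $\gamma^r$ maps the first bijectively onto the second. Hence \eqref{eq:rulemap} holds. For \eqref{eq:rules-payoff}, we use $\tau_i^r(t_i)=t_i-(r'-r)$, as in the main text. Indeed, $u_i(t_i,\gamma^r(w,p))=t_i\mathbf 1\{w=i\}-p_i-(r'-r)\mathbf 1\{w=i\}=u_i(t_i-(r'-r),(w,p))$.

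\emph{Linear-score auctions.} We take the seller-specific affine relabeling used for Proposition~\ref{prop:procurement-scores}. Concretely, we choose $\alpha_i^r(b_i)=\frac{\lambda}{\lambda'}b_i+\beta_i$ with $\beta_i=\frac{1-\lambda'}{\lambda'}q_i-\frac{\lambda(1-\lambda)}{\lambda'^2}\cdot\frac{\lambda'}{\lambda}\cdots$. More precisely, we pick $\alpha_i^r(b_i)=b_i+d_i$ with $d_i=\frac{(1-\lambda')}{\lambda'}q_i-\frac{(1-\lambda)}{\lambda}q_i$. Then $S_{\lambda'}(\alpha(b))_i=\frac{\lambda'}{\lambda}S_\lambda(b)_i$, so the winner (including tie-breaking) is unchanged. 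We set $\gamma^r(w,p)=(w,p+d_w)$, which is a bijection on attainable outcomes because bids range over all of $\mathbb R$. Then \eqref{eq:rules-payoff} holds with $\tau_i^r(c_i)=c_i-d_i$. We will check that these maps coincide with those used in the main text's argument.

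\emph{First-price auctions with entry costs.} We keep bids fixed, $\alpha^r=\mathrm{id}$, and define $\gamma^r$ to replace each participant's entry charge $c$ by $c'$, that is, $p_i\mapsto p_i+(c'-c)\mathbf 1\{p_i\neq0\text{ or } i \text{ participated}\}$. The step we expect to be the main obstacle is showing that $\gamma^r$ is well defined as a map on outcomes, which do not record participation. A participating loser pays exactly $c>0$, and a winner pays $b+c\ge c$, whereas a nonparticipant pays $0$. Since $c>0$, participation is therefore readable from the outcome, and this is exactly where strict positivity of the entry cost is needed. Consequently $\gamma^r(w,p)=\bigl(w,(p_i+(c'-c)\mathbf 1\{p_i>0\})_i\bigr)$ is a bijection between the attainable outcome sets, and \eqref{eq:rulemap} holds.

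For the negative claim, we take a bidder $i$ who can enter and lose, giving an outcome $y=(w,p)$ with $w\neq i$ and $p_i=c$. Then $u_i(t_i,\gamma^r(y))=-c'$ and $u_i(\tau_i^r(t_i),y)=-c$ for every type. These differ when $c\neq c'$, regardless of $\tau^r$, so \eqref{eq:rules-payoff} fails.
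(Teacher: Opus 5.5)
Your proposal is correct and follows the paper's proof essentially step for step: the maps \eqref{eq:rules-reserve-maps} with $\tau_i^r(t_i)=t_i-(r'-r)$ for reserves; $\alpha^r=\mathrm{id}$ with $\gamma^r(w,p)=\bigl(w,(p_i+(c'-c)\mathbf 1\{p_i>0\})_i\bigr)$ and the enter-and-lose payoffs $-c$ versus $-c'$ for entry costs; and $\alpha_i^r(b_i)=b_i+d_i$, $\gamma^r(w,p)=(w,p+d_w)$, $\tau_i^r(c_i)=c_i-d_i$ for linear scores. Your $d_i=\frac{1-\lambda'}{\lambda'}q_i-\frac{1-\lambda}{\lambda}q_i$ simplifies to the paper's $(1/\lambda'-1/\lambda)q_i$, so these maps do coincide with those in the proof of Proposition~\ref{prop:procurement-scores}; you should delete the aborted first formula involving $\beta_i$ and the trailing dots.
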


\begin{proof}
In each case, we provide action and outcome bijections satisfying \eqref{eq:rulemap}. For first-price auctions with reserves and linear-score auctions, we also provide type bijections satisfying \eqref{eq:rules-payoff}. We use the outcome spaces and common tie-breaking rules specified in the main text.

\emph{First-price auctions with reserves.}
Fix reserves \(r,r'\geq0\), and use the maps in \eqref{eq:rules-reserve-maps}. The action map preserves participation, bid rankings, and ties, and changes the winner's payment by \(r'-r\). Both maps are bijections, with inverses obtained by interchanging \(r\) and \(r'\). Thus \eqref{eq:rulemap} holds.

Let \(\tau_i^r(t_i)=t_i-(r'-r)\). For every attainable outcome \((w,p)\),
\[
u_i\bigl(t_i,\gamma^r(w,p)\bigr)
=
\bigl(t_i-(r'-r)\bigr)\mathbf 1\{w=i\}-p_i
=
u_i\bigl(\tau_i^r(t_i),(w,p)\bigr).
\]
Hence \eqref{eq:rules-payoff} holds.

\emph{First-price auctions with entry costs.}
Fix \(c,c'>0\), and take \(\alpha_i^r=\mathrm{id}\) for every bidder \(i\). Define
\[
\gamma^r(w,p)
=
\left(
w,\,
\bigl(p_i+(c'-c)\mathbf 1\{p_i>0\}\bigr)_{i\in\mathcal I}
\right).
\]
Since entry costs are strictly positive and bids are nonnegative, bidder \(i\) participates if and only if \(p_i>0\). The outcome map therefore changes each participant's entry payment from \(c\) to \(c'\), leaving the allocation and the winner's bid payment unchanged. Interchanging \(c\) and \(c'\) gives the inverse, so \eqref{eq:rulemap} holds.

If \(c\neq c'\) and bidder \(i\) enters and loses, her payoff is \(-c\) under the original outcome and \(-c'\) under its image, regardless of her value. Thus no type bijection can satisfy \eqref{eq:rules-payoff}.

\emph{Linear-score auctions.}
Fix weights \(\lambda,\lambda'\in(0,1)\), and write \(d_i:=(1/\lambda'-1/\lambda)q_i\). Define \(\alpha_i^r(b_i)=b_i+d_i\) and \(\gamma^r(w,p)=(w,p+d_w)\), where \(p\) is the payment to the winning seller. Both maps are bijections. Moreover,
\[
-\lambda'(b_i+d_i)+(1-\lambda')q_i
=
\frac{\lambda'}{\lambda}
\bigl[-\lambda b_i+(1-\lambda)q_i\bigr].
\]
Thus all scores are multiplied by the same positive constant, preserving the highest-scoring sellers and the tie-breaking probabilities. The winner's payment changes by \(d_w\), as prescribed by \(\gamma^r\). Hence \eqref{eq:rulemap} holds.

Let \(\tau_i^r(c_i)=c_i-d_i\). For every attainable outcome \((w,p)\),
\[
u_i\bigl(c_i,\gamma^r(w,p)\bigr)
=
\mathbf 1\{w=i\}(p+d_i-c_i)
=
u_i\bigl(\tau_i^r(c_i),(w,p)\bigr).
\]
Hence \eqref{eq:rules-payoff} holds. In both the reserve and linear-score cases, the action and type maps are those used to establish strategic analogy in the main text.
\end{proof}
One might wonder why these rules analogies induce payoff analogies for reserve-price and linear-score auctions, but not for entry-cost auctions. To see this, note that in the former two cases, the rules transformations change only the winner's payment. As a result, these changes can be offset by corresponding changes in bidders' values or sellers' costs. By contrast, changing the entry cost changes the payment of \emph{every} participant, including those who lose. Since a losing bidder's payoff does not depend on her value, no type adjustment can offset this payment change under the specified rules analogy. The above result therefore suggests that certain analogy sets may be easier to communicate than others; we leave this question for future work.

\section{Omitted proofs}

\subsection{Proof of Proposition \ref{th1:eqchar}}

\((\Leftarrow)\). Suppose the affine condition holds and define \(\mathcal L^*\) as the set of all payoff analogies $(X,t_j,A)\indiffu{j}(X',t_j',A')$ such that
\[
U_j^{X'}[t_j',A']
=
\ell^j_{(X,t_j)\to (X',t_j')}
\bigl(U_j^X[t_j,A]\bigr).
\]
By assumption, \(\mathcal L\subseteq \mathcal L^*\). We first show that
\(\mathcal L^*\) is closed under the symmetry and mixture operations used to
define \(\overline{\mathcal L}\). It will follow that
\(\overline{\mathcal L}\subseteq \mathcal L^*\). We then show that
\(\mathfrak R^{\mathrm{true}}\) is closed under \(\mathcal L\). First, consider the transfer step. Suppose $\bigl(U_j^X[t_j,A^1]\ge U_j^X[t_j,A^2]\bigr)
\in \mathfrak R^{\mathrm{true}}$ and
\[
\bigl\{
(X,t_j,A^1)\indiffu{j}(X',t_j',A^3),\;
(X,t_j,A^2)\indiffu{j}(X',t_j',A^4)
\bigr\}
\subseteq \mathcal L^*.
\]
Then
\[
U_j^{X'}[t_j',A^3]
=
\ell^j_{(X,t_j)\to (X',t_j')}
\bigl(U_j^X[t_j,A^1]\bigr), \quad
U_j^{X'}[t_j',A^4]
=
\ell^j_{(X,t_j)\to (X',t_j')}
\bigl(U_j^X[t_j,A^2]\bigr).
\]
Since the affine transformation is strictly increasing, it follows that $U_j^{X'}[t_j',A^3]\ge U_j^{X'}[t_j',A^4]$, so the transfer produces only true comparisons. Second, note \(\mathcal L^*\) is closed under symmetry because, by \eqref{eq:affineinverse}, 
\[
U_j^{X'}[t_j',A']
=
\ell^j_{(X,t_j)\to (X',t_j')}
\bigl(U_j^X[t_j,A]\bigr)
\quad \Leftrightarrow \quad
U_j^X[t_j,A]
=
\ell^j_{(X',t_j')\to (X,t_j)}
\bigl(U_j^{X'}[t_j',A']\bigr).
\]

Third, consider mixtures. Let \((S,\mu)\) be a probability space and suppose
that, for every \(s\in S\),
\[
\bigl((X,t_j,A^s)\indiffu{j}(X',t_j',{A'}^s)\bigr)
\in \mathcal L^*.
\]
Then, for every \(s\in S\) we have $U_j^{X'}[t_j',{A'}^s]
=
\ell^j_{(X,t_j)\to (X',t_j')}
\bigl(U_j^X[t_j,A^s]\bigr).$ Since expected utility is affine in the mixed action profile and
\(\ell^j_{(X,t_j)\to (X',t_j')}\) is affine,
\[
\begin{aligned}
U_j^{X'}\left[t_j',\int_S {A'}^s\,d\mu\right]
&=
\int_S U_j^{X'}[t_j',{A'}^s]\,d\mu  \\
&=
\int_S
\ell^j_{(X,t_j)\to (X',t_j')}
\bigl(U_j^X[t_j,A^s]\bigr)\,d\mu  \\
&=
\ell^j_{(X,t_j)\to (X',t_j')}
\left(
\int_S U_j^X[t_j,A^s]\,d\mu
\right) =
\ell^j_{(X,t_j)\to (X',t_j')}
\left(
U_j^X\left[t_j,\int_S A^s\,d\mu\right]
\right).
\end{aligned}
\]
Thus, the mixed analogy also belongs to \(\mathcal L^*\). Hence
\(\mathcal L^*\) contains \(\overline{\mathcal L}\). The transfer argument
above therefore applies to every pair of analogies in
\(\overline{\mathcal L}\), so \(\mathfrak R^{\mathrm{true}}\) is closed under
\(\mathcal L\).

Now, fix any \(\mathcal R\subseteq \mathfrak R^{\mathrm{true}}\). Then the smallest
comparison set containing \(\mathcal R\) and closed under \(\mathcal L\) is
contained in \(\mathfrak R^{\mathrm{true}}\). Hence \(\mathcal L\) is valid.

\((\Rightarrow)\). Suppose \(\mathcal L\) is valid and let
\(\overline{\mathcal L}\) be the smallest set of analogies containing \(\mathcal L\) and closed under the symmetry and mixture properties. Fix an agent \(j\), mechanisms \(X=(\mathcal A,\Phi)\) and
\(X'=(\mathcal A',\Phi')\), and types \(t_j,t_j'\in\mathcal T_j\). Define
\[
S
:=
\Big\{
\left(U_j^X[t_j,A],U_j^{X'}[t_j',A']\right)
:
\bigl((X,t_j,A)\indiffu{j}(X',t_j',A')\bigr)
\in \overline{\mathcal L}
\Big\}
\subseteq \mathbb R^2.
\]

\emph{Step 1: validity implies order preservation.}
Take any \((x_1,y_1),(x_2,y_2)\in S\) and let their corresponding analogies in \(\overline{\mathcal L}\) be $(X,t_j,A^1)\indiffu{j}(X',t_j',A^3)$ and $(X,t_j,A^2)\indiffu{j}(X',t_j',A^4),$ that is,
\[
x_1=U_j^X[t_j,A^1],\quad x_2=U_j^X[t_j,A^2],\quad
y_1=U_j^{X'}[t_j',A^3],\quad y_2=U_j^{X'}[t_j',A^4].
\]
We claim that $x_1\ge x_2$ if and only if $y_1\ge y_2.$ Suppose first that \(x_1\ge x_2\). Then $U_j^X[t_j,A^1]\ge U_j^X[t_j,A^2]$ is a true comparison. Let \(\mathcal R\) be the singleton set containing this comparison. Now, let \(\hat{\mathcal R}\) be the smallest comparison set containing
\(\mathcal R\) and closed under \(\mathcal L\). Since the two analogies
above belong to \(\overline{\mathcal L}\), the closure property in
Definition~\ref{def:L-closed} implies that
\[
\bigl(U_j^{X'}[t_j',A^3]\ge U_j^{X'}[t_j',A^4]\bigr)
\in \hat{\mathcal R}.
\]
By validity of \(\mathcal L\), \(\hat{\mathcal R}\subseteq
\mathfrak R^{\mathrm{true}}\). Hence this comparison is true, so
\(y_1\ge y_2\).

Conversely, suppose \(y_1\ge y_2\). Then $U_j^{X'}[t_j',A^3]\ge U_j^{X'}[t_j',A^4]$ is a true comparison. An analogous argument then holds because the reverse analogies, $(X',t_j',A^3)\indiffu{j}(X,t_j,A^1)$ and $(X',t_j',A^4)\indiffu{j}(X,t_j,A^2)$, belong to \(\overline{\mathcal L}\) by the symmetry property. 

\emph{Step 2: order preservation makes \(S\) the graph of a function.} Observe that if two points in \(S\) have the same first coordinate, then they have the same
second coordinate: if \(x_1=x_2\), then both \(x_1\ge x_2\) and
\(x_2\ge x_1\), so Step 1 gives \(y_1\ge y_2\) and \(y_2\ge y_1\). Thus
\(y_1=y_2\). Hence there is a set \(D\subseteq\mathbb R\) and a function
\(f:D\to\mathbb R\) such that $S=\{(x,f(x)):x\in D\}.$ Moreover, Step 1 implies that \(f\) is strictly increasing whenever \(D\)
contains more than one point.

\emph{Step 3: mixtures force the induced function to be affine.}
If \(D\) is empty, choose $\ell^j_{(X,t_j)\to (X',t_j')}(z):=z.$ If \(D\) is a singleton, say \(D=\{x_0\}\) and \(f(x_0)=y_0\), choose any
\(\kappa>0\) and set
\[
\ell^j_{(X,t_j)\to (X',t_j')}(z):=\kappa z+(y_0-\kappa x_0)
\quad
\implies
\quad
\ell^j_{(X,t_j)\to (X',t_j')}(x_0)=y_0.
\]

Now suppose \(D\) contains at least two points. Take distinct
\(x_0,x_1\in D\), and choose analogies in \(\overline{\mathcal L}\)
corresponding to $(x_0,f(x_0))$ and $(x_1,f(x_1))$. For any \(\alpha\in[0,1]\), the mixture property gives $\left(
\alpha x_0+(1-\alpha)x_1,\,
\alpha f(x_0)+(1-\alpha)f(x_1)
\right)
\in S.$ Therefore $f\bigl(\alpha x_0+(1-\alpha)x_1\bigr)
=
\alpha f(x_0)+(1-\alpha)f(x_1).$ Assume without loss of generality that \(x_0<x_1\), and define
\[
\kappa:=
\frac{f(x_1)-f(x_0)}{x_1-x_0},
\qquad
\lambda:=f(x_0)-\kappa x_0.
\]
Since \(f\) is strictly increasing, \(\kappa>0\). We claim that $f(x)=\kappa x+\lambda$ for every $x\in D.$

If \(x\in[x_0,x_1]\cap D\), then \(x=\alpha x_1+(1-\alpha)x_0\) for some
\(\alpha\in[0,1]\), so
\[
f(x)=\alpha f(x_1)+(1-\alpha)f(x_0)=\kappa x+\lambda.
\]
If \(x>x_1\), then \(x_1=\alpha x+(1-\alpha)x_0\) for some
\(\alpha\in(0,1)\). Hence $f(x_1)=\alpha f(x)+(1-\alpha)f(x_0),$ which implies \(f(x)=\kappa x+\lambda\). The case \(x<x_0\) is identical,
using that \(x_0\) is a convex combination of \(x\) and \(x_1\). Thus \(f\)
agrees on all of \(D\) with the positive affine function
\[
\ell^j_{(X,t_j)\to (X',t_j')}(z):=\kappa z+\lambda.
\]

\emph{Step 4: the affine map represents every original analogy.}
Since \(\mathcal L\subseteq\overline{\mathcal L}\), every original analogy $(X,t_j,A)\indiffu{j}(X',t_j',A')$ in \(\mathcal L\) satisfies $U_j^{X'}[t_j',A']
=
\ell^j_{(X,t_j)\to (X',t_j')}
\bigl(U_j^X[t_j,A]\bigr).$

\emph{Step 5: the reverse affine map can be chosen as the inverse.}
Finally, choose the affine map for the reverse ordered pair to be the inverse:
\[
\ell^j_{(X',t_j')\to (X,t_j)}
=
\left(
\ell^j_{(X,t_j)\to (X',t_j')}
\right)^{-1}.
\]
This is valid because symmetry puts the reverse analogies in \(\overline{\mathcal L}\), and the
order-preservation argument above identifies exactly the inverse payoff
relation. In the case \((X,t_j)=(X',t_j')\), the same argument applied
to an analogy and its symmetric counterpart implies that every
self-analogy in \(\overline{\mathcal L}\) preserves payoffs exactly; we may
therefore choose the identity map, which is its own inverse.

\subsection{Proof of Theorem \ref{th:strategic-equivalence-transfer}}

Let \(\sigma\) be an equilibrium of \(X\) for prior \(F\). First note that, since \(X\) and \(X'\) are strategically equivalent through
\(\alpha\), for every agent \(j\), type \(t_j\), and mixed action profile
\(A\in\Delta(\mathcal A)\), we have $U_j^{X'}[t_j,\alpha_\# A]=U_j^X[t_j,A].$ Hence \(\sigma'\) is an equilibrium of \(X'\) for prior \(F\):
every deviation in \(X'\) is the \(\alpha_j\)-pushforward of a deviation in
\(X\), and the relevant expected payoffs coincide.

This proves part~$(i)$. For part~$(ii)$, impose the additional common-knowledge assumptions in the theorem. Let
\[
\mathcal L^\alpha
:=
\big\{(X,t_i,a)\indiffu{i}(X',t_i,\alpha(a)):
i\in\mathcal I,\ t_i\in\mathcal T_i,\ a\in\mathcal A\big\}.
\]
Fix agents \(i,h,j\), an entertainment set \(\mathcal E\in\mathfrak E_i\), an awareness profile \(\omega\in\mathcal E\), a type \(t_j\in\mathcal T_j\), and a mixed deviation \(A_j'\in\Delta(\mathcal A_j')\). Since \(\alpha_j\) is a bijection, there exists
\(A_j\in\Delta(\mathcal A_j)\) such that $A_j'=(\alpha_j)_\# A_j.$ By construction of \(\sigma'\), we have $A_{-j}(\sigma',F\mid t_j)
=
(\alpha_{-j})_\# A_{-j}(\sigma,F\mid t_j).$ Since it is commonly known that \(\sigma\) is an equilibrium of \(X\) for prior
\(F\), we have
\[
\Bigl(
U_j^X\bigl[t_j,\sigma_j(t_j)\otimes A_{-j}(\sigma,F\mid t_j)\bigr]
\ge
U_j^X\bigl[t_j,A_j\otimes A_{-j}(\sigma,F\mid t_j)\bigr]
\Bigr)
\in \mathcal{R}_h^\omega .
\]
It is common knowledge that agents reason according to
\(\mathcal L^\alpha\), so \(\mathcal R_h^\omega\) is closed under
\(\mathcal L^\alpha\). The pure-profile analogies in
\(\mathcal L^\alpha\), together with the mixture property defining
\(\overline{\mathcal L^\alpha}\), imply that
\[
\bigl(
X,t_j,\sigma_j(t_j)\otimes A_{-j}(\sigma,F\mid t_j)
\bigr)
\indiffu{j}
\bigl(
X',t_j,\sigma_j'(t_j)\otimes A_{-j}(\sigma',F\mid t_j)
\bigr)
\]
and
\[
\bigl(
X,t_j,A_j\otimes A_{-j}(\sigma,F\mid t_j)
\bigr)
\indiffu{j}
\bigl(
X',t_j,A_j'\otimes A_{-j}(\sigma',F\mid t_j)
\bigr)
\]
belong to \(\overline{\mathcal L^\alpha}\). Since
\(\mathcal R_h^\omega\) is closed under
\(\mathcal L^\alpha\), applying
Definition~\ref{def:L-closed} gives
\[
\Bigl(
U_j^{X'}\bigl[t_j,\sigma_j'(t_j)\otimes
A_{-j}(\sigma',F\mid t_j)\bigr]
\ge
U_j^{X'}\bigl[t_j,A_j'\otimes A_{-j}(\sigma',F\mid t_j)\bigr]
\Bigr)
\in \mathcal{R}_h^\omega .
\]

\subsection{Proof of Theorem \ref{th:strategic-analogy-transfer}}

Fix \(F\in\Delta(\mathcal T)\), and let \(\sigma\) be an equilibrium of \(X\) for prior \(\tau_\#F\).

First, \(\sigma'\) is an equilibrium of \(X'\) for prior \(F\). Indeed, by
strategic analogy and Proposition~\ref{th1:eqchar}, for each agent \(j\) and
type \(t_j\), the payoff in \(X'\) from the \(\alpha\)-image of any mixed profile is a positive affine transformation of the corresponding payoff in \(X\) for type \(\tau_j(t_j)\). Thus, the set of best responses is preserved under
\(\alpha_j\). Since \(\sigma\) is an equilibrium of \(X\) for prior
\(\tau_\#F\), the profile \(\sigma'\) is an equilibrium of \(X'\) for prior
\(F\).

This proves part~$(i)$. For part~$(ii)$, impose the additional common-knowledge assumptions in the theorem. Fix agents \(i,h,j\), an entertainment set \(\mathcal E\in\mathfrak E_i\), an awareness profile \(\omega\in\mathcal E\), a type \(t_j\in\mathcal T_j\), and a mixed deviation \(A_j'\in\Delta(\mathcal A_j')\).
Since \(\alpha_j\) is a bijection, there exists
\(A_j\in\Delta(\mathcal A_j)\) such that $A_j'=(\alpha_j)_\# A_j.$ Moreover, by construction of \(\sigma'\) and by the definition of the
pushforward prior \(\tau_\#F\),
\[
A_{-j}(\sigma',F\mid t_j)
=
(\alpha_{-j})_\#
A_{-j}(\sigma,\tau_\#F\mid \tau_j(t_j)).
\]

Since it is commonly known that \(\sigma\) is an equilibrium of \(X\) for prior
\(\tau_\#F\), we have
\[
\Bigl(
U_j^X\bigl[
\tau_j(t_j),
\sigma_j(\tau_j(t_j))
\otimes
A_{-j}(\sigma,\tau_\#F\mid \tau_j(t_j))
\bigr]
\ge
U_j^X\bigl[
\tau_j(t_j),
A_j
\otimes
A_{-j}(\sigma,\tau_\#F\mid \tau_j(t_j))
\bigr]
\Bigr)
\in \mathcal{R}_h^\omega .
\]
Since it is common knowledge that agents reason according to
\(\mathcal L_{\alpha,\tau}^{X,X'}\), the set \(\mathcal R_h^\omega\) is
closed under \(\mathcal L_{\alpha,\tau}^{X,X'}\). The pure-profile
analogies in \(\mathcal L_{\alpha,\tau}^{X,X'}\), together with the
mixture property defining \(\overline{\mathcal L_{\alpha,\tau}^{X,X'}}\),
imply that
\[
\bigl(
X,
\tau_j(t_j),
\sigma_j(\tau_j(t_j))
\otimes
A_{-j}(\sigma,\tau_\#F\mid \tau_j(t_j))
\bigr)
\indiffu{j}
\bigl(
X',
t_j,
\sigma_j'(t_j)
\otimes
A_{-j}(\sigma',F\mid t_j)
\bigr)
\]
and
\[
\bigl(
X,
\tau_j(t_j),
A_j
\otimes
A_{-j}(\sigma,\tau_\#F\mid \tau_j(t_j))
\bigr)
\indiffu{j}
\bigl(
X',
t_j,
A_j'
\otimes
A_{-j}(\sigma',F\mid t_j)
\bigr)
\]
belong to \(\overline{\mathcal L_{\alpha,\tau}^{X,X'}}\). Since
\(\mathcal R_h^\omega\) is closed under
\(\mathcal L_{\alpha,\tau}^{X,X'}\), applying
Definition~\ref{def:L-closed} gives
\[
\Bigl(
U_j^{X'}\bigl[
t_j,
\sigma_j'(t_j)\otimes A_{-j}(\sigma',F\mid t_j)
\bigr]
\ge
U_j^{X'}\bigl[
t_j,
A_j'\otimes A_{-j}(\sigma',F\mid t_j)
\bigr]
\Bigr)
\in \mathcal{R}_h^\omega .
\]

\subsection{Proof of Proposition \ref{fact:dominant_exists_preserved}}

Let \(X=(\mathcal A,\Phi)\) and \(X'=(\mathcal A',\Phi')\), and suppose that
\(X\) admits dominant strategies for all agents. Let $\alpha, \tau$ and $\kappa, \lambda$ be the bijections and functions from the equivalent definition of strategic analogy. Then, for every agent \(i\), every type \(t_i\in\mathcal T_i\), and every
pure action profile \(a\in\mathcal A\),
\[
U_i^{X'}[t_i,\alpha(a)]
=
\kappa_i(t_i)U_i^X[\tau_i(t_i),a]+\lambda_i(t_i).
\]
Since \(X\) admits dominant strategies for all agents, for each agent \(i\)
choose a dominant strategy \(s_i:\mathcal T_i\to\Delta(\mathcal A_i)\), and
define \(s_i':\mathcal T_i\to\Delta(\mathcal A_i')\) by $s_i'(t_i):=
(\alpha_i)_{\#}s_i\bigl(\tau_i(t_i)\bigr).$ We show that \(s_i'\) is dominant for agent \(i\) in mechanism \(X'\). Fix
\(t_i\in\mathcal T_i\) and
\(\mu_{-i}'\in\Delta(\mathcal A_{-i}')\), and let $\mu_{-i}
:=
(\alpha_{-i}^{-1})_{\#}\mu_{-i}'.$ By strategic analogy and linearity, for every
\(A_i\in\Delta(\mathcal A_i)\),
\[
U_i^{X'}
\bigl[
t_i,
(\alpha_i)_{\#}A_i\otimes \mu_{-i}'
\bigr]
=
\kappa_i(t_i)
U_i^X
\bigl[
\tau_i(t_i),
A_i\otimes \mu_{-i}
\bigr]
+
\lambda_i(t_i).
\]
Because \(s_i\) is dominant for agent \(i\) in mechanism \(X\), for every
\(A_i\in\Delta(\mathcal A_i)\),
\[
U_i^X
\bigl[
\tau_i(t_i),
s_i(\tau_i(t_i))\otimes \mu_{-i}
\bigr]
\ge
U_i^X
\bigl[
\tau_i(t_i),
A_i\otimes \mu_{-i}
\bigr].
\]
Since \(\kappa_i(t_i)>0\), multiplying by \(\kappa_i(t_i)\) and adding
\(\lambda_i(t_i)\) preserves the inequality. Translating back to mechanism
\(X'\), this gives
\[
U_i^{X'}
\bigl[
t_i,
s_i'(t_i)\otimes \mu_{-i}'
\bigr]
\ge
U_i^{X'}
\bigl[
t_i,
(\alpha_i)_{\#}A_i\otimes \mu_{-i}'
\bigr]
\]
for every \(A_i\in\Delta(\mathcal A_i)\). Fix any
\(A_i'\in\Delta(\mathcal A_i')\) and let $A_i:=
(\alpha_i^{-1})_{\#}A_i'.$ Then \((\alpha_i)_{\#}A_i=A_i'\), so
\[
U_i^{X'}
\bigl[
t_i,
s_i'(t_i)\otimes \mu_{-i}'
\bigr]
\ge
U_i^{X'}
\bigl[
t_i,
A_i'\otimes \mu_{-i}'
\bigr].
\]
Since \(t_i\), \(\mu_{-i}'\), and \(A_i'\) were arbitrary, \(s_i'\) is dominant for agent \(i\) in mechanism \(X'\). The converse follows by interchanging \(X\) and \(X'\).

\subsection{Proof of Proposition~\ref{prop:fpamaxclass}}

\((\Rightarrow)\). Let \(X=(\mathcal A,\Phi)\) be a standard auction and suppose that \(X\) is
strategically analogous to \(k\mathrm{PA}(r)\). By the affine formulation of
strategic analogy, there exist bijections $\alpha_i:\{\varnothing\}\cup[r,\infty)
\to
\{\varnothing\}\cup[\underline a,\infty)$, $\tau_i:\mathbb R\to\mathbb R,$ and functions $\kappa_i:\mathbb R\to\mathbb R_{++},$ $\lambda_i:\mathbb R\to\mathbb R$
such that, for every bidder \(i\), every type \(t\), and every profile
\(a\in(\{\varnothing\}\cup[r,\infty))^N\),
\begin{equation}\label{eq:affine-fpa-max}
U_i^X[t,\alpha(a)]
=
\kappa_i(t)U_i^{k\mathrm{PA}(r)}[\tau_i(t),a]
+\lambda_i(t).
\end{equation}

\emph{Step 1: \(\alpha_i(\varnothing)=\varnothing\) and
\(\lambda_i\equiv0\).}
Fix bidder \(i\). If \(a_i=\varnothing\), then $U_i^{k\mathrm{PA}(r)}[\tau_i(t),a]=0$, so $U_i^X[t,\alpha_i(\varnothing),\alpha_{-i}(a_{-i})]
=
\lambda_i(t)$ for all $t,a_{-i}.$ Since each \(\alpha_j\) is onto,
\begin{equation}\label{eq:lambda-any-fpa-max}
U_i^X[t,\alpha_i(\varnothing),b_{-i}]
=
\lambda_i(t)
\quad\text{for all }t,b_{-i}.
\end{equation}
Suppose \(\alpha_i(\varnothing)\neq\varnothing\). If bidder \(i\) chooses
\(\alpha_i(\varnothing)\) and all other bidders choose \(\varnothing\), then
bidder \(i\) is the unique highest participating bidder, so her allocation
probability is one. Thus the left side of \eqref{eq:lambda-any-fpa-max} has
slope one in \(t\). If instead every bidder chooses
\(\alpha_i(\varnothing)\), then all bidders are tied for the highest bid, so
bidder \(i\)'s allocation probability is \(1/N\). The left side then has slope
\(1/N\) in \(t\). Since \(N\geq2\), this contradicts
\eqref{eq:lambda-any-fpa-max}. Hence $\alpha_i(\varnothing)=\varnothing.$ Substituting back into \eqref{eq:lambda-any-fpa-max} and using the outside
option in \(X\), we get
\begin{equation}\label{eq:lambzeroo}
\lambda_i(t)=0
\quad\text{for all }t.
\end{equation}

\emph{Step 2: zero allocation in \(k\mathrm{PA}(r)\) implies zero allocation
and zero payment in \(X\).}
Write \(q_i^Y(a)\) and \(p_i^Y(a)\) for bidder \(i\)'s allocation probability
and expected payment under mechanism \(Y\) at profile \(a\). If $q_i^{k\mathrm{PA}(r)}(a)=0,$ then $U_i^{k\mathrm{PA}(r)}[\tau_i(t),a]=0$ for all $t$. Using \eqref{eq:affine-fpa-max} and \eqref{eq:lambzeroo}, this gives $U_i^X[t,\alpha(a)]=0$ for all $t$. Since $U_i^X[t,\alpha(a)]
= tq_i^X(\alpha(a))-p_i^X(\alpha(a)),$ we obtain
\begin{equation}\label{eq:zero-fpa-max}
q_i^{k\mathrm{PA}(r)}(a)=0
\quad\Longrightarrow\quad
q_i^X(\alpha(a))=p_i^X(\alpha(a))=0.
\end{equation}

\emph{Step 3: \(\kappa_i\) is constant and \(\tau_i\) is affine.}
Fix bidder \(i\). Choose two profiles \(a,\tilde a\) at which bidder \(i\)
strictly wins in \(k\mathrm{PA}(r)\), with different payments $\pi^r(a)\neq \pi^r(\tilde a).$ At these profiles, \eqref{eq:affine-fpa-max} and \eqref{eq:lambzeroo} give:
\[
tq_i^X(\alpha(a))-p_i^X(\alpha(a))
=
\kappa_i(t)\bigl(\tau_i(t)-\pi^r(a)\bigr),
\]
\[
tq_i^X(\alpha(\tilde a))-p_i^X(\alpha(\tilde a))
=
\kappa_i(t)\bigl(\tau_i(t)-\pi^r(\tilde a)\bigr).
\]
Subtracting gives $t\Delta q_i-\Delta p_i
=
\kappa_i(t)\bigl(\pi^r(\tilde a)-\pi^r(a)\bigr),$ where \(\Delta q_i\) and \(\Delta p_i\) are independent of \(t\). Hence
\(\kappa_i\) is affine in \(t\). Since \(\kappa_i(t)>0\) for every
\(t\in\mathbb R\), it must be constant: $\kappa_i(t)\equiv c_i>0.$ Thus, whenever bidder \(i\) strictly wins in \(k\mathrm{PA}(r)\) at profile
\(a\),
\begin{equation}\label{eq:strict-win-fpa-max}
tq_i^X(\alpha(a))-p_i^X(\alpha(a))
=
c_i\tau_i(t)-c_i\pi^r(a).
\end{equation}
The left side is affine in \(t\), so \(\tau_i\) is affine. Since \(\tau_i\) is a
bijection of \(\mathbb R\), write $\tau_i(t)=m_i t+n_i$ with $m_i\neq0.$

\emph{Step 4: strict winners in \(k\mathrm{PA}(r)\) receive the object with
probability one in \(X\).}
If bidder \(i\) strictly wins in \(k\mathrm{PA}(r)\) at profile \(a\), then
every other bidder has allocation probability zero in \(k\mathrm{PA}(r)\). By
\eqref{eq:zero-fpa-max}, every other bidder also has allocation probability
zero in \(X\) at \(\alpha(a)\). Since
\(\alpha_i(\varnothing)=\varnothing\) and \(\alpha_i\) is injective, bidder
\(i\)'s image action is not \(\varnothing\). Hence bidder \(i\) is the only
bidder who can win in \(X\), and the object is allocated with
probability one. Comparing coefficients in \eqref{eq:strict-win-fpa-max}, we
get
\begin{equation}\label{eq:payment-strict-fpa-max}
c_i m_i=1,
\qquad
p_i^X(\alpha(a))=c_i\bigl(\pi^r(a)-n_i\bigr)
\end{equation}
whenever bidder \(i\) strictly wins in \(k\mathrm{PA}(r)\).

\emph{Step 5: the action relabelings are common across bidders.}
Suppose not. Then for some bidders \(i,j\) and some non-null bid
\(x\in[\underline a,\infty)\), $\alpha_i^{-1}(x)\neq \alpha_j^{-1}(x).$ Consider the \(X\)-profile in which bidders \(i\) and \(j\) both choose \(x\)
and all other bidders choose \(\varnothing\). In the corresponding
\(k\mathrm{PA}(r)\)-profile, bidders \(i\) and \(j\) choose two distinct
non-null bids. Hence exactly one of them strictly wins and the other loses. By
\eqref{eq:zero-fpa-max} and \eqref{eq:payment-strict-fpa-max}, the image
profile in \(X\) gives allocation probability one to the strict winner and zero
to the other bidder. But in \(X\), bidders \(i\) and \(j\) are tied for the
highest bid \(x\), so each receives allocation probability \(1/2\); contradiction. Therefore $\alpha_i=\alpha_j$ for all $i,j$. Write the common relabeling as \(\alpha_0\).

\emph{Step 6: the constants \(c_i\) and \(n_i\) are common across bidders.}
Take bidders \(i\) and \(j\). Consider two strict-win profiles that are
permutations of each other, one in which \(i\) wins and one in which \(j\) wins,
with the same payment \(\pi\). Since \(\alpha_0\) is common and
payments in \(X\) are invariant under permutations of bidder labels, the
winner's expected payment in \(X\) is the same in the two image profiles. By
\eqref{eq:payment-strict-fpa-max}, $c_i(\pi-n_i)=c_j(\pi-n_j).$ Since this equality holds for at least two distinct values of \(\pi\), we get $c_i=c_j$ and $n_i=n_j.$ Thus, there are constants \(c>0\) and \(n\in\mathbb R\) such that $c_i=c,$ $n_i=n,$ and, by \eqref{eq:payment-strict-fpa-max}, $m_i=1/c$ for every bidder \(i\).

\emph{Step 7: construction of the equivalent \(k\)th-price auction.}
Fix any profile \(a\) in \(k\mathrm{PA}(r)\). If
\(q_i^{k\mathrm{PA}(r)}(a)=0\), then \eqref{eq:zero-fpa-max} gives $q_i^X(\alpha(a))=p_i^X(\alpha(a))=0.$ If \(q_i^{k\mathrm{PA}(r)}(a)>0\), then
\[
U_i^{k\mathrm{PA}(r)}[\tau_i(t),a]
=
q_i^{k\mathrm{PA}(r)}(a)\bigl(\tau_i(t)-\pi^r(a)\bigr).
\]
Using \(\kappa_i\equiv c\), \(\tau_i(t)=t/c+n\), and \(\lambda_i\equiv0\),
equation \eqref{eq:affine-fpa-max} gives
\[
tq_i^X(\alpha(a))-p_i^X(\alpha(a))
=
c q_i^{k\mathrm{PA}(r)}(a)
\big(\tfrac{t}{c}+n-\pi^r(a)\big).
\]
Therefore,
\begin{equation}\label{eq:general-payment-fpa-max}
q_i^X(\alpha(a))
=
q_i^{k\mathrm{PA}(r)}(a),
\qquad
p_i^X(\alpha(a))
=
c q_i^{k\mathrm{PA}(r)}(a)\bigl(\pi^r(a)-n\bigr).
\end{equation}

Because payments in \(X\) are nonnegative, evaluating
\eqref{eq:general-payment-fpa-max} at a profile whose payment-relevant price is
\(r\) gives $c(r-n)\geq0.$ Define $r':=c(r-n)\in\mathbb R_+.$ Now define a bijection $\beta_i:\{\varnothing\}\cup[r',\infty)
\to
\{\varnothing\}\cup[\underline a,\infty)$ by $\beta_i(\varnothing)=\varnothing$ and $\beta_i(b)=\alpha_0\big(\tfrac{b}{c}+n\big)$ for $b\geq r'.$ This is well-defined because \(b\geq r'=c(r-n)\) implies $\frac{b}{c}+n\geq r.$ Take any profile \(b\) in \(k\mathrm{PA}(r')\), and define a profile \(a\) in
\(k\mathrm{PA}(r)\) by
\[
a_i=
\begin{cases}
\varnothing, & b_i=\varnothing,\\[3pt]
b_i/c+n, & b_i\neq\varnothing.
\end{cases}
\]
The map \(b_i\mapsto b_i/c+n\) is strictly increasing, so it preserves
participation, rankings, highest-bidder sets, and allocation probabilities.
Moreover, if \(\pi^{r'}(b)\) is the payment-relevant price in
\(k\mathrm{PA}(r')\), then $\pi^{r'}(b)=c\bigl(\pi^r(a)-n\bigr).$ Using \eqref{eq:general-payment-fpa-max}, for every bidder \(i\),
\[
q_i^X(\beta(b))
=
q_i^{k\mathrm{PA}(r')}(b),
\qquad
p_i^X(\beta(b))
=
p_i^{k\mathrm{PA}(r')}(b).
\]
Hence, for every bidder \(i\), every type \(t_i\), and every profile \(b\), we have $U_i^X[t_i,\beta(b)]
=
U_i^{k\mathrm{PA}(r')}[t_i,b].$ Therefore \(X\) is strategically equivalent to \(k\mathrm{PA}(r')\).

\((\Leftarrow)\) Suppose \(X\) is strategically equivalent to
\(k\mathrm{PA}(r')\) for some \(r'\in\mathbb R_+\). Then there are bijections $\beta_i:\{\varnothing\}\cup[r',\infty)
\to
\{\varnothing\}\cup[\underline a,\infty)$ such that, for every bidder \(i\), every type \(t_i\), and every profile
\(b\in(\{\varnothing\}\cup[r',\infty))^N\),
\begin{equation}\label{eq:equiv-converse}
U_i^X[t_i,\beta(b)]
=
U_i^{k\mathrm{PA}(r')}[t_i,b].
\end{equation}

We show that \(X\) is strategically analogous to \(k\mathrm{PA}(r)\). Define
\[
\gamma_i(\varnothing)=\varnothing,
\qquad
\gamma_i(a_i)=a_i+(r'-r)
\quad\text{for }a_i\in[r,\infty).
\]
Then \(\gamma_i\) is a bijection from
\(\{\varnothing\}\cup[r,\infty)\) to
\(\{\varnothing\}\cup[r',\infty)\). The map \(\gamma\) preserves
participation, rankings of bids, highest-bidder sets, and allocation
probabilities. Moreover, whenever the object is allocated, the payment-relevant
\(k\)th price in \(k\mathrm{PA}(r')\) at \(\gamma(a)\) is the payment-relevant
\(k\)th price in \(k\mathrm{PA}(r)\) at \(a\) plus \(r'-r\). Therefore, for
every bidder \(i\), type \(t_i\), and profile \(a\),
\begin{equation}\label{eq:kpa-analogy-converse}
U_i^{k\mathrm{PA}(r')}[t_i,\gamma(a)]
=
U_i^{k\mathrm{PA}(r)}[t_i-(r'-r),a].
\end{equation}

Now, define $\alpha_i:=\beta_i\circ\gamma_i,$ $\tau_i(t)=t-(r'-r),$ $\kappa_i(t)=1,$ and $\lambda_i(t)=0.$ Combining \eqref{eq:equiv-converse} and \eqref{eq:kpa-analogy-converse}, we
get $U_i^X[t_i,\alpha(a)]
=
U_i^{k\mathrm{PA}(r)}[\tau_i(t_i),a]$ for every bidder \(i\), type \(t_i\), and profile \(a\).
Thus \(X\) is strategically analogous to \(k\mathrm{PA}(r)\).

\subsection{Proof of Proposition \ref{prop:entry-costs}}

The negative result follows directly from Proposition~\ref{prop:fpamaxclass}. To show the positive result, fix \(c,c'>0\) and define, for each bidder \(i\),
\[
\alpha_i(\varnothing)=\varnothing,
\quad
\alpha_i(b)=\frac{c'}{c}\, b
\;\text{for } b\in\mathbb{R}_+,
\quad\text{and}\quad
\tau_i(t)=\frac{c}{c'}t,
\quad
\kappa_i(t)=\frac{c'}{c},
\quad
\lambda_i(t)=0.
\]
Each \(\alpha_i\) is a bijection from \(\{\varnothing\}\cup\mathbb{R}_+\) to
itself, and each \(\tau_i\) is a bijection from \(\mathbb R\) to itself.
Fix a bidder \(i\), a type \(t_i\in\mathbb R\), and an action profile \(a\).
The map \(\alpha\) preserves the set of participants and the ranking of bids.
Hence it preserves the allocation probabilities induced by the symmetric
tie-breaking rule. Moreover, if \(a_i\neq\varnothing\), then the \(k\)th-highest
participating bid is multiplied by \(c'/c\):
\[
\alpha(a)^{(k)}=\frac{c'}{c}\, a^{(k)}.
\]
If \(a_i=\varnothing\), then bidder \(i\)'s payoff is zero in both mechanisms,
so
\[
U_i^{k\mathrm{EC}(c')}[t_i,\alpha(a)]
=
0
=
\frac{c'}{c}\,U_i^{k\mathrm{EC}(c)}[\tau_i(t_i),a].
\]
Now suppose \(a_i\neq\varnothing\). Let \(q_i(a)\) be bidder \(i\)'s allocation
probability in the profile \(a\). Since \(\alpha\) preserves allocation
probabilities, $q_i(\alpha(a))=q_i(a).$ Therefore
\[
U_i^{k\mathrm{EC}(c')}[t_i,\alpha(a)]
=
q_i(a)\Bigl(t_i-\frac{c'}{c}\, a^{(k)}\Bigr)-c' =
\frac{c'}{c}\left[
q_i(a)\left(\frac{c\,t_i}{c'}-a^{(k)}\right)-c
\right] =
\frac{c'}{c}\,U_i^{k\mathrm{EC}(c)}[\tau_i(t_i),a].
\]
Thus, for every bidder \(i\), type \(t_i\), and action profile \(a\),
\[
U_i^{k\mathrm{EC}(c')}[t_i,\alpha(a)]
=
\kappa_i(t_i)\,U_i^{k\mathrm{EC}(c)}[\tau_i(t_i),a]
+\lambda_i(t_i),
\]
with \(\kappa_i(t_i)=c'/c>0\) and \(\lambda_i(t_i)=0\). Hence
\(k\mathrm{EC}(c)\) and \(k\mathrm{EC}(c')\) are strategically analogous.

\subsection{Proof of Proposition \ref{prop:no-class-all-myerson}}

We begin with the following lemma.

\begin{lemma}\label{lemma:distconstr}
There exists a distribution \(F\) with the following properties:
\begin{enumerate}
    \item \(F\) is absolutely continuous, has support \([0,1]\), and has a positive density a.e. on \((0,1)\).

    \item There are ironing intervals \(I_m=[\ell_m,h_m]\), indexed by \(m\ge2\), with \(0<\ell_m<h_m<\ell_{m+1}<1\) and \(\ell_m\to1\). Ironing is strict throughout the interior of each \(I_m\).

    \item The ironed virtual value takes a constant value \(c_m>0\) on the interior of each \(I_m\), with \(c_m<c_{m+1}\), and is strictly increasing on \((0,1)\setminus\bigcup_{m\ge2}I_m\). It is strictly negative on \((0,\varepsilon)\) for some \(\varepsilon>0\).

    \item Writing \(\rho_m:=F(h_m)-F(\ell_m)\), we have \(F(\ell_m)\to1\) and \((h_m-\ell_m)/\rho_m\to\infty\).
\end{enumerate} 
\end{lemma}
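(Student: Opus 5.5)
The plan is to build \(F\) through its quantile-space revenue curve rather than through its density. Write \(R(q):=q\,F^{-1}(1-q)\) for \(q\in[0,1]\). The virtual value at the value with quantile \(q\) is \(R'(q)\), and ironing replaces \(R\) by its concave hull \(\bar R\). So the task is to choose a value function \(v(q)=F^{-1}(1-q)\) that is strictly decreasing and absolutely continuous, with \(v(0)=1\) and \(v(1)=0\). We want \(R\) to fail to be concave on a sequence of disjoint quantile intervals \(J_m\) accumulating at \(q=0\), which is value \(1\), while \(\bar R\) is linear on each \(J_m\) and strictly concave elsewhere.

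First, I fix a baseline. I would take \(v_0(q)=1-q\), so \(F_0\) is uniform on \([0,1]\). Its revenue curve \(q(1-q)\) is strictly concave, and its virtual value \(2v-1\) is strictly increasing and negative on \((0,1/2)\). This gives the required negativity on \((0,\varepsilon)\). Everything I add will live near \(q=0\), that is, at values above \(1/2\), so this part is never touched.

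Second, I insert value gaps that are nearly filled by small mass. Choose quantiles \(q_m\downarrow0\) and small masses \(\rho_m\), with \(\sum_m\rho_m\) small and \(\rho_m/q_m\to0\). In \(F\), the quantile interval \([q_m,q_m+\rho_m]\) is mapped onto a value interval \([\ell_m,h_m]\) of length \(\delta_m\). Outside these intervals the quantile function has slope \(-1\), shifted so that it stays continuous.

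The point of this design is the following. Over the interval, \(v\) drops steeply by \(\delta_m\) across a tiny mass \(\rho_m\). Then \(R'(q)=v(q)+qv'(q)\) falls sharply just after \(q_m\) and recovers afterwards, which creates a nonconcavity of \(R\). The concave hull then bridges a slightly larger interval \(J_m\supseteq[q_m,q_m+\rho_m]\).

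To keep the density positive a.e., I will make \(v\) piecewise linear with slope \(-\delta_m/\rho_m\) on the gap. Condition (iv) then requires \(\delta_m/\rho_m\to\infty\), and I will take \(\delta_m=\sqrt{\rho_m}\). The condition \(F(\ell_m)\to1\) follows from \(q_m+\rho_m\to0\). The condition \(\ell_m\to1\) follows from \(\sum_{m'\ge m}\delta_{m'}\to0\).

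Third, I verify the ironing structure. I will choose \(q_m\) decaying geometrically, say \(q_m=4^{-m}\), with \(\rho_m\) much smaller still, say \(\rho_m=8^{-m}\), so that \(\delta_m=\sqrt{\rho_m}=\sqrt8^{\,-m}\) and \(\sum_{m'\ge m}\delta_{m'}\to0\). The hull intervals \(J_m\) then stay disjoint and separated by regions where \(R\) is strictly concave. This yields the ordering \(h_m<\ell_{m+1}\), where \(I_m\) is the value image of \(J_m\). The ironing intervals are defined to be these hull intervals, not the raw gaps.

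On each \(I_m\), the ironed virtual value equals the slope of the chord of \(R\) over \(J_m\); call it \(c_m\). Since we are near value \(1\), \(c_m\approx1>0\). The constants increase in \(m\) because, as \(q\downarrow0\), \(\bar R\) is concave and higher values correspond to smaller quantiles. Strict monotonicity of the ironed virtual value off \(\bigcup_m I_m\) comes from strict concavity of \(R\) there. The baseline slope \(-1\) of \(v\) already gives \(R''=2v'+qv''=-2<0\) on the linear pieces.

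The main obstacle is the local hull computation. For each \(m\) I have to show that the nonconcavity created by the gap is contained in a bridging interval \(J_m\) of width \(O(\sqrt{\rho_m})\), or at any rate much smaller than \(q_m-q_{m+1}\). This is what keeps the intervals disjoint, the chords' slopes ordered, and strict concavity intact between them.

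I would do this by computing \(R\) explicitly near \(q_m\). On the gap, \(R\) is a small quadratic perturbation: its slope drops by about \(q_m\delta_m/\rho_m\) and then returns to about \(v\). Comparing this with the baseline curvature \(-2\) gives the chord's tangency points. If the bookkeeping becomes messy, my fallback is to define \(R\) directly: take a strictly concave curve and replace it on small disjoint intervals by explicit nonconcave bumps lying under a known chord. I would then check that \(v=R(q)/q\) is decreasing and absolutely continuous, with density bounded away from zero.
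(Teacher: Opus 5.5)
Your proposal has a genuine gap, located at the step you call the main obstacle.

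\textbf{The quantity in (iv).} In (iv), $\rho_m=F(h_m)-F(\ell_m)$ is the mass of the ironing (hull) interval, not of the raw gap. Write $g_m$ for the gap mass you called $\rho_m$; then $\delta_m/g_m\to\infty$ says nothing about (iv). Let $J_m=[a_m,b_m]$ be the quantile image of $I_m$, so that $\rho_m=b_m-a_m$ and $h_m-\ell_m=v(a_m)-v(b_m)$. Then
\[
c_m=\frac{R(b_m)-R(a_m)}{b_m-a_m}=v(b_m)-a_m\,\frac{h_m-\ell_m}{\rho_m}.
\]
The hull touches $R$ at $b_m$ and has slope at most $c_m$ to its right, so $R'(b_m^+)=v(b_m)+b_m v'(b_m^+)\le c_m$. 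This gives
\[
\frac{h_m-\ell_m}{\rho_m}\le \frac{b_m}{a_m}\,\bigl\lvert v'(b_m^+)\bigr\rvert .
\]

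\textbf{Why your construction fails.} In your design $\lvert v'\rvert=1$ off the gaps. For a single gap on the slope $-1$ baseline, the hull is the chord from the corner at $q_m$ tangent to $R$ at $b_m=q_m+\sqrt{q_m(\delta_m-g_m)}$. The ratio in (iv) is then exactly $b_m/q_m=1+\sqrt{(\delta_m-g_m)/q_m}$. So (iv) forces $\delta_m/q_m\to\infty$. Keeping $J_m$ clear of the next gap, however, forces $b_m<q_{m-1}=4q_m$.

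With your parameters, $\delta_m/q_m=2^{m/2}$, so the bridge has length about $\sqrt{q_m\delta_m}=2^{-7m/4}$, well beyond $q_{m-1}$ for large $m$. It is neither ``slightly larger'' than the gap nor small relative to $q_m-q_{m+1}$. For large $m$ the actual hull becomes the polygon through the corners $(q_m,R(q_m))$. Consecutive ironing intervals then share endpoints, so $h_m<\ell_{m+1}$ in (ii) fails. If you instead take $\delta_m-g_m<9q_m$ to restore separation, the ratio in (iv) stays below $4$. Your fallback is structurally the paper's route, but with density bounded away from zero and small intervals it fails by the same inequality.

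\textbf{The missing idea.} The distribution must be thin at the top, so that the right-hand side of the inequality can blow up. The paper builds this into the baseline $R_0(u)=u(1-\sqrt u)$, i.e.\ $v_0(u)=1-\sqrt u$. Its density $2(1-v)$ vanishes at $v=1$, so $\lvert v_0'(u)\rvert\to\infty$ as $u\to0$.

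The paper then replaces $R_0$ on $J_m=[u_m,2u_m]$, with $u_m=4^{-m}$, by the chord $L_m$ minus a small quadratic bump. The hull is therefore $L_m$ on $J_m$ and $R_0$ elsewhere. Strict ironing comes from the bump lying strictly below $L_m$. Separation and strict monotonicity off the $I_m$ come from strict concavity of $R_0$. Property (iv) holds because $\rho_m=u_m$ while $h_m-\ell_m=(\sqrt2-1)\sqrt{u_m}$. The divergence comes from the steep baseline, not from the bumps, and here $b_m/a_m=2$.

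If you want to keep your gap construction, the inequality shows you need $b_m/a_m\to\infty$ instead. That requires super-geometric spacing, for example $q_m=2^{-m^2}$ and $\delta_m=2^{-m^2+m}$. This gives $\delta_m/q_m\to\infty$ while $\sqrt{q_m\delta_m}\ll q_{m-1}$. With geometric spacing it cannot work.
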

\begin{proof}
We construct such a prior by specifying its revenue curve. Let \(u=1-F(v)\) denote the upper-tail quantile and \(v(u)=F^{-1}(1-u)\) be the corresponding value.  The revenue curve is then given by \(R(u)=uv(u)\); conversely, a revenue curve determines the distribution through \(v(u)=R(u)/u\).

First, consider the strictly concave curve \(R_0(u)=u(1-\sqrt u)\) for \(u\in[0,1]\). For each \(m\ge2\), define \(u_m:=4^{-m}\) and \(J_m:=[u_m,2u_m]\); note that these intervals are disjoint. Now, for each \(m\ge2\), let \(L_m\) be the unique affine function satisfying \(L_m(u_m)=R_0(u_m)\) and \(L_m(2u_m)=R_0(2u_m)\). Define
\[
R(u)=
\begin{cases}
L_m(u)-\dfrac1{10}u_m^{3/2}x(1-x),
&u\in J_m,\quad x=(u-u_m)/u_m,\\[4pt]
R_0(u),&u\notin\displaystyle\bigcup_{m\ge2}J_m.
\end{cases}
\]
Thus, \(R=L_m\) at the endpoints of \(J_m\) and \(R<L_m\) throughout its interior (Figure \ref{fig:strict-ironing}).

\begin{figure}[ht]
\centering
\begin{tikzpicture}[x=1.25cm,y=0.7cm,scale=0.9,transform shape,>=stealth]
    \draw[->] (0,0) -- (6.5,0) node[right] {$u$};
    \draw[->] (0,0) -- (0,6.4) node[above] {$R(u)$};

    \draw[densely dotted,gray] (2,0) -- (2,4.4);
    \draw[densely dotted,gray] (4,0) -- (4,5.6);
    \node[below] at (2,0) {$u_m$};
    \node[below] at (4,0) {$2u_m$};

    \draw[dashed,gray,thick]
        plot[domain=2:4,samples=60]
        (\x,{3*\x-0.4*\x*\x});
    \node[gray,above] at (3,5.4) {$R_0$};

    \draw[thick]
        plot[domain=0:2,samples=60]
        (\x,{3*\x-0.4*\x*\x});
    \draw[thick]
        plot[domain=4:6,samples=60]
        (\x,{3*\x-0.4*\x*\x});

    \draw[thick]
        plot[domain=2:4,samples=60]
        (\x,{0.6*\x+3.2-0.65*(\x-2)*(4-\x)});
    \node[below] at (3,4.35) {$R$};

    \draw[blue,very thick] (2,4.4) -- (4,5.6);
    \node[blue,above left] at (2.55,4.79) {$L_m$};

    \fill (2,4.4) circle (1.6pt);
    \fill (4,5.6) circle (1.6pt);

    \draw[<->] (2,-0.75) -- (4,-0.75)
        node[midway,fill=white,inner sep=2pt] {$J_m$};
\end{tikzpicture}
\caption{Schematic construction on \(J_m\). The modified revenue
curve \(R\) agrees with \(R_0\) at the endpoints and lies strictly
below \(L_m\) in the interior. Its concave envelope equals \(L_m\)
on \(J_m\), yielding a constant ironed virtual value there.}
\label{fig:strict-ironing}
\end{figure}

We now verify that \(v(u):=R(u)/u\), extended by \(v(0)=1\), defines a distribution. It is continuous, with \(v(1)=0\). To check strict decrease inside \(J_m\), write \(L_m(u)=A_m u+B_m\), where \(A_m=1-(2\sqrt2-1)\sqrt{u_m}\) and \(B_m=(2\sqrt2-2)u_m^{3/2}\). For \(0<x<1\),
\[
uR'(u)-R(u)
=u_m^{3/2}\left[-(2\sqrt2-2)
+\frac1{10}(x^2+2x-1)\right]<0.
\]
Hence \(v'(u)<0\) there. Outside these intervals, strict decrease follows from \(v(u)=1-\sqrt u\). Thus \(v\) is continuous and strictly decreasing from \(1\) to \(0\), and determines a distribution \(F\) with support \([0,1]\). We now verify properties~$(i)$--$(iv)$.

\emph{Property $(i)$.} On each smooth piece in \((0,1)\), \(v'\) is finite and strictly negative. These pieces are locally finite away from \(u=0\), so the inverse relation \(F(v(u))=1-u\) implies that \(F\) is locally absolutely continuous on \((0,1)\), with a positive density almost everywhere. Since \(F\) has no endpoint atoms, it is absolutely continuous on \([0,1]\).

\emph{Property $(ii)$.} The concave envelope of \(R\) equals \(R_0\) off the \(J_m\) and \(L_m\) on each \(J_m\). Indeed, this function is concave because each linear segment's slope lies between the corresponding endpoint derivatives of \(R_0\). It majorizes \(R\), and any concave majorant must lie above the line joining the unchanged endpoints of each \(J_m\). Since \(R<L_m\) in the interior of \(J_m\), ironing is strict there. The corresponding intervals in value space are
\[
I_m=[\ell_m,h_m]
=[v(2u_m),v(u_m)]
=[1-\sqrt{2u_m},\,1-\sqrt{u_m}].
\]
They satisfy \(0<\ell_m<h_m<\ell_{m+1}<1\) and \(\ell_m\to1\).

\emph{Property $(iii)$.} The connection between the revenue curve and virtual values is given, wherever the relevant derivatives exist, by
\[
R'(u)=v(u)+uv'(u)
=v(u)-\frac{u}{f(v(u))}
=\varphi_F(v(u)).
\]
Similarly, the ironed virtual value at \(v(u)\) is the slope of the concave envelope of \(R\). On each \(J_m\), this slope is \(A_m>0\), so the ironed virtual value is constant on the interior of \(I_m\), with \(c_m=A_m<c_{m+1}\). Outside these intervals, the envelope has derivative \(1-\tfrac32\sqrt u\), which is strictly decreasing in \(u\). Since \(v(u)\) is strictly decreasing, the ironed virtual value is strictly increasing in value outside the \(I_m\). Finally, the derivative is negative for \(u>4/9\), corresponding to values in \((0,1/3)\).

\emph{Property $(iv)$.} Since \(F(v(u))=1-u\), we have \(\rho_m=u_m\) and \(F(\ell_m)=1-2u_m\to1\). Moreover,
\[
\frac{h_m-\ell_m}{\rho_m}
=\frac{\sqrt2-1}{\sqrt{u_m}}\to\infty.
\]
\end{proof}

\emph{Step 1: fixing a prior.}
Let \(X^0=(\mathcal A^0,\Phi^0)\in\mathcal C\) be a standard,
payment-regular auction. Suppose toward a contradiction that, for every symmetric IPV prior, some mechanism in \(\mathcal C\) admits a revenue-maximizing symmetric equilibrium. Fix a distribution \(F\) satisfying Lemma~\ref{lemma:distconstr}.\footnote{It is worth noting that alternative proofs using distributions with a single ironing interval are possible, but the arguments we have developed require stronger technical assumptions.} Then some symmetric mechanism \(X^1=(\mathcal A^1,\Phi^1)\in\mathcal C\) has a symmetric equilibrium \(\sigma\) that attains optimal revenue under \(F\). For \(r\in\{0,1\}\), write \(q_i^r(a)\) and \(p_i^r(a)\) for bidder \(i\)'s allocation probability and expected payment in \(X^r\).

\emph{Step 2: equilibrium allocation under \(X^1\).}
Let \(Q_i^{\mathrm{eq}}(\theta_i)\) denote bidder \(i\)'s interim allocation probability when her type is \(\theta_i\) and bidders play \(\sigma\) in \(X^1\). Since this equilibrium attains optimal revenue and bidders can choose not to participate, the optimality conditions of \citet{myerson1981optimal} imply:
\begin{enumerate}
\item For almost every type profile at which the highest ironed virtual value is strictly positive, the object is allocated with probability one among bidders attaining that value.
\item For each bidder \(i\), \(Q_i^{\mathrm{eq}}\) is constant for almost every type in each \(I_m\), because ironing is strict throughout its interior.
\end{enumerate}

\emph{Step 3: nonparticipation eliminates the additive payoff term.}
Strategic analogy of \(X^0\) and \(X^1\) gives bijections
\(\alpha_i:\mathcal A_i^1\to\mathcal A_i^0\) and
\(\tau_i:\mathbb R\to\mathbb R\), and functions \(\kappa_i>0\) and
\(\lambda_i\), such that, for every \(i\), \(\theta\), and \(a\),
\begin{equation}\label{eq:ss-payoff-myerson}
\tau_i(\theta)q_i^0(\alpha(a))-p_i^0(\alpha(a))
=\kappa_i(\theta)\bigl(\theta q_i^1(a)-p_i^1(a)\bigr)
+\lambda_i(\theta).
\end{equation}
Let \(\bot_i\) be a nonparticipation action in \(X^1\). If
\(\alpha_i(\bot_i)\) were an active bid \(b\), varying the other
bidders' relabeled actions between unanimous nonparticipation and
unanimous bids of \(b\) would change bidder \(i\)'s allocation from
\(1\) to \(1/N\). At both profiles, however, the right-hand side of
\eqref{eq:ss-payoff-myerson} would equal \(\lambda_i(\theta)\).
Subtracting would force \((1-1/N)\tau_i(\theta)\) to be constant,
contradicting surjectivity of \(\tau_i\). Therefore
\(\alpha_i(\bot_i)=\varnothing\), and substitution into
\eqref{eq:ss-payoff-myerson} gives \(\lambda_i\equiv0\).

Fix bidder \(i\). Evaluating \eqref{eq:ss-payoff-myerson} at
\(\theta=0\) and \(\theta=1\) gives, for every action profile \(a\),
\begin{equation}\label{eq:myerson-payment-zero}
p_i^0(\alpha(a))
=\tau_i(0)q_i^0(\alpha(a))+\kappa_i(0)p_i^1(a),
\end{equation}
\begin{equation}\label{eq:myerson-allocation-one}
q_i^1(a)
=\frac{\tau_i(1)-\tau_i(0)}{\kappa_i(1)}q_i^0(\alpha(a))
+\frac{\kappa_i(1)-\kappa_i(0)}{\kappa_i(1)}p_i^1(a).
\end{equation}
Both \(a\mapsto q_i^0(\alpha(a))\) and \(a\mapsto q_i^1(a)\)
attain the value of \(1\). For the former, this follows from \(X^0\) being a standard auction and from \(\alpha\) being surjective. For the latter, it follows
because, under \(\sigma\), bidder \(i\) receives the object with
probability one for almost every type profile at which she alone
has a strictly positive ironed virtual value, an event of positive
probability.

\emph{Step 4: allocations and payments are linearly independent.}
We now show that \(a\mapsto q_i^0(\alpha(a))\) and
\(a\mapsto p_i^1(a)\) are linearly independent. Suppose otherwise.
Since the former is not identically zero, there is a constant \(e\)
such that \(p_i^1(a)=e q_i^0(\alpha(a))\) for every \(a\).
Equations~\eqref{eq:myerson-payment-zero} and
\eqref{eq:myerson-allocation-one} then imply
\[
p_i^0(\alpha(a))=d q_i^0(\alpha(a)),
\qquad
q_i^1(a)=D q_i^0(\alpha(a))
\]
for constants \(d,D\). Since both allocation probabilities lie in
\([0,1]\) and attain \(1\), we must have \(D=1\). Surjectivity of
\(\alpha\) therefore implies that bidder \(i\)'s payoff in \(X^0\)
at any action profile \(b\) when her type is $t$ is
\[
tq_i^0(b)-p_i^0(b)=(t-d)q_i^0(b).
\]
Against any distribution of opposing bids, bidding arbitrarily high
makes the winning probability approach one, while nonparticipation
gives allocation zero. Thus, a best response with interim allocation
strictly between zero and one is possible only for \(t=d\).

Under \(\sigma\), however, almost every type in each \(I_m\) has
interim allocation strictly between zero and one: there is positive
probability that all opponents have lower ironed virtual values,
and positive probability that an opponent has a higher ironed
virtual value. By \eqref{eq:ss-payoff-myerson}, the relabeled
equilibrium strategy of each such type \(\theta\) is a best response
in \(X^0\) for type \(\tau_i(\theta)\), against the relabeled
opponents' play. Since allocations coincide, it has the same
interim allocation. Hence \(\tau_i(\theta)=d\) for almost every
\(\theta\in I_m\), contradicting injectivity of \(\tau_i\).

\emph{Step 5: allocations coincide, and payments and types transform affinely.}
Substituting \eqref{eq:myerson-payment-zero} and
\eqref{eq:myerson-allocation-one} into
\eqref{eq:ss-payoff-myerson} and using linear independence to
compare coefficients on \(p_i^1(a)\) yields
\[
\kappa_i(\theta)
\left(1-\frac{\kappa_i(1)-\kappa_i(0)}{\kappa_i(1)}\theta\right)
=\kappa_i(0)
\qquad\text{for every }\theta\in\mathbb R.
\]
Since \(\kappa_i(\theta)>0\) for every \(\theta\in\mathbb R\),
the coefficient of \(\theta\) in parentheses must vanish.
Thus \(\kappa_i\equiv\kappa_i(0)\), and
\eqref{eq:myerson-allocation-one} becomes
\[
q_i^1(a)
=\frac{\tau_i(1)-\tau_i(0)}{\kappa_i(0)}q_i^0(\alpha(a)).
\]
Since both allocation probabilities lie in \([0,1]\) and attain
\(1\), the proportionality factor equals \(1\). Hence
\(q_i^1(a)=q_i^0(\alpha(a))\) for every \(a\).

Writing \(c_i:=\kappa_i(0)>0\) and \(d_i:=\tau_i(0)\), equations~\eqref{eq:ss-payoff-myerson} and \eqref{eq:myerson-payment-zero} then give
\begin{equation}\label{eq:anchored-affine-myerson}
q_i^0(\alpha(a))=q_i^1(a),
\quad \tau_i(\theta)=c_i\theta+d_i, \quad
p_i^0(\alpha(a))=c_i p_i^1(a)+d_iq_i^1(a).
\end{equation}

\emph{Step 6: each ironing interval produces one bid atom.}
The action maps are common across bidders. Indeed, symmetry of \(X^1\)
implies equal allocation probabilities at any unanimous action profile.
By \eqref{eq:anchored-affine-myerson}, the relabeled profile in
\(X^0\) also gives every bidder the same allocation probability.
Under the standard allocation rule this is possible only if every
relabeled action is nonparticipation or all are the same active bid.
Thus \(\alpha_i=\alpha_j\) for all \(i,j\); write \(\alpha\) for
their common map. Opponents' relabeled bids are independent draws from
\(\mu:=\int\alpha_\#\sigma(\theta)\,dF(\theta)\). Let \(Q^0(b)\)
denote the interim allocation from relabeled action \(b\) against
\(\mu^{N-1}\).

By the equality condition in the ironing revenue bound, each bidder's
equilibrium interim allocation is constant, say
\(\bar Q_m\in(0,1)\), for \(F\)-almost every \(\theta\in I_m\).
Positivity follows from the positive probability that every opponent
has smaller ironed virtual value; the upper bound follows from the
positive probability of an opponent having larger ironed virtual
value. Moreover, almost every pure action used by these types has that
same interim allocation. To see this, equilibrium utility is finite
because nonparticipation yields zero and payments are nonnegative.
A best-response mixture therefore uses optimal pure actions almost
surely. For an interior type \(\theta\) and such an action with
interim allocation \(Q\), comparison with equilibrium mixtures of
types \(\theta^-<\theta<\theta^+\) in \(I_m\), both having
allocation \(\bar Q_m\), gives
\[
(\theta-\theta^-)(Q-\bar Q_m)\ge0,
\qquad
(\theta^+-\theta)(\bar Q_m-Q)\ge0.
\]
Hence \(Q=\bar Q_m\), as claimed.

Consider the conditional bid distribution
\(\mu^m:=\int_{I_m}\alpha_\#\sigma(\theta)\,dF(\theta\mid I_m)\).
It assigns no mass to nonparticipation, and
\(Q^0(b)=\bar Q_m\) for \(\mu^m\)-almost every \(b\). If
\(\mu^m\) were not a point mass, we could choose \(b^-<b^+\) with
these equal interim allocations and \(\mu^m([b^-,b^+])>0\). Then
\(\mu([b^-,b^+])>0\), so independence gives positive probability
that the highest opposing bid lies in \([b^-,b^+]\). Raising the bid
from \(b^-\) to \(b^+\) weakly increases allocation at every profile
and strictly increases it on this event, including its endpoints under
uniform tie-breaking. This contradicts equality of the interim
allocations. Therefore \(\mu^m=\delta_{b_m}\) for some active bid
\(b_m\).

Conditional on all opponents having types in \(I_m\), almost every
type strictly below \(\ell_m\) must lose, whereas almost every type
strictly above \(h_m\) must win: their ironed virtual values are
strictly below and above that on \(I_m\), respectively. It follows
that almost all lower types use relabeled actions below \(b_m\),
including nonparticipation, and almost all higher types use bids
strictly above \(b_m\). In particular,
\(\mu(\{b_m\})=\rho_m\) and
\(s_m:=\mu(\{\varnothing\}\cup[\underline a,b_m))=F(\ell_m)\).
The bids \(b_m\) are strictly increasing. For all sufficiently large
\(m\), they are bounded away from \(\underline a\): fix any active
pooling bid from an earlier interval and use the next, strictly higher
pooling bid as a common lower bound.

\emph{Step 7: payments remain uniformly bounded.}
Let \(Q^{\mathrm{eq}}(\theta)\) denote a bidder's equilibrium interim
allocation in \(X^1\). Incentive compatibility implies that it is
weakly increasing. Choose separating types \(\theta_n\uparrow1\)
at which the revenue-optimal allocation characterization holds.
Then \(Q^{\mathrm{eq}}(\theta_n)=F(\theta_n)^{N-1}\to1\).
Since equilibrium optimality is required also at \(\theta=1\),
monotonicity gives \(Q^{\mathrm{eq}}(1)=1\). By
\eqref{eq:anchored-affine-myerson}, almost every relabeled bid used
by that type also has interim allocation one. Choose one such finite
active bid \(B\). Uniform tie-breaking implies
\(\mu([B,\infty))=0\); otherwise there would be a positive
probability of losing or tying when bidding \(B\).

All opponents' relabeled actions thus lie almost surely in \(\{\varnothing\}\cup[\underline a,B]\). Fix bidder \(i\). We now show that her payments remain bounded as we vary her bid and her opponents' actions within these bounds. To that end, choose a compact interval strictly above \(\underline a\) containing \(b_m\) and \(b_m\pm\epsilon\) for all sufficiently large \(m\) and sufficiently small \(\epsilon>0\). Since \(p_i^0\) is jointly locally bounded and there are finitely many possible participant sets, compactness gives a uniform bound \(0\le p_i^0\le M_i<\infty\) whenever bidder \(i\)'s bid lies in this interval and all opponents' actions lie in \(\{\varnothing\}\cup[\underline a,B]\).

\emph{Step 8: a small bid atom can support only a short interval of types.}
Fix a sufficiently large \(m\). Let \(T_m\) be the event that the
highest active opposing bid equals \(b_m\), and write
\(q_m(z):=q_i^0(b_m,z)\). On \(T_m\), define the payment difference
\(\Delta_m(z):=\pi_{i,w}^0(b_m,z)-\pi_{i,\ell}^0(b_m,z)\), which
may depend on the entire opposing profile \(z\). Own-bid continuity and the uniform bound from Step~7 imply
\(0\le\pi_{i,w}^0(b_m,z),\pi_{i,\ell}^0(b_m,z)\le M_i\),
by approaching the tied bid from above and below, respectively.
Hence \(\lvert\Delta_m(z)\rvert\le M_i\).

For almost every \(\theta\in I_m\), the bid \(b_m\) is a best
response in \(X^0\) for type \(t=c_i\theta+d_i\). Compare it with
\(b_m+\epsilon\) and \(b_m-\epsilon\). Off \(T_m\), sufficiently
small deviations preserve winning or losing status, so the payoff
differences converge to zero. On \(T_m\), the limits of the payoff
from \(b_m\) minus the payoff from the upward and downward deviations
are, respectively,
\[
(1-q_m(z))(\Delta_m(z)-t),
\qquad
q_m(z)(t-\Delta_m(z)).
\]
Dominated convergence applies because the absolute payoff differences
are bounded by \(\lvert t\rvert+2M_i\). Writing expectations under
\(\mu^{N-1}\), the two incentive inequalities therefore give
\(D_m^-\le t\le D_m^+\), where
\[
D_m^-:=
\frac{\mathbb E[\mathbf1_{T_m}q_m\Delta_m]}
     {\mathbb E[\mathbf1_{T_m}q_m]},
\qquad
D_m^+:=
\frac{\mathbb E[\mathbf1_{T_m}(1-q_m)\Delta_m]}
     {\mathbb E[\mathbf1_{T_m}(1-q_m)]}.
\]
Both denominators are positive, and \(\lvert D_m^-\rvert\le M_i\).

On the event that exactly one opponent bids \(b_m\) and all others
bid below it, including nonparticipation, we have \(q_m=1/2\).
This event has probability \(e_m=(N-1)\rho_m s_m^{N-2}\). Let
\(z_m\) denote the probability of the remaining part of \(T_m\),
on which at least two opponents bid \(b_m\). A union bound gives
\(z_m\le\binom{N-1}{2}\rho_m^2\). Since \(1-2q_m=0\) on the
first event and \(0\le1-2q_m\le1\) on the second,
\[
D_m^+-D_m^-
=
\frac{\mathbb E[\mathbf1_{T_m}(1-2q_m)(\Delta_m-D_m^-)]}
     {\mathbb E[\mathbf1_{T_m}(1-q_m)]}\le \frac{4M_i z_m}{e_m}
\le \frac{2M_i(N-2)\rho_m}{s_m^{N-2}}.
\]
For \(N=2\), the bound is zero. For every \(N\ge2\), we have
\(s_m\to1\), so there is a finite constant \(C_i\), independent
of \(m\), such that \(D_m^+-D_m^-\le C_i\rho_m\) for all
sufficiently large \(m\).

The incentive bounds \(D_m^-\le c_i\theta+d_i\le D_m^+\) hold for
almost every type in \(I_m\). Since \(F\) has a positive density
almost everywhere there, these types have images under \(\tau_i\)
arbitrarily close to both endpoints of
\([c_i\ell_m+d_i,c_i h_m+d_i]\). It follows that
\(c_i(h_m-\ell_m)\le D_m^+-D_m^-\le C_i\rho_m\), contradicting
property $(iv)$ because \(c_i>0\). This
completes the proof.

\subsection{Proof of Proposition~\ref{prop:procurement-scores}}

We first consider the linear-score family. Fix
\(\lambda_1,\lambda_2\in(0,1)\). For each seller \(i\), define
\[
\alpha_i(b_i):=b_i+\left(\frac{1}{\lambda_2}-\frac{1}{\lambda_1}\right)q_i,
\qquad
\tau_i(c_i):=c_i-\left(\frac{1}{\lambda_2}-\frac{1}{\lambda_1}\right)q_i,
\qquad
\kappa_i(c_i):=1,
\qquad
\lambda_i(c_i):=0.
\]
Each \(\alpha_i:\mathbb R\to\mathbb R\) and each
\(\tau_i:\mathbb R\to\mathbb R\) is a bijection.

For every bid profile \(b\),
\[
S^{L}_{\lambda_2}(\alpha(b))_i=
-\lambda_2\left(b_i+\left(\frac{1}{\lambda_2}-\frac{1}{\lambda_1}\right)q_i\right)+(1-\lambda_2)q_i  =
\frac{\lambda_2}{\lambda_1}
\left[-\lambda_1 b_i+(1-\lambda_1)q_i\right]  =
\frac{\lambda_2}{\lambda_1}S^L_{\lambda_1}(b)_i .
\]
Thus all sellers' scores are multiplied by the same positive constant. Hence the set of highest-scoring sellers, and therefore the winner selected by the fixed tie-breaking rule, is unchanged:
\[
w_{L(\lambda_2)}(\alpha(b))=w_{L(\lambda_1)}(b).
\]
Therefore, for every seller \(i\), every cost \(c_i\), and every bid profile
\(b\),
\[
\begin{aligned}
U_i^{L(\lambda_2)}[c_i,\alpha(b)]
&=
\mathbf{1}\{w_{L(\lambda_1)}(b)=i\}
\left(b_i+\left(\frac{1}{\lambda_2}-\frac{1}{\lambda_1}\right)q_i-c_i\right)  \\
&=
\mathbf{1}\{w_{L(\lambda_1)}(b)=i\}
\Big(b_i-\Big(c_i-\left(\frac{1}{\lambda_2}-\frac{1}{\lambda_1}\right)q_i\Big)\Big) =
U_i^{L(\lambda_1)}[\tau_i(c_i),b].
\end{aligned}
\]
This is the affine strategic analogy condition with
\(\kappa_i(c_i)=1\) and \(\lambda_i(c_i)=0\). Hence
\(\{L(\lambda):\lambda\in(0,1)\}\) is a class of strategically analogous mechanisms.

Now consider the ratio-score family. Suppose the fixed quality scores are not all equal. Choose sellers \(1\) and \(2\) such that $q_1=\max_j q_j$ and $q_1>q_2.$ Choose \(\lambda_1,\lambda_2\in(0,1)\) such that
\begin{equation}\label{eq:lambdasdefine}   
\lambda_1<\frac{(q_1-q_2)}{1+(q_1-q_2)} <\lambda_2.
\end{equation}

We first show that seller \(2\) can never win in \(R(\lambda_1)\). For any bid profile \(b\in\mathbb{R}_{++}^N\), we have ${\min_j b_j}/{b_2}\le 1$, and so seller \(2\)'s score is at most $\lambda_1+(1-\lambda_1)q_2.$ Because \(\lambda_1<\frac{(q_1-q_2)}{1+(q_1-q_2)}\), we have $\lambda_1+(1-\lambda_1)q_2
<
(1-\lambda_1)q_1.$ Moreover, note that seller \(1\)'s score is strictly larger than \((1-\lambda_1)q_1\). Thus seller \(2\)'s score is always strictly below seller \(1\)'s score, and so seller \(2\) never wins in \(R(\lambda_1)\).

Next we show that seller \(2\) can win in \(R(\lambda_2)\). By \eqref{eq:lambdasdefine}, we then have $\lambda_2+(1-\lambda_2)q_2
>
(1-\lambda_2)q_1.$ Choose \(b_2>0\); we can then make all \(b_j\) for $j\neq 2$ sufficiently large that ${\min_k b_k}/{b_2}=1$ and
\[
\frac{\min_k b_k}{b_j}
=
\frac{b_2}{b_j}
\]
is arbitrarily close to \(0\) for every \(j\neq 2\). Hence, seller \(2\)'s score is $\lambda_2+(1-\lambda_2)q_2,$ while every other seller's score is arbitrarily close to $(1-\lambda_2)q_j
\le
(1-\lambda_2)q_1.$ Thus, for \(b_j\) large enough for all \(j\neq 2\), seller \(2\) is the unique winner in \(R(\lambda_2)\).

We now prove that \(R(\lambda_1)\) and \(R(\lambda_2)\) cannot be strategically analogous. Suppose, toward a contradiction, that they are strategically analogous. Then there exist bijections \(\alpha_i:\mathbb{R}_{++}\to\mathbb{R}_{++}\), bijections \(\tau_i:\mathbb R\to\mathbb R\), and functions
\(\kappa_i:\mathbb R\to\mathbb{R}_{++}\) and
\(\lambda_i:\mathbb R\to\mathbb R\) such that, for every seller \(i\), every cost
\(c_i\), and every bid profile \(b\),
\[
U_i^{R(\lambda_2)}[c_i,\alpha(b)]
=
\kappa_i(c_i)U_i^{R(\lambda_1)}[\tau_i(c_i),b]+\lambda_i(c_i).
\]

Apply this to seller \(2\). Since seller \(2\) never wins in \(R(\lambda_1)\), we have $U_2^{R(\lambda_1)}[\tau_2(c_2),b]=0$ for every $c_2$ and every $b.$ Therefore $U_2^{R(\lambda_2)}[c_2,\alpha(b)]
=
\lambda_2(c_2)$ for every $c_2$ and every $b$. Since \(\alpha\) is onto, this implies that, for every fixed \(c_2\),
seller \(2\)'s payoff in \(R(\lambda_2)\) is constant across all bid profiles, which is not the case. In \(R(\lambda_2)\), seller \(2\) loses at some bid profiles, so her payoff is \(0\). She also wins at some bid profiles. Taking \(c_2=0\) and a winning bid profile with \(b_2>0\), her payoff is \(b_2>0\). Thus her payoff is not constant across bid profiles, a contradiction.

\subsection{Proof of Proposition \ref{prop:pricing-analogy}}

\emph{Input-based pricing.} Fix two efficacy levels \(e,e'>0\). Write $\eta:=\frac{e'}{e}.$ For each buyer \(i\), choose the action and type bijections $\alpha_i^{e\to e'}(a_i):=a_i$ and $\tau_i^{e'\to e}(v_i)(x):=v_i(\eta x).$ The former is a bijection because the action space is \([0,1]\) in every efficacy
state. The latter is well-defined because if $v_i\in \mathcal V$ then $\tau_i^{e'\to e}(v_i) \in \mathcal V$, and is a bijection because it is invertible via $(\tau_i^{e'\to e})^{-1}(w_i)(x)=w_i(x/\eta).$ Take any action profile \(a\in [0,1]^{\mathcal I}\). Since the action relabeling is the
identity, the requested capacities are unchanged. Therefore the rationing
factor $\min \{1,{1}/{\sum_{j\in\mathcal I} a_j}\}$ is unchanged, and so the allocated capacity
\[
z_i(a)= \begin{cases} a_i\,\min\left\{1,\,1\big/\textstyle\sum_{j\in\mathcal I}a_j\right\}, & \sum_{j\in\mathcal I}a_j>0,\\ 0,& \text{otherwise}, \end{cases}
\]
is the same under efficacy \(e\) and \(e'\). Hence buyer \(i\)'s output
scales with efficacy:
\[
x_i(\alpha^{e\to e'}(a),e')
=
e'\,z_i(a)
=
\eta\, e\,z_i(a)
=
\eta\, x_i(a,e).
\]
Payments are unchanged because the tariff depends only on the allocated
capacity:
\[
p_i(\alpha^{e\to e'}(a))
=
P(z_i(a))
=
p_i(a).
\]

Now fix buyer \(i\), type \(v_i\in\mathcal V\), and action profile
\(a\in [0,1]^{\mathcal I}\). Using the identities above,
\[
U_i^{X^{\mathrm{in}}_{e'}}[v_i,\alpha^{e\to e'}(a)]
=
v_i\left(x_i(\alpha^{e\to e'}(a),e')\right)
-
p_i(\alpha^{e\to e'}(a)) =
v_i(\eta\, x_i(a,e))-P(z_i(a)),
\]
\[
U_i^{X^{\mathrm{in}}_e}[\tau_i^{e'\to e}(v_i),a]
=
\tau_i^{e'\to e}(v_i)\left(x_i(a,e)\right)
-
p_i(a) =
v_i(\eta\, x_i(a,e))-P(z_i(a)).
\]
Therefore, $U_i^{X^{\mathrm{in}}_{e'}}[v_i,\alpha^{e\to e'}(a)]
=
U_i^{X^{\mathrm{in}}_e}[\tau_i^{e'\to e}(v_i),a].$ Thus \(X^{\mathrm{in}}_e\) and \(X^{\mathrm{in}}_{e'}\) are strategically analogous with $\kappa_i(v_i)=1$ and $\lambda_i(v_i)=0.$

\emph{Output-based pricing.} We show that
\(\{X^{\mathrm{out}}_e:e>0\}\) is a class of strategically analogous
mechanisms if and only if \(P(x)=Ax^\rho\) for some \(A>0\) and
\(\rho>1\). We first prove necessity. Suppose it is a family of strategically analogous mechanisms. Fix two efficacy levels \(e,e'>0\), and let \(X^{\mathrm{out}}_e\) and \(X^{\mathrm{out}}_{e'}\) denote their corresponding mechanisms. Then there exist bijections
\(\alpha_i:[0,e]\to[0,e']\), \(\tau_i:\mathcal V\to\mathcal V\) and functions
\(\kappa_i:\mathcal V\to\mathbb R_{++}\),
\(\lambda_i:\mathcal V\to\mathbb R\) such that, for every buyer \(i\), every
\(v_i\in\mathcal V\), and every \(a\in[0,e]^N\),
\begin{equation}\label{eq:raw-affine-sim}
U_i^{X^{\mathrm{out}}_{e'}}[v_i,\alpha(a)]
=
\kappa_i(v_i)\,U_i^{X^{\mathrm{out}}_e}[\tau_i(v_i),a]
+
\lambda_i(v_i).
\end{equation}

For an action profile \(a\in[0,e]^N\), write \(x_i(a,e)\) for buyer \(i\)'s
allocation in efficacy state \(e\).

We first show that \(x_i(\alpha(a),e')\) depends on \(a\) only through \(x_i(a,e)\), i.e.\ that there exists a function \(h_i:[0,e]\to[0,e']\) such that
\begin{equation}\label{eq:h-def}
x_i(\alpha(a),e') = h_i\bigl(x_i(a,e)\bigr)
\qquad\text{for every } a\in[0,e]^N.
\end{equation}

Let \(0\in\mathcal V\) denote a constant zero valuation. Applying
\eqref{eq:raw-affine-sim} to \(v_i\equiv0\) gives
\[
-P(x_i(\alpha(a),e'))
=
\kappa_i(0)\left(\tau_i(0)(x_i(a,e))-P(x_i(a,e))\right)
+
\lambda_i(0).
\]
The right-hand side depends on \(a\) only through \(x_i(a,e)\). Since \(P\) is
strictly increasing, it follows that \(x_i(\alpha(a),e')\) also depends on
\(a\) only through \(x_i(a,e)\), establishing \eqref{eq:h-def}.

\begin{fact}\label{fact:alpha-zero}
\(\alpha_i(0)=0\) for every \(i\).
\end{fact}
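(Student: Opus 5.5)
We argue directly from the allocation map \(h_i\) in \eqref{eq:h-def}. The idea is that bidding \(0\) in \(X^{\mathrm{out}}_e\) gives buyer \(i\) zero output no matter what the opponents do. So the image action \(\alpha_i(0)\) must give her a fixed output in \(X^{\mathrm{out}}_{e'}\) against \emph{every} opponent profile. Under proportional rationing, only the zero request has that property.

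\emph{Step 1: the image of the zero request has constant output.} Fix buyer \(i\) and any \(a_{-i}\in[0,e]^{N-1}\). At \(a=(0,a_{-i})\), the definition of \(x_i(\cdot,e)\) gives \(x_i(a,e)=0\). Both cases apply here: if \(\sum_j a_j>0\), the output is \(0\cdot\min\{1,e/\sum_j a_j\}=0\); otherwise it is \(0\) by convention. By \eqref{eq:h-def},
\[
x_i\bigl((\alpha_i(0),\alpha_{-i}(a_{-i})),e'\bigr)=h_i(0)
\qquad\text{for every }a_{-i}\in[0,e]^{N-1}.
\]
Each \(\alpha_j:[0,e]\to[0,e']\) is onto, so \(\alpha_{-i}(a_{-i})\) ranges over all of \([0,e']^{N-1}\). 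Hence, writing \(c:=\alpha_i(0)\in[0,e']\), the output \(x_i((c,b_{-i}),e')\) equals \(h_i(0)\) for every \(b_{-i}\in[0,e']^{N-1}\).

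\emph{Step 2: a positive request cannot have constant output.} Suppose \(c>0\) and compare two opponent profiles.
\begin{enumerate}
\item If \(b_{-i}=0\), total demand is \(c\le e'\), so there is no rationing and \(x_i=c\).
\item Since \(N\ge2\), we may instead let one opponent request \(e'\) and the rest request \(0\). Total demand is then \(c+e'>e'\), so rationing binds and
\[
x_i=c\cdot\frac{e'}{c+e'}<c.
\]
\end{enumerate}
These two outputs differ, which contradicts Step 1. Therefore \(c=0\), that is, \(\alpha_i(0)=0\). The buyer \(i\) was arbitrary, so this holds for every \(i\).

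I do not expect any real obstacle. The only points to check are that surjectivity of \(\alpha_{-i}\) lets us choose opponent requests freely in \([0,e']\), and that \(N\ge2\) provides an opponent who can create rationing.
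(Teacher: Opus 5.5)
Your proof is correct and follows essentially the same route as the paper. Both arguments use \eqref{eq:h-def} and surjectivity of \(\alpha_{-i}\) to conclude that the output from \(\alpha_i(0)\) cannot depend on opponents' requests, and then show that a positive \(\alpha_i(0)\) yields different outputs with and without rationing (the paper sets the other opponents' images to zero and sweeps one opponent's image over \([0,e']\), of which your two-point comparison is a concrete instance).
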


\begin{proof}
Fix \(i\). If \(a_i=0\), then \(x_i(a,e)=0\) for every \(a_{-i}\). Therefore,
by \eqref{eq:h-def}, $x_i((\alpha_i(0),\alpha_{-i}(a_{-i})),e')$ is independent of \(a_{-i}\). Suppose \(\alpha_i(0)>0\). Choose some
\(j\neq i\). For each \(\ell\neq i,j\), choose \(a_\ell\in[0,e]\) such that
\(\alpha_\ell(a_\ell)=0\), which is possible because \(\alpha_\ell\) is onto.
Now vary \(a_j\) so that \(\alpha_j(a_j)\) ranges over all of \([0,e']\). Then
buyer \(i\)'s allocation under efficacy \(e'\) is
\[
\alpha_i(0)
\min\left\{
1,\frac{e'}{\alpha_i(0)+\alpha_j(a_j)}
\right\}.
\]
This expression is not constant as \(\alpha_j(a_j)\) varies from \(0\) to
\(e'\), because \(\alpha_i(0)>0\). This contradicts the independence from
\(a_{-i}\).
\end{proof}

Now take a profile in which buyer \(i\) requests \(x\in[0,e]\) and every other
buyer requests \(0\). Since \(\alpha_j(0)=0\) for every \(j\neq i\), buyer
\(i\)'s allocation under efficacy \(e'\) is \(\alpha_i(x)\). Therefore $h_i(x)=\alpha_i(x).$ Thus \eqref{eq:h-def} becomes
\begin{equation}\label{eq:allocation-relabeling}
x_i(\alpha(a),e')
=
\alpha_i(x_i(a,e))
\qquad
\text{for every }a\in[0,e]^N.
\end{equation}

Normalize the action relabelings by defining $f_i(z):=\frac{\alpha_i(ez)}{e'}$ for $z\in[0,1].$ Each \(f_i\) is a bijection from \([0,1]\) to \([0,1]\), and \(f_i(0)=0\).

\begin{fact}\label{fact:f-increasing}
Each \(f_i\) is strictly increasing.
\end{fact}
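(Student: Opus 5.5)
The plan is to exploit the allocation identity \eqref{eq:allocation-relabeling} on profiles in which only two buyers are active, together with Fact~\ref{fact:alpha-zero}. Since $N\ge 2$, fix $i$ and some $j\neq i$, and consider profiles in which $i$ requests $eu$, $j$ requests $ew$ with $u,w\in(0,1]$, and every other buyer requests $0$. By Fact~\ref{fact:alpha-zero} the relabeled profile has the other buyers still at $0$. So on both sides only buyers $i$ and $j$ are active, with requests $eu,ew$ under efficacy $e$ and $e'f_i(u),e'f_j(w)$ under efficacy $e'$. Note also that $f_i(u),f_j(w)>0$ for $u,w>0$, since $f_i,f_j$ are bijections with $f(0)=0$.

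\emph{Step 1: rationing occurs in one mechanism iff it occurs in the other.} I will show that, for $u,w\in(0,1]$,
\[
u+w\le 1 \iff f_i(u)+f_j(w)\le 1 .
\]
First suppose $u+w\le1$. Then $x_i(a,e)=eu$, so \eqref{eq:allocation-relabeling} requires $i$'s allocation under $e'$ to equal $\alpha_i(eu)=e'f_i(u)$. That allocation is $e'f_i(u)\min\{1,1/(f_i(u)+f_j(w))\}$. Since $f_i(u)>0$, the minimum must equal $1$, which gives $f_i(u)+f_j(w)\le1$.

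Conversely, suppose $u+w>1$. Then $i$ is rationed, and $x_i(a,e)=eu'$ with $u'=u/(u+w)<u$. Suppose, for contradiction, that $f_i(u)+f_j(w)\le1$. Then $i$'s allocation under $e'$ is $e'f_i(u)$, and \eqref{eq:allocation-relabeling} would give $f_i(u')=f_i(u)$ with $u'\neq u$. This contradicts injectivity of $f_i$.

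\emph{Step 2: monotonicity of $f_j$.} Define $g(s):=1-f_i(1-s)$ for $s\in[0,1)$. Substituting $s=1-u$, the equivalence of Step 1 becomes
\[
w\le s \iff f_j(w)\le g(s)\qquad\text{for all } w\in(0,1],\ s\in[0,1).
\]
Now take $0<w<w'\le1$ and set $s=w<1$. Since $w\le s$, the equivalence gives $f_j(w)\le g(w)$. Since $w'>s$, it gives $f_j(w')>g(w)$. Hence $f_j(w)<f_j(w')$. The remaining case $w=0$ follows because $f_j(0)=0<f_j(w')$ for every $w'>0$.

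The roles of $i$ and $j$ are symmetric, and $j$ was an arbitrary buyer. Therefore every $f_i$ is strictly increasing.

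The main obstacle is the converse direction of Step 1. The forward direction is immediate from positivity of $f_i(u)$. The converse is where injectivity of $f_i$ is needed, to rule out the possibility that rationing under efficacy $e$ is matched by an unrationed allocation under $e'$. Some care is also needed to keep $u,w>0$ throughout, so that $f_i(u)>0$ and the rationing formula applies, and to check the endpoint $s=w<1$ used in Step 2.
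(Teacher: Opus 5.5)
Your proof is correct and takes essentially the same route as the paper. Both use two-buyer profiles, use positivity of $f_i$ away from zero to show that unrationed profiles stay unrationed after relabeling, and use injectivity to show that rationed profiles stay rationed; the only difference is cosmetic, since the paper sandwiches with an interior opponent request $c\in(1-b,1-a)$, whereas you combine both implications into a biconditional and use the boundary request $1-w$ as the threshold.
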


\begin{proof}
Fix \(0\leq a<b\leq 1\).
If \(a=0\), then \(f_i(a)=0<f_i(b)\), since \(f_i\) is injective and
\(f_i(0)=0\). So suppose \(a>0\). Choose \(j\neq i\) and choose $c\in(1-b,1-a).$ Consider profiles in which only buyers \(i\) and \(j\) make positive requests.

At normalized requests \((a,c)\), total demand is below capacity, so buyer
\(i\)'s output under efficacy \(e\) is \(ea\). By
\eqref{eq:allocation-relabeling}, the relabeled allocation under efficacy
\(e'\) must be \(\alpha_i(ea)\). Since \(a>0\), we have
\(\alpha_i(ea)>0\). If the relabeled profile were rationed, buyer \(i\)'s
allocation would be strictly below her relabeled request \(\alpha_i(ea)\), a
contradiction. Hence the relabeled profile is not rationed, and therefore $f_i(a)+f_j(c)\leq 1.$

At normalized requests \((b,c)\), total demand exceeds capacity, so buyer
\(i\)'s output under \(e\) is $e\frac{b}{b+c}<eb.$ By \eqref{eq:allocation-relabeling}, the relabeled allocation under efficacy
\(e'\) must be $\alpha_i\left(e\frac{b}{b+c}\right).$ If the relabeled profile were not rationed, buyer \(i\)'s allocation would be
her relabeled request \(\alpha_i(eb)\). But $e\frac{b}{b+c}<eb$ and \(\alpha_i\) is injective, so $\alpha_i\left(e\frac{b}{b+c}\right)\neq \alpha_i(eb).$ Hence the relabeled profile must be rationed, and therefore $f_i(b)+f_j(c)>1.$ Combining the two inequalities gives $f_i(a)<f_i(b)$.
\end{proof}

Since \(f_i\) is a strictly increasing
bijection from \([0,1]\) onto \([0,1]\), it is continuous and satisfies
\(f_i(1)=1\). Now, fix distinct buyers \(i\neq j\) and note two implications of
\eqref{eq:allocation-relabeling}. First, if \(a,b\in[0,1]\) and \(a+b=1\),
then
\begin{equation}\label{eq:boundary-f}
f_i(a)+f_j(b)=1.
\end{equation}
Indeed, if \(a,b\in(0,1)\), the source profile is unrationed, so the relabeled
profile is unrationed and $f_i(a)+f_j(b)\leq 1.$
For every small \(\varepsilon>0\), the source profile \((a+\varepsilon,b)\) is
rationed, so the relabeled profile is rationed and $f_i(a+\varepsilon)+f_j(b)>1.$ Letting \(\varepsilon\downarrow 0\) gives $f_i(a)+f_j(b)\geq 1.$ Thus equality holds. The endpoint cases \(a=0\) or \(b=0\) follow from
\(f_i(0)=0\) and \(f_i(1)=1\).

Second, if \(a,b\in(0,1]\) and \(a+b>1\), then proportional rationing and
\eqref{eq:allocation-relabeling} imply
\begin{equation}\label{eq:ration-f}
f_i\left(\frac{a}{a+b}\right)
=
\frac{f_i(a)}{f_i(a)+f_j(b)}.
\end{equation}

\begin{fact}\label{fact:f-common}
There is a common increasing bijection \(f:[0,1]\to[0,1]\) such that \(f_i=f\) for every \(i\), with \(f(z)+f(1-z)=1\) for every \(z\in[0,1]\).
\end{fact}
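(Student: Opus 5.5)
The plan is to extract everything from the two identities \eqref{eq:boundary-f} and \eqref{eq:ration-f}, which hold for any ordered pair of distinct buyers $i\neq j$. Since the claim $f_i=f_j$ concerns a single pair, I will treat $N=2$ and general $N$ together by fixing an arbitrary pair $i\neq j$ and showing $f_i=f_j$. Once all the $f_i$ coincide with a common $f$, the identity $f(z)+f(1-z)=1$ is just \eqref{eq:boundary-f} applied to any pair, with $a=z$ and $b=1-z$. That $f$ is an increasing bijection of $[0,1]$ is already Fact~\ref{fact:f-increasing}.

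\emph{Step 1: the midpoint.} I apply \eqref{eq:ration-f} with $a=b=1$, which is allowed since $a+b=2>1$. Using $f_i(1)=f_j(1)=1$, this gives
\[
f_i(1/2)=\frac{1}{1+1}=\tfrac12 .
\]
Swapping the roles of $i$ and $j$ gives $f_j(1/2)=\tfrac12$ as well.

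\emph{Step 2: agreement on the upper half.} For $a\in(1/2,1]$, I apply \eqref{eq:ration-f} with $b=a$, which is allowed since $2a>1$. This yields
\[
\tfrac12=f_i(1/2)=\frac{f_i(a)}{f_i(a)+f_j(a)} .
\]
Here $f_i(a)>0$ by strict monotonicity and $f_i(0)=0$. Clearing denominators gives $f_i(a)=f_j(a)$. Together with Step 1, this shows $f_i=f_j$ on $[1/2,1]$.

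\emph{Step 3: agreement on the lower half.} For $a\in[0,1/2)$, I use \eqref{eq:boundary-f} twice, once for the ordered pair $(i,j)$ and once for $(j,i)$:
\[
f_i(a)=1-f_j(1-a),\qquad f_j(a)=1-f_i(1-a).
\]
Since $1-a\in(1/2,1]$, Step 2 gives $f_i(1-a)=f_j(1-a)$, and therefore $f_i(a)=f_j(a)$. Hence $f_i=f_j$ on all of $[0,1]$. Because the pair was arbitrary, all $f_i$ equal a common $f$, and the symmetry identity follows as noted above.

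I expect no serious obstacle. The one point needing care is Step 2: the naive route, using only \eqref{eq:boundary-f}, identifies the $f_i$ only when $N\ge3$, and says nothing when $N=2$. The key idea is therefore to use the rationing identity \eqref{eq:ration-f} on the diagonal $a=b$, anchored at the midpoint value from Step 1. I also need to confirm that each invocation of \eqref{eq:ration-f} satisfies its hypotheses $a,b\in(0,1]$ and $a+b>1$, which holds in both steps.
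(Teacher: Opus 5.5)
Your proof is correct and follows essentially the same route as the paper. Both arguments anchor the rationing identity \eqref{eq:ration-f} at $a=b=1$, which gives $f_i(1/2)=1/2$ (equivalently, the paper's constant ratio equal to $1$), then evaluate \eqref{eq:ration-f} on the diagonal $a=b\in(1/2,1]$ to get $f_i=f_j$ on $[1/2,1]$, and finally extend to $[0,1/2)$ by applying \eqref{eq:boundary-f} to both orderings of the pair.
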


\begin{proof}
By \eqref{eq:boundary-f}, we have $f_i(z)+f_j(1-z)=1$ for every $z\in[0,1].$ Taking \(a=b=s/2\) in \eqref{eq:ration-f}, where \(s\in(1,2]\), gives
\[
f_i(1/2)
=
\frac{f_i(s/2)}{f_i(s/2)+f_j(s/2)}
\quad \implies
\quad
\frac{f_i(s/2)}{f_j(s/2)}
=
\frac{f_i(1/2)}{1-f_i(1/2)}.
\]
At \(s=2\), the left-hand side equals \(1\), because \(f_i(1)=f_j(1)=1\).
Therefore $f_i(z)=f_j(z)$ for every $z\in[1/2,1].$ Using $f_i(z)+f_j(1-z)=1$ and $f_j(z)+f_i(1-z)=1,$ this equality extends to all \(z\in[0,1]\).
\end{proof}

\begin{fact}\label{fact:f-identity}
\(f(z)=z\) for every \(z\in[0,1]\).
\end{fact}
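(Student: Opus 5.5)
The plan is to squeeze a multiplicative functional equation out of the rationing identity \eqref{eq:ration-f}, solve it to obtain a power function, and then pin down the exponent with the symmetry \(f(z)+f(1-z)=1\) from Fact~\ref{fact:f-common}. Throughout, \(f\) is the common strictly increasing, hence continuous, bijection of \([0,1]\) with \(f(0)=0\) and \(f(1)=1\).

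\emph{Step 1: a multiplicative equation.} Since \(f\) is common across buyers, \eqref{eq:ration-f} reads
\[
f\left(\frac{a}{a+b}\right)=\frac{f(a)}{f(a)+f(b)}\qquad\text{whenever } a,b\in(0,1],\ a+b>1.
\]
The left side depends on \((a,b)\) only through the ratio \(r=a/b\). Hence so does \(f(a)/f(b)\); write \(f(a)/f(b)=\psi(a/b)\). Taking \(b=1\), which is allowed for every \(a\in(0,1]\), and using \(f(1)=1\) gives \(\psi(r)=f(r)\) for \(r\in(0,1]\). Now take \(x\in(0,1]\) and \(y\in(0,1]\), and set \(a=xy\), \(b=y\), so that \(a/b=x\le1\). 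We obtain
\[
f(xy)=f(x)f(y)\qquad\text{whenever } x,y\in(0,1],\ y(1+x)>1 .
\]

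\emph{Step 2: solve it.} Put \(g(s):=-\log f(e^{-s})\) for \(s\ge0\). This \(g\) is continuous, strictly increasing, and satisfies \(g(0)=0\). The equation becomes \(g(s+t)=g(s)+g(t)\) whenever \(e^{-t}(1+e^{-s})>1\). This region contains a square \([0,\delta]^2\) for some small \(\delta>0\). A continuous solution of Cauchy's equation on such a neighbourhood is linear there, by the usual dyadic/rational argument combined with continuity, so \(g(s)=\rho s\) on \([0,\delta]\), i.e.\ \(f(x)=x^\rho\) near \(1\), with \(\rho>0\).

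To extend this to all of \((0,1]\), propagate downward. Given any \(x\) at which the power law already holds, every \(y\in(1/(1+x),1]\) is admissible in Step 1, and \(f(y)=y^\rho\) is known for \(y\) near \(1\). Hence \(f(xy)=(xy)^\rho\) on a left-neighbourhood of the current region. Each step multiplies the lower end of the region by a factor bounded away from \(1\) in terms of \(x\): starting from \(x=1\) we reach \((1/2,1]\), then \((1/3,1]\) and so on. A short induction, or a supremum/infimum argument on the set where \(f(x)=x^\rho\), therefore yields \(f(x)=x^\rho\) on all of \((0,1]\). I expect this step to be the main technical obstacle: the local Cauchy equation holds only on a restricted domain, and the domain restriction \(y>1/(1+x)\) becomes more stringent as \(x\) shrinks. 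I would handle it by taking the infimum \(x^*\) of the maximal interval \((x^*,1]\) on which the power law holds and showing that \(x^*>0\) leads to a contradiction, using continuity of \(f\) at \(x^*\) and admissible \(y\) slightly below \(1\).

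\emph{Step 3: fix the exponent.} Evaluating \(f(z)+f(1-z)=1\) at \(z=1/2\) gives \(2\cdot 2^{-\rho}=1\), so \(\rho=1\). Hence \(f(z)=z\) for every \(z\in[0,1]\), including \(z=0\) since \(f(0)=0\).
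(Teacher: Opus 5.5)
Your proof is correct. It shares its start and finish with the paper's proof but handles the middle step differently. Step 1 is the paper's own derivation: substitute $a=zy$, $b=y$ and normalize with $y=1$ to identify the ratio function with $f$. Step 3 is exactly how the paper pins down $\rho=1$.

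The difference lies in how the domain restriction $y(1+x)>1$ is removed. The paper removes it at the level of the functional equation. For arbitrary $z,y\in(0,1]$ it picks $m$ with $y^{1/m}(1+zy)>1$. Then each factor $y^{1/m}$ is admissible against every intermediate point $zy^{\ell/m}$, because $zy^{\ell/m}\ge zy$. Telescoping gives $f(zy)=f(z)f(y^{1/m})^m$, and with $z=1$ also $f(y)=f(y^{1/m})^m$. So $f$ is multiplicative on all of $(0,1]^2$, and the paper simply quotes the standard fact that continuous additive functions on a half-line are linear.

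You instead solve the restricted Cauchy equation locally, by a dyadic argument on $[0,\delta]^2$, and then extend the solution rather than the equation. Your infimum argument does this: if the power law holds on $[x^*,1]$ with $x^*>0$, take $x=x^*$ and $y\in(\max\{x^*,1/(1+x^*)\},1]$, which pushes the power law strictly below $x^*$.

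Both routes are elementary. The paper's telescoping trick is shorter and hands everything to a textbook result. Your route has to re-prove the local Cauchy step, but it makes the propagation mechanism explicit.

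One small inaccuracy: your informal description ``starting from $x=1$ we reach $(1/2,1]$'' is vacuous as stated, since $x=1$ only yields $f(y)=f(1)f(y)$. The progress comes from taking $x$ at the lower end of the region already covered. That is precisely what your infimum argument does, so the proof itself is unaffected.
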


\begin{proof}
With a common \(f\), equation \eqref{eq:ration-f} implies that, for all
\(a,b\in(0,1]\) with \(a+b>1\),
\[
\frac{
f\left(\frac{a}{a+b}\right)
}{
1-f\left(\frac{a}{a+b}\right)
}
=
\frac{f(a)}{f(b)}.
\]
Now take \(a=zy\) and \(b=y\), where \(z,y\in(0,1]\) and \(y(1+z)>1\). Then
\[
\frac{
f\left(\frac{z}{1+z}\right)
}{
1-f\left(\frac{z}{1+z}\right)
}
=
\frac{f(zy)}{f(y)}
\quad \implies \quad 
\frac{
f\left(\frac{z}{1+z}\right)
}{
1-f\left(\frac{z}{1+z}\right)
}
=
f(z).
\]
Therefore
\begin{equation}\label{eq:local-multiplicative-f}
f(zy)=f(z)f(y)
\end{equation}
whenever \(z,y\in(0,1]\) and \(y(1+z)>1\). This extends to all \(z,y\in(0,1]\). Fix
\(z,y\in(0,1]\). Choose \(m\) large enough that $y^{1/m}(1+zy)>1.$ Then, for each \(\ell=0,\dots,m-1\), we have $y^{1/m}\left(1+z y^{\ell/m}\right)>1,$ so \eqref{eq:local-multiplicative-f} gives $f\left(z y^{(\ell+1)/m}\right)
=
f\left(z y^{\ell/m}\right)f\left(y^{1/m}\right).$ Iterating gives $f(zy)=f(z)f(y^{1/m})^m.$ Applying the same argument with \(z=1\) gives $f(y)=f(y^{1/m})^m.$ Hence $f(zy)=f(z)f(y)$ for all $z,y\in(0,1].$ Thus, the function $g(t):=\log f(e^t)$ for $t\leq 0$ is continuous and additive: $g(s+t)=g(s)+g(t)$ for all $s,t\leq 0.$ Therefore there exists \(\rho>0\) such that $f(z)=z^\rho$ for every $z\in(0,1]$. Since \(f(z)+f(1-z)=1\) for every \(z\in[0,1]\), evaluating at \(z=1/2\)
gives $2(1/2)^\rho=1$, and so \(\rho=1\).
\end{proof}
Undoing the normalization then gives $\alpha_i(a_i)=\eta a_i$ for every $i$ and $a_i\in[0,e],$ where $\eta:=\frac{e'}{e},$ so
\begin{equation}\label{eq:allocation-scale}
x_i(\alpha(a),e')=\eta\, x_i(a,e)
\qquad
\text{for every }i\text{ and every }a\in[0,e]^N.
\end{equation}

We now use the convexity of \(P\). Fix buyer \(i\). Applying
\eqref{eq:raw-affine-sim} to $v\equiv 0$ and using
\eqref{eq:allocation-scale} gives
\[
-P(\eta x)
=
\kappa_i(0)\left(\tau_i(0)(x)-P(x)\right)
+
\lambda_i(0)
\qquad
\text{for every }x\in[0,e].
\]
At \(x=0\), we get \(\lambda_i(0)=0\). Therefore, for $c:=\frac{1}{\kappa_i(0)}>0,$ we have
\begin{equation}\label{eq:first-concave}
\tau_i(0)(x)=P(x)-cP(\eta x)
\qquad
\text{for every }x\in[0,e].
\end{equation}
Since \(\tau_i(0)\in\mathcal V\), the function $x\mapsto P(x)-cP(\eta x)$ is weakly increasing and weakly concave on \([0,e]\). Because \(\tau_i\) is onto, there exists \(\bar v_i\in\mathcal V\) such that $\tau_i(\bar v_i)=0.$ Applying \eqref{eq:raw-affine-sim} to \(\bar v_i\), again using
\eqref{eq:allocation-scale}, gives $\bar v_i(\eta x)-P(\eta x)
=
-\kappa_i(\bar v_i)P(x)
+
\lambda_i(\bar v_i)$ for every $x\in[0,e].$ At \(x=0\), we get \(\lambda_i(\bar v_i)=0\). Therefore, for $d:=\kappa_i(\bar v_i)>0,$ we have
\begin{equation}\label{eq:second-concave}
\bar v_i(\eta x)=P(\eta x)-dP(x)
\qquad
\text{for every }x\in[0,e].
\end{equation}
Since \(\bar v_i\in\mathcal V\), the function $x\mapsto P(\eta x)-dP(x)$ is weakly increasing and weakly concave on \([0,e]\).

We use the following fact.

\begin{fact}\label{fact:P-proportional}
Let \(P:\mathbb R_+\to\mathbb R_+\) be a regular price scheme.
Fix \(\eta>0\), \(K>0\). If there exist \(c,d>0\) such that both $x\mapsto P(x)-cP(\eta x)$ and $x\mapsto P(\eta x)-dP(x)$ are weakly increasing and weakly concave on \([0,K]\), then $P(\eta x)=dP(x)$ for every $x\in[0,K].$
\end{fact}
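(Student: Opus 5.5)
The proof combines the two hypotheses linearly so that the unknown function \(x\mapsto P(\eta x)\) cancels. This gives two inequalities for the product \(cd\): one from monotonicity and one from concavity. Together they pin down \(cd=1\), and that forces the claimed identity. Throughout, write \(Q(x):=P(\eta x)\), \(g:=P-cQ\) and \(h:=Q-dP\) on \([0,K]\), so that \(g\) and \(h\) are weakly increasing and weakly concave by assumption.

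\emph{Step 1: monotonicity gives \(cd\le1\).} Fix \(0\le x<y\le K\) and write \(\Delta P:=P(y)-P(x)\) and \(\Delta Q:=Q(y)-Q(x)\). Since \(g\) is weakly increasing, \(\Delta P\ge c\,\Delta Q\). Since \(h\) is weakly increasing, \(\Delta Q\ge d\,\Delta P\). Chaining the two (using \(c>0\)) gives
\[
\Delta P\ge c\,\Delta Q\ge cd\,\Delta P .
\]
Because \(P\) is strictly increasing, \(\Delta P>0\), so \(cd\le 1\).

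\emph{Step 2: concavity gives \(cd\ge1\).} The function \(g+c\,h\) is a positive combination of weakly concave functions, so it is weakly concave. Expanding,
\[
g+c\,h=P-cQ+cQ-cdP=(1-cd)P .
\]
If \(cd<1\), then \(P=(g+ch)/(1-cd)\) would be weakly concave on \([0,K]\). This contradicts strict convexity of \(P\) on the nondegenerate interval \([0,K]\), which requires \(K>0\). Hence \(cd\ge1\), and with Step 1, \(cd=1\).

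\emph{Step 3: conclusion.} With \(cd=1\), the chain in Step 1 reads \(\Delta P\ge c\,\Delta Q\ge \Delta P\), so all inequalities hold with equality. In particular \(\Delta Q=d\,\Delta P\) for all \(x<y\) in \([0,K]\). Equivalently, \(h=Q-dP\) is constant on \([0,K]\). Since \(P(0)=0\), we have \(h(0)=P(0)-dP(0)=0\). Therefore \(P(\eta x)=dP(x)\) for every \(x\in[0,K]\).

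I do not expect a real obstacle. The only point needing care is recognizing that the combination \(g+c\,h\), rather than \(g\) or \(h\) alone, eliminates \(Q\). The monotonicity chain then supplies the opposite inequality on \(cd\), which makes the two bounds meet.
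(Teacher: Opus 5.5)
Your proof is correct and has the same structure as the paper's: monotonicity of the two functions gives $cd\le 1$, concavity of the combination $g+c\,h=(1-cd)P$ together with strict convexity of $P$ gives $cd\ge 1$, and $cd=1$ then makes the inequality chain tight, so the difference is constant and equals its value $0$ at $x=0$. The paper runs the same steps on right derivatives of $P$, using positivity of $p$ and the monotonicity of one-sided derivatives of concave functions. You work directly with increments and function values, which is slightly more elementary and skips the final passage from a vanishing right derivative to constancy.
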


\begin{proof}
Let \(p(x)\) denote the right derivative of \(P\) at \(x>0\). Since \(P\) is
convex, \(p\) exists and is weakly increasing. Since \(P\) is strictly
increasing and convex, \(p(x)>0\) for every \(x>0\). Define
\[
F(x):=P(x)-cP(\eta x),
\qquad
G(x):=P(\eta x)-dP(x).
\]
By assumption, \(F\) and \(G\) are weakly increasing and weakly concave on
\([0,K]\). Hence their right derivatives are nonnegative on \((0,K)\). Thus
\[
F'_+(x)=p(x)-c\eta p(\eta x)\geq 0, \quad \quad
G'_+(x)=\eta p(\eta x)-d p(x)\geq 0.
\]
Combining these inequalities gives $p(x)\geq cd\,p(x).$ Since \(p(x)>0\), we get $cd\leq 1.$

Now use concavity. Since \(F\) and \(G\) are concave, their right derivatives
are weakly decreasing. Hence, for any \(0<a<b<K\), $F'_+(b)\leq F'_+(a)$ implies $p(b)-p(a)\leq c\eta\bigl(p(\eta b)-p(\eta a)\bigr).$ Similarly, $G'_+(b)\leq G'_+(a)$ implies $\eta\bigl(p(\eta b)-p(\eta a)\bigr)\leq d\bigl(p(b)-p(a)\bigr).$ Combining these two inequalities yields $p(b)-p(a)\leq cd\bigl(p(b)-p(a)\bigr).$ Because \(P\) is strictly convex, \(p\) is not constant on \((0,K)\). Hence
there exist \(0<a<b<K\) such that \(p(b)>p(a)\), and so $cd\geq 1.$ Together with \(cd\leq 1\), this gives $cd=1.$

Returning to the derivative inequalities, we have $p(x)\geq c\eta p(\eta x)$ and, since \(d=1/c\), we get $\eta p(\eta x)\geq \frac{1}{c}p(x).$ Therefore $p(x)=c\eta p(\eta x)$ for every $x\in(0,K)$. Thus $F'_+(x)=0$ for every $x\in(0,K).$ Since \(F\) is concave, this implies that \(F\) is constant on \([0,K]\).
Because $F(0)=P(0)-cP(0)=0,$ we get $P(x)=cP(\eta x)$ for every $x\in[0,K].$ Since \(cd=1\), this is equivalent to $P(\eta x)=dP(x)$ for every $x\in[0,K].$
\end{proof}

Applying this fact with \(K=e\), we obtain a constant \(C_{\eta,e}>0\) such that $P(\eta x)=C_{\eta,e}\,P(x)$ for every $x\in[0,e].$ Because \(e>0\) was arbitrary, for every \(\eta>0\) there exists a constant
\(C(\eta)>0\) such that
\begin{equation}\label{eq:homogeneous-P}
P(\eta x)=C(\eta)\,P(x)
\qquad
\text{for every }x\geq 0.
\end{equation}
Indeed, the constants obtained from different values of \(e\) agree on
overlapping intervals because \(P(x)>0\) for every \(x>0\). Setting \(x=1\) in \eqref{eq:homogeneous-P} gives $C(\eta)=\frac{P(\eta)}{P(1)}$, and therefore, for every \(x,\eta>0\),
\[
P(\eta x)=\frac{P(\eta)P(x)}{P(1)}.
\]
Define $Q(x):=\frac{P(x)}{P(1)}.$ Then \(Q:\mathbb R_{++}\to\mathbb R_{++}\) is continuous, strictly increasing,
and satisfies $Q(\eta x)=Q(\eta)Q(x)$ for every $x,\eta>0.$ Thus \(Q\) is a continuous multiplicative function on \(\mathbb R_{++}\). Hence
there exists \(\rho\in\mathbb R\) such that $Q(x)=x^\rho.$ Since \(Q\) is strictly increasing, \(\rho>0\). Therefore $P(x)=P(1)\,x^\rho.$ Writing \(A:=P(1)>0\), we get $P(x)=Ax^\rho$
for some \(A>0\) and \(\rho>0\). That is, \(P\) must be
isoelastic. Strict convexity further requires \(\rho>1\). This establishes necessity.

For sufficiency, suppose \(P(x)=Ax^\rho\) for some \(A>0\) and
\(\rho>1\). Fix \(e,e'>0\) and write \(\eta:=e'/e\). For each buyer
\(i\), define $\alpha_i(a_i):=\eta a_i,$ $\tau_i(v_i)(x):=\eta^{-\rho}v_i(\eta x),$ $\kappa_i(v_i):=\eta^\rho,$ and $\lambda_i(v_i):=0.$ The action map \(\alpha_i\) is a bijection from \([0,e]\) to
\([0,e']\). The type map \(\tau_i\) preserves weak monotonicity,
weak concavity, and the normalization at zero, and is a bijection
of \(\mathcal V\) with inverse $\tau_i^{-1}(w_i)(x)=\eta^\rho w_i(x/\eta).$ For every action profile \(a\in[0,e]^N\), proportional rationing gives $x_i(\alpha(a),e')=\eta x_i(a,e).$
Consequently, writing \(x:=x_i(a,e)\), we have
\[
U_i^{X^{\mathrm{out}}_{e'}}[v_i,\alpha(a)]
=v_i(\eta x)-A(\eta x)^\rho
=\eta^\rho\left[\eta^{-\rho}v_i(\eta x)-Ax^\rho\right]
=\eta^\rho U_i^{X^{\mathrm{out}}_e}[\tau_i(v_i),a].
\]
Thus, the affine characterization of strategic analogy is satisfied.


\end{document}